\newcommand{\fsync}{$\mathcal {FSYNC}$}
\newcommand{\ssync}{$\mathcal {SSYNC}$}
\newcommand{\async}{$\mathcal {ASYNC}$}
\title{Asynchronous Gathering of Opaque Robots with Mobility Faults} 
\titlerunning{Asynchronous Gathering of Opaque Robots with Mobility Faults} %TODO optional, please use if title is longer than one line
 \author{Subhajit Pramanick}{University of Wroclaw, Poland \and \url{https://subhajitpk.github.io/webpage/} }{suvo.iitg17@gmail.com}{orcid logo
https://orcid.org/0000-0002-4799-7720}{}%TODO mandatory, please use full name; only 1 author per \author macro; first two parameters are mandatory, other parameters can be empty. Please provide at least the name of the affiliation and the country. The full address is optional. Use additional curly braces to indicate the correct name splitting when the last name consists of multiple name parts.
\author{Saswata Jana}{Indian Institute of Technology Guwahati, India \and \url{https://sites.google.com/view/saswatajana/home} }{saswatajana@iitg.ac.in}{orcid logo
https://orcid.org/0000-0003-3238-8233}{}
\author{Partha Sarathi Mandal}{Indian Institute of Technology Guwahati, India \and \url{https://fac.iitg.ac.in/psm/} }{psm@iitg.ac.in}{orcid logo
https://orcid.org/0000-0002-8632-5767}{}
\author{Gokarna Sharma}{Kent State University, USA \and \url{https://www.cs.kent.edu/~sharma/} }{gsharma2@kent.edu}{orcid logo
https://orcid.org/0000-0002-4930-4609}{}
\authorrunning{S. Pramanick {\it et al.}} %TODO mandatory. First: Use abbreviated first/middle names. Second (only in severe cases): Use first author plus 'et al.'
\keywords{Swarm robotics,  Mobile agents, Gathering, Fault tolerance, Luminous robots} %TODO mandatory; please add comma-separated list of keywords
\begin{document}

\maketitle

\begin{abstract}
We consider the fundamental benchmarking problem of gathering in an $(N,f)$-fault system consisting of $N$ robots, of which at most $f$ might fail at any execution, under asynchrony. 
Two seminal results established impossibility of a solution in the oblivious robot ($\mathcal{OBLOT}$)  model in a $(2,0)$-fault system under semi-synchrony (applies to asynchrony) and in a $(3,1)$-Byzantine fault system under asynchrony. Recently, a breakthrough result circumvented the first impossibility result by giving a deterministic algorithm in a  $(2,0)$-fault system under asynchrony in the luminous robot ($\mathcal{LUMI}$) model using 2-colored lights, which is optimal w.r.t. the number of colors. However, a breakthrough result established impossibility of gathering  in a $(2,1)$-crash system in the $\mathcal{LUMI}$ model under semi-synchrony, even when infinitely many colors and coordinate axis agreement  are available.   In this paper, we consider a {\em mobility fault} model in which a robot crash only impacts it mobility but not the operation of the light. This mobility fault model naturally extends the crash fault definition from the $\mathcal{OBLOT}$ model to the $\mathcal{LUMI}$ model, i.e., since $\mathcal{OBLOT}$ is  equivalent to $\mathcal{LUMI}$ when at most 1 color is available for the light, crash fault in $\mathcal{OBLOT}$ is essentially the mobility fault.   

We establish four results under asynchrony in $\mathcal{LUMI}$ with the mobility fault model. We show that it is impossible to solve gathering in a $(2,1)$-mobility fault system using 2-colored lights, and then give a solution using 3-colored lights, which is optimal w.r.t. the number of colors. We then consider an $(N,f)$-mobility fault system, $f<N$, both $N,f$ not known, and give two deterministic algorithms that exhibit a nice time-color trade-off: The first with time $O(N)$ using 7-colored lights and the second with time $O(\max\{\ell,f\})$ using 26-colored lights, where $\ell< N$ is the number of distinct convex layers of robot positions in the initial configuration. 
Interestingly, for $l, f = O(1)$, our result is optimal.
Our algorithms for an $(N,f)$-mobility fault system are the first to be analysed time complexity, can withstand obstructed visibility (opaque robot model) and asynchronous scheduling.
\end{abstract}

\section{Introduction}
\label{sec:Introduction}
The gathering problem is one of the fundamental benchmarking tasks and has been intensively studied in the distributed computing literature (see \cite{Flocchini19Book} and the references therein). 
The gathering problem asks $N$ robots, working in a 2-dimensional Euclidean plane,  to reach a single point not known beforehand, in finite time. The central question is: {\em Can gathering be solved, and, if so, under what conditions?} 
Suzuki and Yamashita \cite{SuzukiY99} established a foundational impossibility result: Under semi-synchronous ({\ssync}) setting, no deterministic algorithm can solve gathering, even for two robots, in the barebone {\em oblivious robot} ($\mathcal{OBLOT}$) model (i.e.,  the original model not augmented with any assumptions, such as chirality). This result applies directly to asynchronous ({\async}) setting as {\async} is weaker than {\ssync}. In the barebone $\mathcal{OBLOT}$ model, robots are dimensionless points, autonomous, anonymous, indistinguishable, disoriented, silent, oblivious with unlimited visibility, and execute the same algorithm following {\em Look-Compute-Move} (LCM) cycles.  In {\ssync}, time is discretized into rounds, and an arbitrary yet non-empty subset of the robots perform LCM cycles in each round. In fully-synchronous ({\fsync}) setting, all robots perform LCM cycles in each round.
In {\async}, robots perform their LCM cycles of arbitrary but finite duration, starting from and ending at arbitrary times. An {\em epoch} captures the time duration for all $N$ robots to perform at least one LCM cycle (defined formally later). We denote by time a round in {\fsync} and an epoch in {\ssync} and {\async}.   

Subsequently, researchers turned their attention on ways to circumvent this impossibility. They proposed to endow each robot with an externally visible {\em light}, called the {\em luminous robot} ($\mathcal{LUMI}$) model, that allows robots to communicate with each other by emitting a fixed number of colors of lights visible to others. 
The lights are persistent in the sense that the color is not
erased at the end of a cycle. Except for the availability of lights, all properties of the $\mathcal{LUMI}$ model are the same as in the $\mathcal{OBLOT}$ model. 
Let $\psi$-$\mathcal{LUMI}$ be the $\mathcal{LUMI}$ model in which a light can emit $\psi$ different colors. 
$\mathcal{OBLOT}$ and 1-$\mathcal{LUMI}$  models are equivalent (i.e., $\mathcal{OBLOT}\equiv$ 1-$\mathcal{LUMI}$). 
Therefore, any algorithm in the $\mathcal{LUMI}$ model must use at least 2 colors, i.e., 2-$\mathcal{LUMI}$. The following breakthrough result is achieved by Heriban, D{\'{e}}fago, and Tixeuil \cite{HeribanDT18} under {\async} circumventing the impossibility \cite{SuzukiY99} under {\ssync} in the $\mathcal{OBLOT}$ model: There is a deterministic algorithm that solves gathering of two robots in the 2-$\mathcal{LUMI}$ model (which is optimal w.r.t. the number of colors).

Since swarms of mobile robots are envisioned for applications such as rescue, exploration, and surveillance in hazardous environments, it is natural that robots may experience faults during execution in a team of arbitrary size $N \geq 2$. Faults are typically classified into three types: {\em transient} (corrupting a robot’s current state), {\em crash} (causing a robot to halt unexpectedly), and {\em Byzantine} (causing arbitrary or even malicious behavior).
Formally, an {\em $(N,f)$-fault} system consists of $N$ robots, of which at most $f$ may fail in any execution. An {\em $(N,f)$-crash}, {\em $(N,f)$-Byzantine}, or {\em $(N,f)$-transient} system is an $(N,f)$-fault system where faults are restricted to crash, Byzantine, or transient failures, respectively. A fault-tolerant algorithm for a task in an $(N,f)$-fault system must guarantee that, as long as at most $f$ robots fail, all non-faulty robots achieve the task, regardless of faulty behavior. The special case $(N,0)$ denotes a system without faults ($f=0$).
Agmon and Peleg \cite{doi:10.1137/050645221} proved that in the barebones $\mathcal{OBLOT}$ model, no deterministic algorithm can solve gathering under {\async} in a $(3,1)$-Byzantine system. On the positive side, they also provided two deterministic algorithms: one under {\ssync} in an $(N,1)$-crash system, and another under {\fsync} in an $(N,f)$-Byzantine system for $N \geq 3f+1$. 

In this paper, our motivation is to investigate the solvability of the gathering problem under a certain fault model. To frame our contribution, we recall two seminal impossibility results in this domain. The first, due to Suzuki and Yamashita \cite{SuzukiY99}, establishes the impossibility of gathering in a $(2,0)$-fault system under the {\ssync} scheduler and the $\mathcal{OBLOT}$ model. The second, presented by Bramas {\it et al.} \cite{BRAMAS202363}, demonstrates that no deterministic algorithm can solve gathering in a $(2,1)$-crash (i.e., both mobility and the light of a robot are affected by the fault) system under {\ssync} and the $\infty$-$\mathcal{LUMI}$ model even with agreement on coordinate axes. 
Inspired by the approach of Heriban, D{'{e}}fago, and Tixeuil \cite{HeribanDT18}, who circumvented Suzuki and Yamashita’s impossibility by considering lights on robots (i.e., the $\mathcal{LUMI}$ model), we aim to overcome the impossibility identified by Bramas {\it et al.} by proposing an additional assumption on crash faults, termed as the \textit{mobility fault model}.
In the mobility fault model, the failure only impacts the mobility of a robot meaning that after experiencing fault at a location, that robot remains stationary at that location forever thereafter, but the light continues to work correctly as intended. In other words, the fault only impacts mobility (making it stationary) but not light. This mobility fault model was considered recently by \cite{POUDEL2021116,PramanickJM25} where they obtained solutions for problems which were otherwise difficult to solve in the  $\mathcal{OBLOT}$ model or in the $\mathcal{LUMI}$ model if the operations of lights are also impacted by the fault. Let $(N,f)$-$mobility$ system be an $(N,f)$-fault system, where $f$ robots experience mobility fault.
Note that,  in the $\mathcal{OBLOT}$ model ($\equiv$ 1-$\mathcal{LUMI}$ model), an $(N,f)$-mobility system  becomes equivalent to an $(N,f)$-crash system.
We raise two questions. 

\begin{itemize}
\item {\bf Q1.} {\em What is the  landscape of impossibility results in $\psi$-$\mathcal{LUMI}$ $(N,f)$-fault model, $\psi\geq 2$?} 
%Notice that the impossibility results due to Suzuki and Yamashita \cite{SuzukiY99} and Agmon and Peleg \cite{doi:10.1137/050645221} were established in the $\mathcal{OBLOT}$ model. Moreover, the impossibility presented by Bramas {\it et al.} is in $\psi$-$\mathcal{LUMI}$ $(N,f)$-crash fault model.

\item {\bf Q2.} {\em What is the landscape of algorithmic solutions in an $(N,f)$-fault  system, $f\geq 1$, under {\async}?} 
Notice that the {\async} algorithm in \cite{HeribanDT18} considers no faults and the algorithms in \cite{doi:10.1137/050645221} for  $(N,f)$-fault system, $f\geq 1$, are designed in {\fsync} and {\ssync}. 

\item Additionally, we ask: {\em Can fast algorithms for gathering be designed?} 
Time complexity is a crucial performance metric since it will have impacts on various aspects: number of moves, energy efficiency, how quickly system transitions to a prescribed configuration, etc. 
In the literature, time complexity is largely overlooked except proving finite runtime. 
\end{itemize}

\begin{table}[h]
\centering
\resizebox{\textwidth}{!}{
\scriptsize
\begin{tabular}{c|c|c|c|c|c|c}
%cccccc
%{P{2cm}P{2cm}P{5.0cm}P{3.5cm}}
\toprule

{\bf Paper} &   {\bf Synchrony} & {\bf System} & {\bf Gathering} & {\bf Opacity} & {\bf Model} & {\bf Time}\\
\toprule
\cite{SuzukiY99} & {\ssync} & $(2,0)$-fault & Impossible & -- & $\mathcal{OBLOT}$ & -- \\

\cite{doi:10.1137/050645221} & {\async} & $(3,1)$-Byzantine & Impossible & No & $\mathcal{OBLOT}$ & --\\
\cite{BRAMAS202363} & {\ssync} & $(2,1)$-crash & Impossible & -- & $\infty$-$\mathcal{LUMI}$ & --\\
{\bf This} & {\async} & $(2,1)$-mobility & Impossible & -- & 2-$\mathcal{LUMI}$ & --\\
\toprule
\cite{HeribanDT18} & {\async} & $(2,0)$-fault & Possible & -- & 2-$\mathcal{LUMI}$ & -- \\
{\bf This} & {\async} & $(2,1)$-mobility & Possible & -- & 3-$\mathcal{LUMI}$ & --\\
\toprule
\cite{doi:10.1137/050645221} & {\ssync} & $(N,1)$-crash & Possible & No & $\mathcal{OBLOT}$ & --\\
\cite{doi:10.1137/050645221}$^{*}$ & {\fsync} & $(N,f)$-Byzantine & Possible & No & $\mathcal{OBLOT}$ & --\\
{\bf This}$^{\dag}$ & {\async} & $(N,f)$-mobility & Possible & Yes & 7-$\mathcal{LUMI}$ & $O(N)$\\
{\bf This} & {\async} & $(N,f)$-mobility & Possible & Yes & 26-$\mathcal{LUMI}$ & $O(\max\{\ell,f\})$\\
\bottomrule
\end{tabular}}
\caption{The four contributions in this paper  and known results. $^{*}N\geq 3f+1$, $N$ is known. $^{\dag}f< N$, and both $N,f$ are not known. $\ell<N$ is the number of distinct convex polygon layers of robots positions in the initial configuration. $\mathcal{OBLOT}\equiv$ 1-$\mathcal{LUMI}$. `--' in time denotes no time bound except finite time. `--' in obstruction means that in a 2 robot system, robots see each other. }
\label{table:comparative}

\end{table}

We observe an interesting feature of the mobility fault model. Suppose a robot, in an LCM cycle, is activated with color $C_1$ and identifies a characteristic $V_1$ of its own (e.g., an interior robot of the convex hull formed by all visible robots) based on its view (i.e., the positions and colors of the visible robots). It then decides to adopt a color $C_2$ to perform a movement toward a target point. If the movement succeeds, the robot is expected, in its next activation, to acquire a new characteristic $V_2$ (e.g., becoming a corner robot of a convex hull). However, under a mobility fault, the robot retains color $C_2$ but still identifies itself in $V_1$, thereby revealing the failure. Since its light remains functional, it can switch to a designated color, say $C_3$, to signal the fault. 
Sometimes, this is achieved with the help of the color of some other visible robot using a finite sequence of color transitions.
We elaborate on this mechanism in our algorithms and systematically exploit it whenever required.

\vspace{1mm}
\noindent{\bf Contributions.}
Our contributions (total four) in this paper are listed in Table \ref{table:comparative}. The table also shows how they compare with the best previously known results.
Particularly,
we establish one impossibility result in a $(2,1)$-mobility system in the 2-$\mathcal{LUMI}$ model under {\async}.  We then give a deterministic algorithm in a $(2,1)$-mobility system in the 3-$\mathcal{LUMI}$ model under {\async}, which is optimal w.r.t. the number of colors.
These two results provide new insights on the previous impossibility (no solution in a $(2, 1)$-crash system under {\ssync} $\psi$-$\mathcal{LUMI}$ model \cite{BRAMAS202363}) and possibility results (an algorithm in a $(2,0$)-fault system under {\async} 2-$\mathcal{LUMI}$ model \cite{HeribanDT18}).
%As there is a solution in a $(2,0$)-fault system under 2-$\mathcal{LUMI}$ \cite{HeribanDT18} and no solution in a $(2, 1)$-crash system under $\psi$-$\mathcal{LUMI}$ model, for any $\psi \geq 2$.
We then generalize the problem to an $(N,f)$-mobility system, $f<N$, both $f,N$ not known, and give two deterministic algorithms under {\async} that provide an interesting time-color trade-off.

Furthermore, our algorithms in an $(N,f)$-mobility system can handle {\em obstructed visibility} (opaque robot model) -- two robots do not see each other if a third robot is placed on a line segment connecting their positions (i.e., collinearity).  The results of Agmon and Peleg \cite{doi:10.1137/050645221} in a $(N,f)$-fault system, $N>2$, did not consider obstruction and hence knowledge of $N$ was always available to robots from their view. Under obstructed visibility, knowledge of $N$ cannot be obtained for the view. For $N=2$, obstruction plays no role as two robots always see each other.
Our technique also highlights the power of identifying mobility faults in the $(N,f)$-mobility system.
To the best of our knowledge, our time complexity analysis is the first for the gathering problem.

\vspace{1mm}
\noindent {\bf Challenges.} To gather in a $(N,f)$-mobility system, $f < N$, in {\async} $\mathcal{LUMI}$ model that we consider in this paper, there are three major challenges to deal with: obstructed visibility, asynchrony and (mobility) faults.
Obstructed visibility restricts robots from having a global view of the system. 
As a result, coordinating robots toward a common objective becomes difficult, especially when $N$ is not known. 
This was not a problem (and concern) in the $\mathcal{OBLOT}$ model as robots were assumed to be transparent (no obstructed visibility) and hence they can always extract the number of distinct positions (or $N$) from their snapshots.
A small movement cannot always eliminate collinearities because of the robots’ disorientation, and such movements may even introduce new collinearities among robots that were previously non-collinear.
Under {\async}, robots may rely on outdated and potentially erroneous data from their most recent snapshots, which can lead them to compute incorrect destinations.
Moreover, even if initially no three robots are collinear, achieving gathering under ASYNC and fault tolerance is highly non-trivial.
Without axis agreement, identifying an invariant to select a unique gathering point is challenging under {\async}, since some robots might become faulty en route to the selected point and possibly obstruct the view of non-faulty robots.
Distinguishing faulty ones among them is also difficult, even with colors.
These issues, unexplored in prior work,  demand new techniques and synchronization mechanisms under {\async}, which were not a concern under {\fsync}/{\ssync}.

%\vspace{1mm}
\noindent {\bf Our Techniques.}
We overcome the major challenges by employing the concept of  {\em convex polygonal layers} (Fig. \ref{fig:layers}). 
At a high level, the outermost polygonal layer consists of robots lying on the perimeter of the convex hull of the entire configuration. Subsequent layers are formed recursively by computing the convex hull after disregarding the robots on all outer layers, resulting in a sequence of disjoint convex layers.
The globally defined layers may not coincide with the ones constructed by a robot with its view (i.e., the colors and the positions of the visible robots including itself). 
Intuitively, this hierarchical structure enables us to incrementally bring all robots on inner layers to the outermost one, starting from the second outermost. 
The idea of layers may appear straightforward w.r.t. global view, but there are two major challenges. First, robots can not perceive the layers completely due to obstructed visibility. Second, the layers themselves are dynamic (the initial layers will not be intact) i.e., under {\async}, some robots on a layer may already be moving while others on the same layer have not yet been activated, or become activated during the movement. Coordinating such robots so that the merging of layers succeeds under {\async} is non-trivial, particularly when we aim for a time bound and accounting for mobility faults at any moment.
We arrive at a point where all non-faulty robots in the interior are now positioned (or gathered) at the corners of the outermost convex layer. 
In contrast, faulty robots remain in the interior. 
In our technique, we manage to exploit the view of a faulty robot using a sequence of color transitions, where each color indicates a location in the (local) convex hull. 

Each non-faulty robot at the corners of the outermost convex layer is guided to a region, called the \emph{visible area}, where it remains a corner and can see all distinct positions of the faulty robots.
The non-faulty robots then gather either at the {\em center of gravity} (CG) of the convex hull formed by the faulty robots or at a corner.
Due to {\async} and the possibility that a non-faulty robot may become faulty at any time, gathering at the CG cannot be guaranteed. Instead, the non-faulty robots alternate between the corners of the outer layer and the new CG, and this process continues until no non-faulty robots remain on the corners of the outermost layer.
However, each to-and-fro movement reduces the number of non-faulty corner robots, yielding a sequential algorithm with $O(N)$ runtime using only $7$ colors.

By employing additional $19$ colors, and more stringent geometric constraints, we devise another novel technique that moves all non-faulty robots from an inner layer to the outer layer in just $O(1)$ epochs.
Consequently, the merging time reduces to $O(\ell)$, in contrast to the previous $O(N)$ time algorithm, where $\ell$ is the number of layers in the initial configuration.
%This yields an $O(l)$-epochs merging process, compared to $O(N)$ in the previous algorithm  
%we are able to move all the non-faulty robots from an inner layer to the outer one in just $O(1)$ epochs, which is not the case of 7-color algorithm.
%This way, the stage where all non-faulty interior robots are gathered at the corners of the outermost layer can be done in $O(\ell)$ epochs more efficiently, where $\ell$ is the number of layers in the initial configuration. 
In addition, instead of to-and-fro movement in the final gathering stage, we exploit properties of the faulty robots and employ extra colors to place each non-faulty robot directly at the CG or at the corners of the outer layer after visiting the CG within $O(f)$ epochs.
Moreover, we handle the non-trivial case of an initial linear configuration by transforming it into a non-linear one through sequentially moving the non-faulty robots in $O(f)$ epochs.
%coordination rules that prevent any two robots from the same layer from waiting on each other. 
% This enabled us to improve time complexity to  %higher degrees of parallelism in movement, improving efficiency in terms of time to 
% $O(\max\{\ell,f\})$ with total 26 colors. 
% The proposed algorithm achieves $O(1)$ time complexity when the number of layers ($l$) and the number of faulty robots ($f$) are both constants.
The overall approach achieves an improved time complexity of $O(\max\{\ell,f\})$, which is optimal when both the number of layers ($l$) and faulty robots ($f$) are constants.
% In other words, the time bound is independent on $N$ and dependent on parameters $\ell$ and $f$. Since $\ell<N,f<N$, the use of extra colors help to achieve parameter dependent runtime  compared to the $O(N)$ runtime of the 7-color algorithm which is always dependent on the total number of robots, $N$. 
% The algorithms show a nice time-color trade-off.  

\vspace{1mm}
\noindent{\bf Related Work.} 
Gathering is one of the fundamental benchmarking problems \cite{SuzukiY99} in the field of mobile robots. 
The problem has been intensively studied under different models and various assumptions \cite{doi:10.1137/050645221,6681603,DBLP:journals/dc/CiceroneSN18,doi:10.1137/100796534,FLOCCHINI2005147,10.1007/978-3-540-75142-7_24,DBLP:conf/sss/NakaiSW21}. The literature on the problem of gathering is quite voluminous. 
In this paper, we focus on the barebone models with no assumption that are relevant to three key aspects: the asynchronous scheduler, opacity (robots obstructing visibility), and fault-tolerance.
However the literature has different assumptions.
An important common thread in most of the gathering literature that we discussed below is the assumption of a transparent robot model.
Agmon and Peleg \cite{doi:10.1137/050645221} have proposed an algorithm to gather robots with at most one faulty robot under {\ssync}.
Later, Cohen and Peleg \cite{10.1007/978-3-540-30140-0_22} solved the convergence problem (a subproblem of gathering where the point of gathering is the center of gravity of the configuration) under {\async} for an $(N,f)$-crash fault model.
D{\'e}fago {\it et al.} \cite{10.1007/11864219_4} studied the deterministic gathering feasibility under crash and byzantine faults and a larger set of schedulers, such as bounded schedulers.
Bouzid {\it et al.} \cite{6681603} considered gathering in the presence of an arbitrary number of crash faults under {\ssync} model with the assumption of common orientation and strong multiplicity detection.
Later, Bramas and Tixeuil \cite{10.1007/978-3-319-25258-2_22} proposed an algorithm tolerating an arbitrary number of crash faults, which removed the assumption of common orientation. 
Under the {\ssync} model, Bhagat and Mukhopadyaya \cite{10.1145/3007748.3007781} solved the gathering problem with crash-faulty robots by revoking the assumption of orientation and strong multiplicity detection. 

Bhagat and Mukhopadhyaya \cite{10.1007/978-3-319-53007-9_4} solved gathering for $N \geq 5$ robots having $O(1)$ bit of persistent memory with an additional requirement that the maximum distance traversed by the robots needs to be minimized. 
Cicerone {\it et al.} \cite{DBLP:journals/dc/CiceroneSN18} proposed two algorithms for the gathering problem under {\async} w.r.t. optimizing two objectives, the first one to minimize the overall travelled distance performed by all robots and the second to minimize the maximum traveled distance performed by a single robot. 
Pattanayak {\it et al.} \cite{PATTANAYAK2019145} focused on the problem of gathering transparent robots on the Euclidean plane in the presence of crash faults. They studied the problem under {\async} but with the assumption of instantaneous computation. 
Bramas {\it et al.} \cite{10.1007/978-3-030-64348-5_4} introduced another line of research in the gathering problem, called stand-up indulgent rendezvous, which corresponds to $(2,1)$-crash fault model, where the non-faulty robot needs to gather at the location of the faulty one. They proved impossibility under {\ssync} and proposed an algorithm that works under {\fsync}. This work is later extended to the stand-up indulgent gathering by Bramas {\it et al.} \cite{BRAMAS202363}, where multiple robots may crash at the same location, and the remaining non-faulty robots must gather at that crash site. The proposed algorithm works under {\fsync} model. 
Nakai {\it et al.} \cite{DBLP:conf/sss/NakaiSW21} considered gathering under {\async} and solved in the 3-$\mathcal{LUMI}$ model.
Recently, Terai {\it et al.} \cite{TERAI2023241} studied the power of $\mathcal{LUMI}$ model in gathering under {\ssync} with different computational capabilities of the robots such as set-view (where robots can recognize the sets of colors robots have in the same locations) vs the arbitrary-view (where robots can recognize arbitrary colors robots have in the same locations). However, the robot model considered in this paper is still transparent.

The problem of gathering is also examined under the obstructed visibility model \cite{10.1007/978-3-319-11764-5_11} (i.e., the robots are opaque). Bhagat {\it et al.} \cite{BHAGAT201650} presented a deterministic algorithm for gathering opaque robots in the presence of crash faults, however, with an extra assumption of one-axis agreement. 
Okumura {\it et al.} \cite{OKUMURA2023114198} solved the rendezvous problem in the 3-$\mathcal{LUMI}$ model under {\async}.
Another closely related line of literature is the mobility fault in the $\mathcal{LUMI}$ model, which is primarily considered to circumvent the difficulties in solving problems in the $\mathcal{OBLOT}$ model.
A widely popular fundamental problem, known as the \emph{mutual visibility problem}, is well-inspected under {\async} setting in $(N, f)$-mobility system with the assumption of one axis agreement \cite{POUDEL2021116,PRAMANICK2024114723} and later without any axis agreement \cite{PramanickJM25}.

\vspace{1mm}
\noindent{\bf Roadmap.} We discuss model details in Section \ref{sec:model}. We present our impossibility result in a $(2,1)$-mobility system in Section \ref{sec:impossibility} and give a solution in Section \ref{subsec:opt}. We then present an $O(N)$-time algorithm for an $(N,f)$-mobility system in Section \ref{sec:Nalgorithm} and $O(\max\{\ell,f\})$-time algorithm in Section \ref{sec:lfalgorithm}. Finally, we conclude in Section \ref{sec:conclusion} with a short discussion.

\section{Model and Preliminaries}
\label{sec:model}
In this section, we highlight the model considered and some preliminary concepts that are going to be useful later in this paper. 

\vspace{1mm}
\noindent{\bf Robot Model.} 
We consider a system of $N$ robots that work following the $\psi$-$\mathcal{LUMI}$ model which extends the $\mathcal{OBLOT}$ model by endowing each robot with an externally visible light that emit $\psi$ number of colors  (one at a time). The colors allow robots to communicate/coordinate with each other by taking actions based on observed colors. Except the availability of lights,  $\mathcal{LUMI}$ and $\mathcal{OBLOT}$ models are equivalent, in fact, $\mathcal{OBLOT}\equiv$ 1-$\mathcal{LUMI}$.   
The robots are points, autonomous, anonymous, homogeneous, disoriented, silent (except communication through lights), have unlimited visibility, and run the same algorithm. The robots are deployed arbitrarily initially in the 2-dimensional  Euclidean plane. 
A robot is typically denoted by $r_i$, or simply by $r$ when the context is clear. 
With a slight abuse of notation, we
use $r$ to denote both the robot and its position on the plane.
Each robot is equipped with its own local coordinate system, whose origin coincides with its current location. The robots do not share a common coordinate system, and hence, disoriented -- no agreement on the orientation or direction of the axes. 
As a result, two robots may perceive the same point differently in their respective coordinate systems. Moreover, we assume \emph{obstructed visibility}, wherein two robots can see each other if and only if no other robot lies on the line segment joining them. Such a robot is known as an \emph{opaque} robot.  This prevents robots from knowing $N$ (the total number of robots or distinct positions on the plane).
%Every robot has a persistent light attached to it that displays a color from a fixed, finite set of predefined colors.
$f<N$, $f$ not known, robots may experience mobility faults,  after which they become permanently immobile, however, the light  
%Despite losing the ability to move, a faulty robot retains a functional light and can 
continues work correctly (i.e.,  change its color accurately, etc.). A robot may encounter such a fault at any point during the execution. Our mobility fault model is equivalent to crash faults if defined in the $\mathcal{OBLOT}$ model. 

\vspace{1mm}
\noindent {\bf Configuration and View of the Robots.}
At any time $t \geq 0$, the configuration 
$C_t = \{ (r_1^t, r_1^t.color), (r_2^t, r_2^t.color), \ldots, (r_N^t, r_N^t.color) \}$ represents the set of tuples, where each tuple consists of a robot's position $r_i^t$ and the color of its light $r_i^t.color$ at time $t$.
We denote the convex hull of all robots at time $t$ by $\mathcal{CH}$.
For a robot $r$, we define $C_t(r) \subseteq C_t$ as the configuration consisting of all robots that are visible to $r$ at $t$, and we denote the convex hull of these visible robots by $\mathcal{CH}_r$.
%Note that $C_t(r) \subseteq C_t$.
When the context is clear, we simply write $C$, $C(r)$, $r_i$, and $r_i.color$ instead of $C_t$, $C_t(r)$, $r_i^t$, and $r_i^t.\text{color}$, respectively.
A position on the plane occupied by two or more robots is termed a \emph{multiplicity point}.
%If two or more robots occupy the same position on the plane, that position is referred to as a \emph{multiplicity point}. 
When a robot $r$ observes such a point, it can identify all the distinct colors but not the exact number of robots present at that location. This also applies to its own position.
If two or more robots are collocated, they act as a single entity: they are activated simultaneously, share the same view, and compute the same destination. 
If some of them experience a mobility fault, the remaining non-faulty robots proceed to move toward the computed destination.

\vspace{1mm}
\noindent {\bf Activation Cycles.}
%At any time, a robot is either active or inactive. 
After activation, the robot operates in \emph{Look-Compute-Move} (LCM) cycles, as follows. \emph{Look:} The robot takes a snapshot to capture the position (w.r.t. its own coordinate system) and the color of every visible robot, including its own. \emph{Compute:} It then runs the algorithm using the captured data and computes the destination point and possibly a new color. \emph{Move:} Finally, it sets its new color and moves straight towards the destination or remains in place. 
It loses all memory except the current color after each LCM cycle.

\vspace{1mm}
\noindent{\bf Activation Scheduler.}
Robots are operated under an adversarial {\async} scheduler. 
In {\async}, robots may be activated at arbitrary times, and there is no common notion of time or synchrony. Each phase of the LCM cycle may incur unpredictable delays, and a robot may remain inactive for an arbitrarily long period. However, the scheduler is assumed to be fair, means each robot is activated infinitely often over time.
In {\async}, time is measured in \emph{epochs}, where an epoch is the smallest interval in which every robot is activated and completes at least one full LCM cycle. Formally, if $t_i$ is the time a robot starts its LCM cycle, and $t_j$ is the time every robot finishes the cycle at least once, then $t_j-t_i$ defines one epoch.

\vspace{1mm}
\noindent{\bf Problem Definition.}
Given an $(N,f)$-mobility system of robots starting from an arbitrary initial configuration on the Euclidean plane in the $\psi$-$\mathcal{LUMI}$ model, $N-f$ non-faulty robots gather at a single point, not known in advance. % such that the gathering point contains at least one faulty robot (if it exists). 

\section{Impossibility in a $(2,1)$-Mobility System under {\async} in 2-$\mathcal{LUMI}$}
\label{sec:impossibility}

In this section, we establish our impossibility result: Under {\async}, no deterministic algorithm can solve gathering in a $(2,1)$-mobility system in the 2-$\mathcal{LUMI}$ model.
This result is significant since it shows that any deterministic algorithm for gathering in a $(2,1)$-mobility system under {\async} must use $\psi$-$\mathcal{LUMI}$ model, $\psi\geq 3$. 
In the next section, we present a deterministic algorithm solving gathering in a $(2,1)$-mobility system under {\async} in the 3-$\mathcal{LUMI}$ model, which is optimal w.r.t. the number of colors.
Remember that 2-$\mathcal{LUMI}$ model is sufficient to solve gathering in $(2,0)$-fault system \cite{HeribanDT18}.

We first present some basic notations and then continue to the impossibility proof. 

\vspace{1mm}
\noindent{\bf Notations.} 
We denote the two colors in the 2-$\mathcal{LUMI}$ model as \texttt{BLACK} ($B$) and \texttt{WHITE} ($W$). We denote two robots by $r_1$ and $r_2$. We adopt the following notational conventions (the variables are intended to be self-explanatory). 
\begin{itemize}
    \item Writing $r_1.view = (B, W)$ means that in $r_1$'s snapshot, it sees itself colored \texttt{BLACK} and the other robot $r_2$ colored \texttt{WHITE}.

    \item The notation $r_1.action$ (similarly $r_2.action$) represents an ordered pair describing $r_1$’s behavior during its current LCM cycle. The first element specifies whether the robot moves (e.g., $no\_move$, $move$, or $move\_to\_$), and the second denotes whether it changes its color (e.g., $color\_change$ or $no\_color\_change$).
    % For example, $r_1.action = (wait, no\_color\_change)$ means that $r_1$ neither moves nor changes its color during the current activation.

    \item If we use $\sim$ in place of any component of $r_1.action$ (or in $r_2.action$), it means that the action corresponding to the particular component can be anything. Intuitively, the notation $\sim$ primarily represents the notion of ``any action''.

    \item $move\_any$ means a movement to any point, whereas $move\_to\_r_2$ represents the action of movement to the current location of the robot $r_2$.
    The notation $move\_to\_P$ represents the movement to a point $P \neq r_1, r_2$, whereas $move\_to\_ \neq P$ denotes the movement to any point other than $P$.
\end{itemize}

\noindent{\bf Detailed Argument.}
W.l.o.g., assume that both robots initially have the color $W$. In this configuration, robots $r_1$ and $r_2$ share the same view.
Clearly, gathering cannot be achieved if neither robot performs any movement.
Since both robots have identical views and possess no memory beyond their current colors, they will execute the same action under $\mathcal{A}$. % any deterministic algorithm.
If $r_1$ moves toward the location of the other robot $r_2$, the adversary activates both robots simultaneously, causing them to simply swap positions.
If a color change occurs during this movement, the synchronous activation results in both robots having the color $B$. 
Otherwise, they retain their color $W$. 
In either case, they execute the same algorithm and keep switching their positions with one another, leading to no gathering.
If $r_1$ moves to some point $P$ (possibly the midpoint of the line segment joining the robots), not equal to the location of the other robot $r_2$, without changing its current color, the adversary activates $r_1$ only while keeping $r_2$ idle. This results in a configuration where the two robots again have the same color, but are located at different positions. 
In contrast, if the robots are supposed to move to $P$ after changing the current color, we arrive at a configuration due to asynchronous activation where both of them are at different positions with distinct colors, either $r_1$ is with $B$ and $r_2$ with $W$, or vice versa. 

So, the above discussion concludes that the robots must reach a configuration in which two robots have different colors, and at least one of the robots has performed a movement. 
We now demonstrate that even from such a configuration, gathering remains impossible. % by carefully analyzing the effect of adversarial scheduling on the robots' actions.
W.l.o.g., we assume that $r_1$ has the color $B$ and executed at least one movement, while $r_2$ is with $W$. 
It is immediate that for any robot, there could be three movement actions: no movement, move to the position of the other robot and move to any point $P$ except the position of the other robot. 
We differentiate the following cases based on the different views (the colors on each other while they are dispersed) and possible actions of $r_1$ and $r_2$ (e.g. $no\_move, move\_any$).

\begin{itemize}
    \item \textbf{Case 1 ($r_1.view = (B,W)$ and $r_1.action= (no\_move, no\_color\_change)$):}
    \vspace{0pt}
    \begin{itemize}
        \item \textbf{Case 1.1 ($r_2.action = (no\_move, no\_color\_change)$):} There is no change in view of either robot. As a result, their actions remain unchanged in subsequent cycles, and gathering cannot be achieved.

        \item \textbf{Case 1.2 ($r_2.action = (no\_move, color\_change)$):} In this case, both robots have the color $B$. So, we cannot achieve gathering as discussed earlier.

        \item \textbf{Case 1.3 ($r_2.action = (move\_any, no\_color\_change)$):} The adversary can make the robot $r_2$ faulty, resulting in no change in the view of either robot. Consequently, gathering cannot be achieved.

        \item \textbf{Case 1.4 ($r_2.action = (move\_any, color\_change)$):} The adversary activates $r_2$ in such a way that its look phase coincides with the most recent move phase of $r_1$ (which is possible as $r_1$ has executed a movement), leading $r_2$ to a wrong computation of $r_1$'s position.
        Therefore, after the completion of the current activation cycle for both robots, they remain at different locations with the same color $B$, thereby preventing gathering.
    \end{itemize}

    \item  \textbf{Case 2 ($r_1.view = (B,W)$ and $r_1.action= (no\_move, color\_change)$):} 
\vspace{0pt}
    \begin{itemize}
        \item \textbf{Case 2.1 ($r_2.action = (no\_move, no\_color\_change)$):} Here, both robots eventually attain the color $W$, resulting in a configuration from which gathering is not possible.

        \item \textbf{Case 2.2 ($r_2.action = (no\_move, color\_change)$):} This results in a role reversal of the robots, i.e., $r_1.color = W$ and $r_2.color = B$.

        \item \textbf{Case 2.3 ($r_2.action = (move\_any, no\_color\_change)$):} The adversary schedules the activation time of $r_2$ in the same manner as described in Case 1.4.

        \item \textbf{Case 2.4 ($r_2.action = (move\_any, color\_change)$):} The adversary controls the activation time of $r_2$ in the same way as Case 1.4.
        After the completion of the current activation cycle for both robots, $r_1$ and $r_2$ possess different colors but remain ungathered. 
    \end{itemize}

    \item  \textbf{Case 3 ($r_1.view = (B,W)$ and $r_1.action= (move\_to\_P, no\_color\_change)$):}
    \vspace{0pt}
    \begin{itemize}
        \item \textbf{Case 3.1 ($r_2.action = (no\_move, no\_color\_change)$):} Despite the movement of $r_1$, the configuration stays the same with both robots maintaining their respective colors, as they cannot detect whether the distance between them has reduced or not. 

        \item \textbf{Case 3.2 ($r_2.action = (no\_move, color\_change)$):} In this case, both robots acquire the color $B$, which prevents gathering.

        \item \textbf{Case 3.3 ($r_2.action = (move\_to\_P, \sim)$):}In this case, the adversary activates $r_1$ first, and then $r_2$ so that they do not gather at a point.

        \item \textbf{Case 3.4 ($r_2.action = (move\_to\neq P, \sim)$):} In this case, the adversary activates both robots synchronously, preventing them from gathering.
    \end{itemize}

    \item \textbf{Case 4 ($r_1.view = (B,W)$ and $r_1.action= (move\_to\_P, color\_change)$):}
    \vspace{0pt}
    \begin{itemize}
        \item \textbf{Case 4.1 ($r_2.action = (no\_move, no\_color\_change)$):} Both the robots have the color $W$, but unable to gather at a point. 

        \item \textbf{Case 4.2 ($r_2.action = (no\_move, color\_change)$):} This will lead us to swapping of colors between the robots. The distance between them is reduced, but the robots cannot detect it. So, they do not gather.

        \item \textbf{Case 4.3 ($r_2.action = (move\_to\_P, \sim)$):} In this case, the adversary does not activate $r_2$. After the movement of $r_1$, both of them are of color $W$, which is the initial one. To make the scheduler fair, we then prioritise the activation of $r_2$ over $r_1$ whenever the alternate activation is suitable.

        \item \textbf{Case 4.4 ($r_2.action = (move\_to\neq P, \sim)$):} This is similar to Case 3.4.
    \end{itemize}

    \item \textbf{Case 5 ($r_1.view = (B,W)$ and $r_1.action = (move\_to\_r_2, no\_color\_change)$):}
    \vspace{0pt}
    \begin{itemize}
        \item \textbf{Case 5.1 ($r_2.action = (no\_move, no\_color\_change)$):} The adversary faults $r_1$, leaving views of both robots unchanged and preventing gathering.

        \item \textbf{Case 5.2 ($r_2.action = (no\_move, color\_change)$):} The adversary activates both $r_1$ and $r_2$, letting $r_2$ complete its LCM cycle while $r_1$ remains in the Look phase. Upon reactivating $r_2$, it sees $(B, B)$. If $r_2.action = (no\_move, \sim)$, the adversary faults $r_1$ and gathering never occurs. Whereas if $r_2.action = (move, \sim)$, the adversary completes $r_2$'s second activation and $r_1$'s first together, resulting to no gathering.

        \item \textbf{Case 5.3 ($r_2.action = (move, \sim)$):} The adversary activates both robots synchronously.
    \end{itemize}

    \item \textbf{Case 6 ($r_1.view = (B,W)$ and $r_1.action = (move\_to\_r_2, color\_change)$):}
    \vspace{0pt}
    \begin{itemize}
        \item \textbf{Case 6.1 ($r_2.action = (no\_move, no\_color\_change)$):} The adversary activates $r_1$ and $r_2$ simultaneously, but during the move phase of $r_1$, it lets $r_2$ finish its LCM cycle and activates it again, making  $r_2.view = (W,W)$ (the initial view). 
        As discussed earlier, $r_2$ cannot determine whether $r_1$ is moving towards it and thus initiate its own movement for gathering based on an incorrect position of $r_1$. 

        \item \textbf{Case 6.2 ($r_2.action = (no\_move, color\_change)$):} Similar to Case 6.1, the adversary schedules the activation of $r_2$ twice during $r_1$'s move phase. 
        In the second activation of $r_2$, $r_2.view=(B, W)$, same as $r_1$'s view, prompting $r_2$ to move toward $r_1$ and thus the configuration remains dispersed.

        \item \textbf{Case 6.3 ($r_2.action = (move, \sim)$):} The adversary synchronously activates both robots so that after their movement, they remain at different points.
    \end{itemize}
\end{itemize}

The above analysis proves the following theorem.

\begin{restatable}{theo}{theoremimpossibility}
\label{theorem:impossiblity}
    The gathering problem is impossible to solve in a $(2,1)$-mobility system under {\async} in the 2-$\mathcal{LUMI}$ model. 
\end{restatable}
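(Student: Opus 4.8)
The plan is to argue by contradiction: suppose a deterministic algorithm $\mathcal{A}$ solves gathering in a $(2,1)$-mobility system under \async\ using only the two colors $B$ and $W$. I would set up the standard initial symmetric configuration in which both robots are dispersed with the same color, say both $W$, so that $r_1$ and $r_2$ have identical views and, being oblivious and deterministic, must execute the identical action in $\mathcal{A}$. The first phase of the argument is to show this symmetric monochromatic configuration cannot lead to gathering directly: if neither moves, nothing changes; if both move toward each other, the adversary activates them synchronously so they swap positions (possibly after a synchronous color change, but then they are monochromatic again); if both move to some common point $P \neq r_2$ without color change, the adversary activates only $r_1$, leaving a dispersed monochromatic configuration; and if they move to $P$ with a color change, asynchrony lets one robot's Look phase straddle the other's Move, producing a dispersed configuration with both colors present. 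The upshot of the first phase is the key structural observation: $\mathcal{A}$ must at some point reach a configuration where the two robots are dispersed, carry distinct colors (w.l.o.g.\ $r_1 = B$, $r_2 = W$), and at least one of them has already performed a movement (so the adversary retains the power to desynchronize Look against a stale Move).

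The second phase is an exhaustive case analysis starting from such a dispersed bichromatic configuration, and this is where I expect the real work to be. Since $r_1$'s view is fixed at $(B,W)$, I would enumerate $r_1$'s possible actions --- the action is a pair (movement $\in \{no\_move, move\_to\_P, move\_to\_r_2\}$, color $\in \{no\_color\_change, color\_change\}$), giving six top-level cases --- and within each, enumerate $r_2$'s possible actions from its view $(W,B)$. For each combination I would exhibit an adversarial schedule that either (i) returns the system to a monochromatic dispersed configuration (handled by phase one), (ii) swaps the roles/colors of the robots so the configuration is isomorphic to an earlier one, yielding an infinite non-gathering execution by the pigeonhole principle over the finite set of configuration-types, or (iii) uses the mobility fault --- crucially, the adversary may crash the robot that is about to make useful progress, freezing the view --- or (iv) uses the \async\ Look/Move straddle trick so the moving robot reads a stale position and misses the target. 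The core tools are exactly three: synchronous activation (forces symmetry-induced swaps), the mobility fault (kills a would-be-useful move without changing the view), and the Look-straddles-Move desynchronization (defeats any ``move to where you saw the other robot'' strategy because by the time the mover arrives, the other may have left --- or the mover itself may be crashed mid-flight).

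The main obstacle, and the part requiring the most care, is ensuring \textbf{fairness} of the constructed adversarial schedule: the \async\ scheduler is required to be fair (every robot activated infinitely often), so I cannot simply freeze one robot forever. In the cases where the natural move is ``never activate $r_2$'' (e.g.\ when activating $r_2$ would make progress), I must show that one can interleave $r_2$'s activations in a way that still prevents gathering --- typically by arranging that whenever $r_2$ is activated it sees a view from which its action is neutralized, or by alternating priority between the two robots so that each activation lands in a ``bad'' moment. A second subtlety is confirming that the set of reachable configuration-types is genuinely finite (it is: two positions up to the relevant equivalence, times $2^2$ color assignments, times a bounded amount of ``has-moved'' bookkeeping), so that any schedule avoiding gathering at each step necessarily cycles and hence never gathers --- this is what converts a collection of local adversarial responses into a single infinite non-gathering execution. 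Once every one of the roughly two dozen subcases is shown to admit such a fair adversarial response, the contradiction is complete and Theorem~\ref{theorem:impossiblity} follows.
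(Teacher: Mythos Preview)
Your proposal is correct and follows essentially the same approach as the paper: both proofs start from the symmetric monochromatic configuration, argue it must evolve into a dispersed bichromatic one in which at least one robot has already moved, and then perform an exhaustive six-way case split on $r_1$'s action pair (with nested subcases on $r_2$'s action), defeating each combination via exactly the three adversarial tools you name --- synchronous activation, the single mobility fault, and the Look-straddles-Move desynchronization. Your explicit attention to scheduler fairness and to the finiteness of configuration-types (for the pigeonhole closure) is slightly more careful than the paper's treatment, which handles fairness only in passing (in one subcase) and leaves the cycling argument implicit, but the underlying proof is the same.
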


\section{Algorithm in a $(2,1)$-Mobility System under {\async} in 3-$\mathcal{LUMI}$}
\label{subsec:opt}

We now present a simple deterministic $3$-color algorithm for gathering in a $(2,1)$-mobility system under {\async}.
Given the impossibility result in the 2-$\mathcal{LUMI}$ model (Theorem \ref{theorem:impossiblity}), this is the best possible result w.r.t. the number of colors.  % \textsc{FaultTolerantGathering}(2,1)$ for gathering two robots, out of which at most $1$ can be faulty, under the \async model using three colors.
%Robots are susceptible to mobility faults. 
The third color helps the non-faulty robot by providing an indication, specifically in {\async} model, of whether the other robot (faulty or non-faulty) has activated and executed a move or not.  
Assume, both $r_1$ and $r_2$ initially have the color \texttt{OFF}. 
The algorithm is described from the perspective of $r_1$. Let $\overline{r_1r_2}$ be the line segment between $r_1$ and $r_2$. 
\vspace{0pt}
\begin{itemize}
    \item $r_1.color =$ \texttt{OFF}: If $r_1$ finds $r_2.color =$ \texttt{OFF}, it sets $r_1.color=$ \texttt{MOVE} and moves to the midpoint of $\overline{r_1r_2}$. 
    If it sees $r_2.color =$ \texttt{MOVE}, it maintains the status quo, as $r_2$ is likely in its move phase due to asynchronous activation.
    Otherwise, if it finds $r_2$ with \texttt{END} at a different position than its own, it changes its color to \texttt{END} and moves to the position of $r_2$.

    \item $r_1.color =$ \texttt{MOVE}: If $r_1$ gets activated with the color \texttt{MOVE}, it changes its color to \texttt{END}.

    \item $r_1.color =$ \texttt{END}: If $r_1$ observes $r_2$ with \texttt{OFF} or \texttt{MOVE}, it waits, allowing it to complete its movement and change color to \texttt{END}.
    If $r_1$ finds itself collocated with \texttt{END}-colored $r_2$, it terminates.
    Otherwise, $r_1$ moves to the position of $r_2$, maintaining the same color. 
\end{itemize}

We prove the following theorem through concise arguments.

\begin{restatable}{theo}{theoremtwogather}\label{thm:gathering2robots}
The deterministic algorithm described above solves gathering in a $(2,1)$-mobility system under {\async} in the 3-$\mathcal{LUMI}$ model.
\end{restatable}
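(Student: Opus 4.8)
The plan is to do a case analysis on the possible executions of the two robots, mirroring the structure of the algorithm's three color states, and to argue that every fair \async{} execution reaches the terminating configuration in which both robots are collocated with color \texttt{END}. Since at most one robot is faulty, at least one robot is non-faulty throughout; call it $r_1$ for concreteness (the algorithm is symmetric, so the labelling is only for exposition). The key invariant I would track is the multiset of colors together with whether the two robots are collocated.

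First I would handle the fault-free start. Both robots begin \texttt{OFF}. I claim that the first robot to complete a full LCM cycle from $(\texttt{OFF},\texttt{OFF})$ moves to the midpoint of $\overline{r_1r_2}$ and becomes \texttt{MOVE}; the \texttt{MOVE} color then acts as a lock that prevents the other robot from acting until the mover has promoted itself to \texttt{END}. Concretely: if $r_2$ is activated while $r_1.color=\texttt{MOVE}$, the \texttt{OFF}-rule says $r_2$ maintains the status quo; if $r_2$ is \texttt{OFF} and is activated after $r_1$ has reached \texttt{END}, then $r_2$ sees \texttt{END} at a different position and moves onto $r_1$ with color \texttt{END}, giving the terminal configuration. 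The only subtlety under \async{} is the window in which $r_1$ has physically moved to the midpoint but its color update is observed — but since color and move are set together in the Move phase (the paper states "it sets its new color and moves"), any snapshot that sees $r_1$ displaced also sees it as \texttt{MOVE}, and a snapshot taken before the move sees \texttt{OFF} at the old position; in the latter case both robots are still effectively in the initial configuration, which is harmless.

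Next I would inject the mobility fault and show it does not break progress. The worst case is that $r_1$ (the eventual mover) becomes faulty. Two subcases: (i) $r_1$ crashes while still \texttt{OFF} — then $r_2$, when activated, sees $(\texttt{OFF},\texttt{OFF})$, moves to the midpoint, becomes \texttt{MOVE}, then on its next activation becomes \texttt{END}; then $r_2$ sees $r_1$ still \texttt{OFF} at a different position, so by the \texttt{END}-rule it waits — here I need the observation that a robot stuck at \texttt{OFF}/\texttt{MOVE} forever would stall $r_2$, so I must argue the faulty robot's light still updates. Wait: if $r_1$ is faulty and \texttt{OFF}, when does its light ever become \texttt{END}? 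A faulty robot still executes Look–Compute of its LCM cycle and still sets its color (only the move is suppressed). So when $r_1$ is activated with $(\texttt{OFF},\texttt{MOVE})$ it maintains status quo, but when activated with $(\texttt{OFF},\texttt{END})$ at a different position it adopts \texttt{END} (the move to $r_2$ is suppressed, but the color change happens). Once $r_1.color=\texttt{END}$, $r_2$'s \texttt{END}-rule fires and $r_2$ moves onto $r_1$, terminating. (ii) $r_1$ crashes while \texttt{MOVE} or \texttt{END} after having moved — then its next activation promotes it to \texttt{END} (if it was \texttt{MOVE}), and $r_2$ converges onto it as above. In every branch the faulty robot eventually shows \texttt{END}, which is exactly the signal the algorithm uses.

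The main obstacle I anticipate is ruling out livelock from adversarial interleavings in which \emph{both} robots keep moving: e.g., both reach \texttt{MOVE} and then both promote to \texttt{END} at two distinct midpoints, or the to-and-fro of the \texttt{END}-rule oscillates. Here I would argue: the $(\texttt{OFF},\texttt{OFF})\to$ both-\texttt{MOVE} scenario requires both to Look while the other is still \texttt{OFF}, after which each computes the \emph{same} midpoint (they see each other at the same two positions), so they both move to the \emph{same} point and are thereafter collocated; collocated robots share a view and act identically, so they both promote to \texttt{END} and then, being already together with \texttt{END}, terminate. The remaining oscillation risk is in the \texttt{END} state: once at least one robot is \texttt{END} at a fixed position and the other is \texttt{END} elsewhere, the rule sends the latter onto the former with no color change; the target is non-faulty-or-faulty but in any case stationary once \texttt{END} (a non-faulty \texttt{END} robot re-activated sees either \texttt{END} collocated — terminate — or \texttt{END} at a different position — move onto it), so distances are monotonically collapsing onto a single point. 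Pinning down that "at most one of the two \texttt{END} positions is ever a moving target" and hence the process halts in $O(1)$ epochs is the delicate bookkeeping step; everything else is a finite mechanical check over the handful of $(r_1.color, r_2.color)$ pairs.
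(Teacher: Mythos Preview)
Your case analysis follows essentially the same route as the paper's brief argument---track the pair of colors and verify that every branch reaches collocated \texttt{END}---but is considerably more thorough. You correctly cover cases the paper elides, in particular the symmetric start where both robots Look at $(\texttt{OFF},\texttt{OFF})$ before either has moved and both head to the common midpoint, and the use of the mobility-fault model to let a crashed robot's light still advance to \texttt{END} so that the non-faulty partner can converge onto it.

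The genuine gap is precisely the one you flag as ``delicate bookkeeping'': the \texttt{END}--\texttt{END} oscillation when both robots are non-faulty. Your claims that ``distances are monotonically collapsing'' and that ``at most one of the two \texttt{END} positions is ever a moving target'' are not established, and they are the whole difficulty. Concretely, once $r_1$ is \texttt{END} at the midpoint $M$ and $r_2$ sets \texttt{END} and begins its move toward $M$, the adversary can activate $r_1$ during $r_2$'s transit; $r_1$ sees $r_2$ as \texttt{END} at some intermediate $P$ and moves there, while $r_2$ completes its move to $M$. This yields both robots non-faulty, both \texttt{END}, at distinct positions $P$ and $M$. From that state, activating them with simultaneous Looks makes each move to the other's current position---an exact swap with no decrease in distance---and this can be repeated indefinitely. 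The paper's own proof never touches this interleaving (it simply narrates one benign schedule in which the second robot waits), so you are not missing something the paper supplies; but you have correctly located the crux without actually closing it. A complete argument needs either a proof that this state is unreachable under the stated model (which would require a sharper claim about when a color update becomes observable relative to the motion) or a termination measure that genuinely survives the swap.
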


\begin{proof}
If both $r_1$ and $r_2$ are non-faulty, one of them eventually changes its color to \texttt{MOVE} and moves to the midpoint of $\overline{r_1r_2}$. In {\async}, the other, seeing \texttt{MOVE}, waits. The first robot then changes its color to \texttt{END}. The second robot, upon observing \texttt{END}, moves to its position and also changes to \texttt{END}. Once both are co-located with color \texttt{END}, they terminate.

If one robot is faulty and cannot move, the non-faulty robot still performs the same sequence: it changes to \texttt{MOVE}, moves toward the midpoint of $\overline{r_1r_2}$, then changes to \texttt{END}. Eventually, upon seeing the other robot (possibly now also \texttt{END}), the non-faulty robot moves to the location of the faulty one and terminates.
In both cases, the robots eventually become co-located and terminate, ensuring gathering.
\end{proof}

\section{Algorithm in an $(N,f)$-Mobility System under {\async} in 7-$\mathcal{LUMI}$}

\label{sec:Nalgorithm}
We now consider an $(N,f)$-mobility system, $f<N$, both $f, N$ unknown, and present a deterministic algorithm that solves gathering under {\async} in 7-$\mathcal{LUMI}$ in $O(N)$ epochs.
The algorithm works under obstructed visibility.
It is efficient in terms of color (only four colors away from optimal), as its linear runtime facilitates simpler decision-making. In the next section, we introduce an algorithm with significantly better runtime, though it is more intricate and requires many colors for decision-making. Together, these algorithms highlight a clear time–color trade-off.
% The solution is only 4 colors away from optimal (Theorem \ref{theorem:impossiblity}). 
% This algorithm is simple, as linear runtime helps in making decisions in an easier way using a few colors. We will present an algorithm in the next section that has much better runtime but is complicated and uses many colors to make decisions. This way, these algorithms exhibit a nice time-color trade-off.      
%The algorithm is simple compared to our $O(\max\{\ell,f\})$-epoch algorithm. 

\vspace{1mm}
\noindent{\bf Notations and Preliminaries.} 
We introduce some useful terminologies here. 
\begin{itemize}
    \item $\mathcal{V}_r$ is the set of all visible robots to $r$.

    \item The boundary of the convex hull $\mathcal{CH}$ of all the robots present on the plane, denoted by $\mathcal{CH}^0$, is the \emph{outer-most layer} of the robots. 
    For $j \geq 1$, $\mathcal{CH}^j$ is the boundary of the convex hull of all the robots excluding the robots on $\mathcal{CH}^0 \cup \mathcal{CH}^1 \cup \cdots \cup \mathcal{CH}^{j-1}$. 
    We refer to $\mathcal{CH}^j$ as the \emph{$j$-th layer}. 
    $\mathcal{CH}^{l-1}$ represents the \emph{inner-most layer} such that $\mathcal{CH}^{l}$ does not exist. This is illustrated in Fig. \ref{fig:layers}.

\begin{figure}[h]
    \centering
    \includegraphics[width=0.4\linewidth]{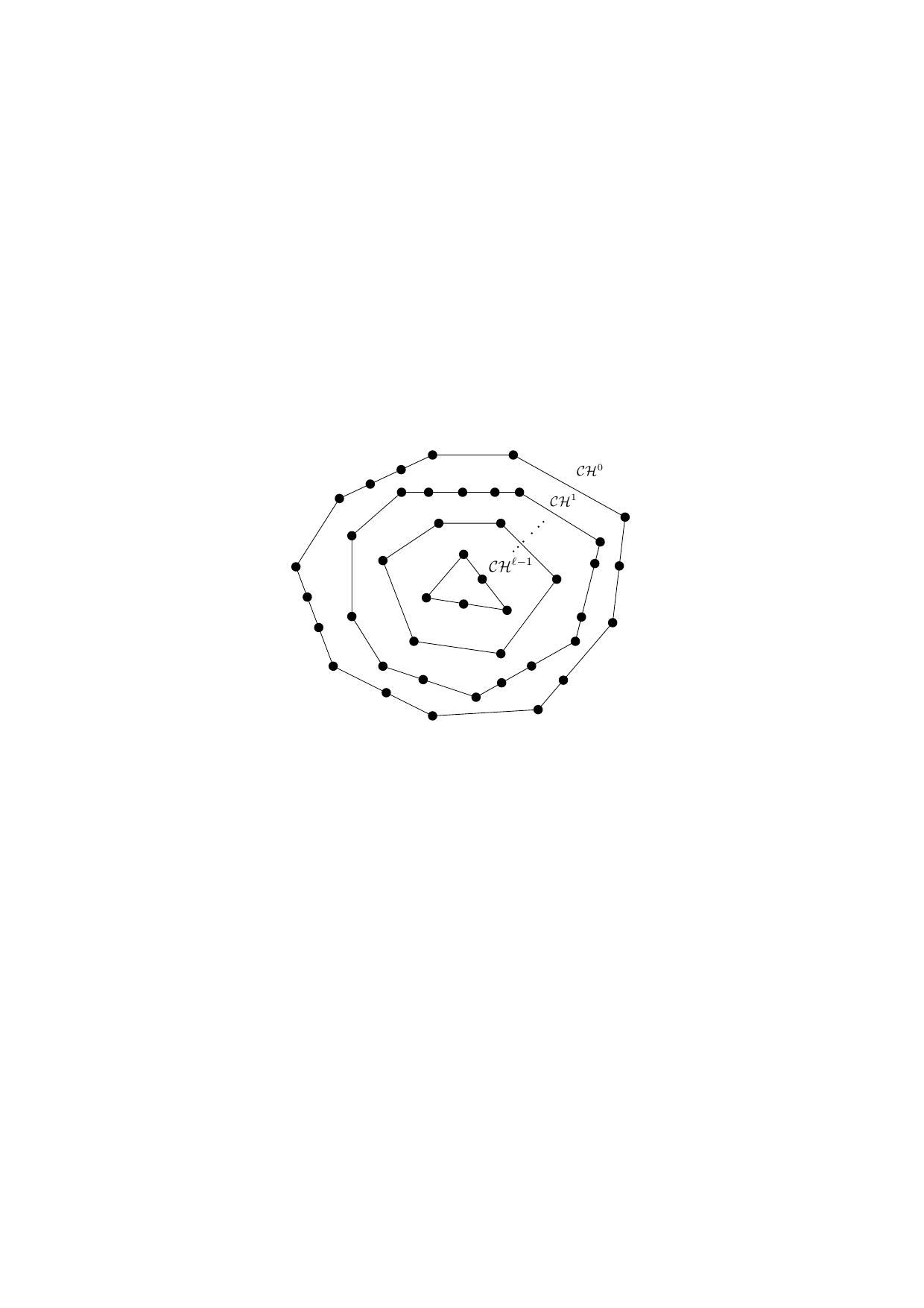}
        \caption{The disjoint global convex layers at a time instance}
        \label{fig:layers}
\end{figure}

    Note that the robots do not have any knowledge about the global layers. However, they can compute the layers locally using the set of robots $\mathcal{V}_r$. We use a subscript $r$ to refer to such locally computed layers. For example, the $j$-th local layer for $r$ is denoted by $\mathcal{CH}_r^j$.

    \item We use $l$ to denote the number of global layers in the initial configuration of the robots. 

    \item For two points $a$ and $b$, the line segment joining them is denoted by $\overline{ab}$ and the line passing through the two points is denoted by $\overleftrightarrow{ab}$.

    \item We use $d(a, b)$ as the Euclidean distance between the two points $a$ and $b$. For a line $L$, we denote the Euclidean distance between the point $a$ and the line $L$ as $d(a, L)$. 
    % For a line segment $LS$, we denote the distance between the point $a$ and $LS$ as $d(a, LS)$ and define it as follows. 
    % If the foot of the perpendicular from the point $a$ to $\overleftrightarrow{LS}$ lies on $LS$, then $d(a, LS) = d(a, \overleftrightarrow{LS})$. Otherwise, $d(a, LS) = \min \{d(a, e^1_{LS}), d(a, e^2_{LS})\}$.

    \item For a line $L$, $\mathcal{H}_{L}^1$ and $\mathcal{H}_{L}^2$ (resp. $\overline{\mathcal{H}}_{L}^1$ and $\overline{\mathcal{H}}_{L}^2$) represent the two open (resp. closed) half-planes delimited by $L$.

    \item On the convex hull $\mathcal{CH}_r$, two robots $r$ and $r'$ are \emph{neighbours} of each other if the line segment $\overline{rr'}$ lies on the boundary of $\mathcal{CH}_r$ and there is no other robot lying on $\overline{rr'}$.
\end{itemize}

We now define the notions of corner, boundary, and interior robots formally, which will later be used to characterise the robots both locally and globally.

\begin{definition}{(\textbf{Corner Robot, Boundary Robot, Interior Robot}):}
A robot $r$ is called a \emph{corner robot} if there exist two distinct lines, $L^1_r$ and $L^2_r$, such that $r$ lies at their point of intersection, and all other robots visible to $r$ are located in the region $\overline{\mathcal{H}}_{L^1_r}^i \cap \overline{\mathcal{H}}_{L^2_r}^j$ for some $i, j \in \{1, 2\}$. 
Otherwise, if there exists a line $L_r$ passing through $r$ such that all robots visible to $r$ lie entirely within one of the two closed half-planes $\overline{\mathcal{H}}_{L_r}^1$ or $\overline{\mathcal{H}}_{L_r}^2$, the robot $r$ is called a \emph{boundary robot}.
Else, $r$ is classified as an \emph{interior robot}.
\end{definition}

%\textcolor{red}{\noindent \textbf{Observation on a power of $(N,f)$-Mobility Fault Model:} Each robot color encodes a specific meaning, a positional or geometrical constraint, relative to its view. When a robot’s current color does not match its actual positional status, it can infer that its movement was disrupted in a previous LCM cycle and display a designated color to signal this. For example, suppose $r.color = \texttt{CORNER}$, indicating that $r$ is currently a corner of its convex hull $\mathcal{CH}_r$. If $r$ is instructed to move to a target point $t_r$ with color \texttt{MOVE1}, the expectation is that after moving, it will become a boundary robot on $\mathcal{CH}_r$. However, if the move fails due to a mobility fault, $r$ detects that its color \texttt{MOVE1} no longer matches its status of still being a corner robot, and thus switches to a special color, say \texttt{FAULT}, to indicate the unsuccessful movement. We exploit this power of the model to design our algorithms.}

We now highlight a small but important subroutine called \emph{visible area}, originally introduced by Sharma {\it et al.} \cite{SharmaVTBR16}, which we also incorporate into our algorithm.
The core idea of this subroutine is to reposition a given robot to a point (inside the visible area) from which it gains visibility of all other stationary robots on the plane. In \cite{SharmaVTBR16}, the visible area is computed exclusively by the robots positioned at the corners of a specific convex hull, not by the robots lying on the sides of the convex hull. 
In this paper, we refine and extend the idea by introducing three different types of visible areas for our convenience.

\begin{enumerate}
    \item \textbf{Interior Visible Area} ($IntVisibleArea(r, \mathcal{CH}_r)$): The detailed computation of visible area for a corner robot $r$ is presented in \cite{SharmaVTBR16}. 
    As shown in Fig. \ref{fig:visiblearea}, this is a polygonal subregion contained within the triangle $\Delta r p_{mid_1} p_{mid_2}$, where $p_{mid_1}$ and $p_{mid_2}$ are the two midpoints of $\overline{rr_{Nbr_1}}$ and $\overline{rr_{Nbr_2}}$ respectively with $r_{Nbr_1}$ and $r_{Nbr_2}$ being the two neighbours of $r$ on the convex hull $\mathcal{CH}_r$. An important property is that $r$ remains a corner robot even after moving within it, and becomes visible to any stationary robot.

\begin{figure}[h]
    \centering
      \includegraphics[width=0.8\linewidth]{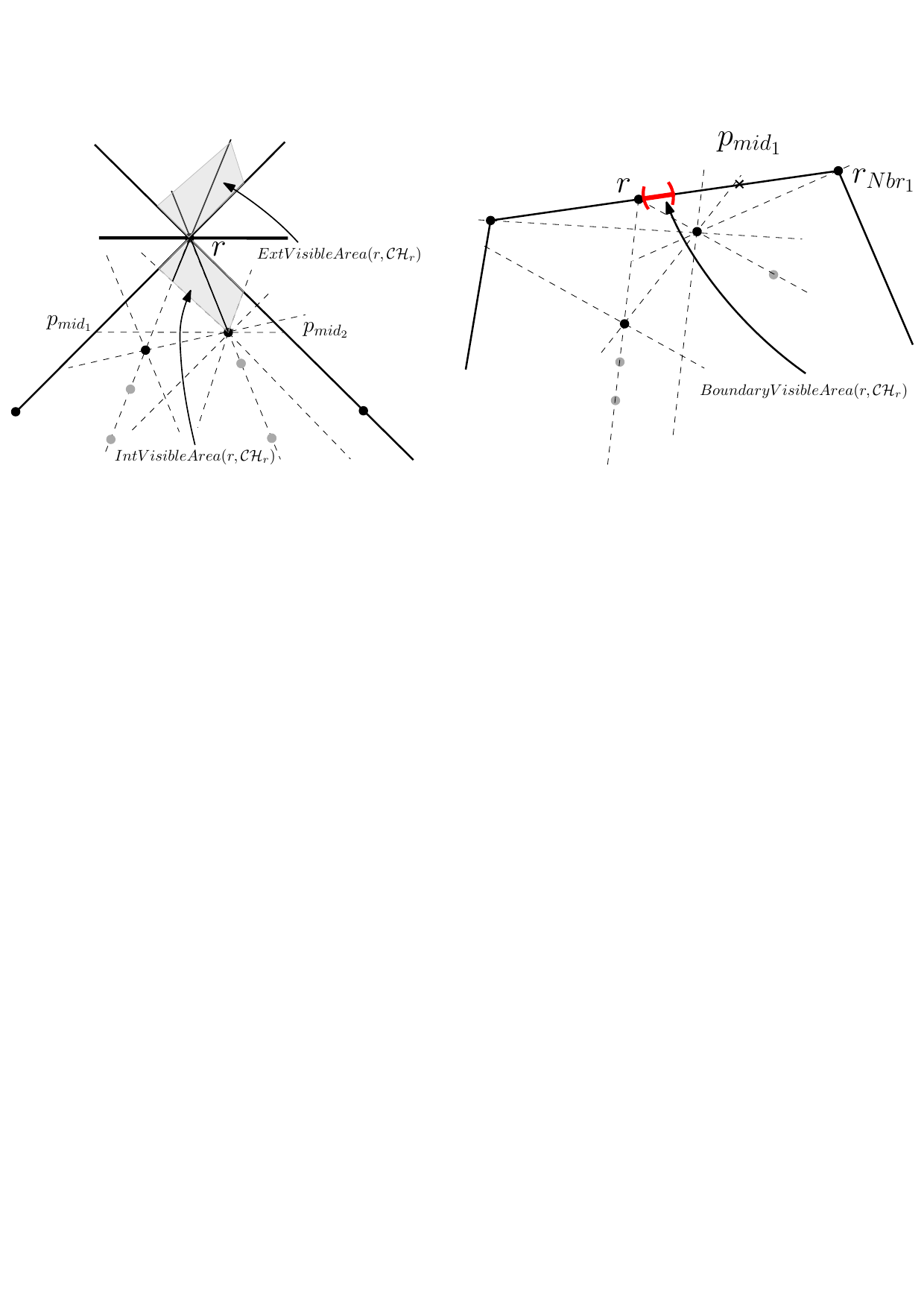}  
    \caption{Interior, Exterior, and Boundary visible Area of a robot $r$ w.r.t. $\mathcal{CH}_r$}
    \label{fig:visiblearea}
\end{figure}

    \item \textbf{Boundary Visible Area} ($BoundaryVisibleArea(r, \mathcal{CH}_r)$): This type of visible area is computed when $r$ is a boundary robot on $\mathcal{CH}_r$. 
    Let $p_{mid_1}$ and $p_{mid_2}$ be the midpoints of the line segments $\overline{rr_{Nbr_1}}$ and $\overline{rr_{Nbr_2}}$. Let $r$ choose the midpoint $p_{mid_1}$ arbitrarily, as shown in Fig. \ref{fig:visiblearea}. 
    It computes a target point $t_r$ on the line segment $\overline{rp_{mid_1}}$ such that after moving to $t_r$, the robot $r$ can see all the stationary robots on the open half plane $\mathcal{H}_{L_r}$ that does not contain any robot of $\mathcal{CH}_r$, where $L_r = \overleftrightarrow{rr_{Nbr_1}}$. 
    To compute such a point $t_r$, $r$ considers the following sets of lines $S^1_r, S^2_r$ 
    and $S^3_r$ defined as $S^1_r = \{\overleftrightarrow{rr'} | r' \in \mathcal{V}_r \}$, $S^2_r = \{\overleftrightarrow{r'r''} | r', r'' \in \mathcal{V}_r \}$  and $S^3_r =  \bigcup_{r' \in \mathcal{V}_r}\{ L |  L \text{~passes through $r'$ and parallel to }L', \forall L' \in S_r^1 \cup S_r^2  \}$.
    Finally, a point $t_r$ on the line segment $\overline{rp_{mid_1}}$ belongs to the boundary visible area of $r$ if $d(r, t_r) = \frac{1}{2} \min \{ d(r,L) | ~ L \in S_r^2 \cup S_r^3\}$. 
    When $\mathcal{CH}_r$ is a point, $r$ does not have any neighbours. In that case, it skips the computation of the half plane and directly computes the point $t_r$ satisfying the above condition. 

    \item \textbf{Exterior Visible Area} ($ExtVisibleArea(r, \mathcal{CH}_r)$): This is a visible area computed by the robot $r$ that is a reflection of $IntVisibleArea(r, \mathcal{CH}_r^*)$ with the axis of reflection passing through the current position of $r$ and parallel to the line $\overleftrightarrow{p_{mid_1} p_{mid_2}}$ (see Fig. \ref{fig:visiblearea}).
\end{enumerate}

Note that the terminology used for these visible areas resembles functional notation, where each expression takes two inputs: the current location of $r$ and a particular convex hull, which will be specified as needed throughout the description. 
We now state the following lemma, whose proofs follow directly from \cite{SharmaVTBR16}.

\begin{restatable}{lem}{lemmaVisiblearea}\label{lemma:visiblearea}
If a non-faulty robot $r$ is a corner of $\mathcal{CH}_r$ and moves to a point either in $IntVisibleArea(r, \mathcal{CH}_r)$ or in $ExtVisibleArea(r, \mathcal{CH}_r)$, it remains a corner on $\mathcal{CH}_r$ and sees all the stationary robots on the plane.
Similarly, for a boundary robot $r$ on $\mathcal{CH}_r$ moving to a point in $BoundaryVisibleArea(r, \mathcal{CH}_r)$, it sees all the stationary robots on the half plane $\mathcal{H}_{L_r}$ that does not contain any robot of $\mathcal{CH}_r$, where $L_r = \overleftrightarrow{rr_{Nbr_1}}$.
\end{restatable}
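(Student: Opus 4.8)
The plan is to reduce Lemma~\ref{lemma:visiblearea} to the construction and correctness guarantees of the visible-area subroutine of Sharma \emph{et al.}~\cite{SharmaVTBR16}, handling the three cases (interior, exterior, boundary) in turn, and being careful that the only robots whose visibility we must guarantee are the \emph{stationary} ones (so that the adversary's asynchronous scheduling of moving robots is irrelevant to the claim). Throughout, fix a non-faulty robot $r$ and let $\mathcal{CH}_r$ be the convex hull of its current visible set $\mathcal{V}_r$, with neighbours $r_{Nbr_1}, r_{Nbr_2}$ and the associated midpoints $p_{mid_1}, p_{mid_2}$ as in the definitions above.

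\textbf{Interior case.} First I would recall that $IntVisibleArea(r,\mathcal{CH}_r)$ is, by construction, a nonempty polygonal region inside the triangle $\Delta r\, p_{mid_1} p_{mid_2}$, obtained by intersecting the triangle with the half-planes bounded by all lines through pairs of visible robots and their translates through single visible robots (the sets $S^1_r, S^2_r, S^3_r$ used for the boundary case are the same machinery). Two facts must be extracted from~\cite{SharmaVTBR16}: (i) any target point $t_r$ in this region keeps $r$ strictly on the same side of every chord of $\mathcal{CH}_r$, hence $r$ is still a corner of $\mathcal{CH}_r$ after the move, because the supporting lines $L^1_r, L^2_r$ witnessing cornerhood can be perturbed to still witness it from $t_r$ (staying inside the triangle $\Delta r\, p_{mid_1} p_{mid_2}$ guarantees $r$ does not cross either side of the hull incident to it, and staying strictly inside the constructed half-planes guarantees it does not cross any line determined by two other visible robots); (ii) from $t_r$ no third robot $r''$ can lie on the segment between $r$ and any other robot $r'$, because that would force $r''$ onto a line through $r'$ parallel to some line in $S^1_r\cup S^2_r$ at distance $0$ from $t_r$, contradicting $d(r,t_r) = \tfrac12\min\{d(r,L)\colon L\in S^2_r\cup S^3_r\}$; since such a move does not relocate any other robot, the stationary robots remain exactly where they were and each becomes visible to $r$. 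I would then note the quantifier subtlety: the lines in $S^2_r, S^3_r$ are computed from $\mathcal{V}_r$, the \emph{current} visible set; a robot that is currently hidden from $r$ could in principle still obstruct after the move, but the lemma only claims visibility of stationary robots, and one argues (again following~\cite{SharmaVTBR16}) that a currently-hidden stationary robot is hidden by a chain of collinear stationary robots, the innermost of which is in $\mathcal{V}_r$, so clearing that one exposes the rest inductively.

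\textbf{Exterior case.} Here $ExtVisibleArea(r,\mathcal{CH}_r)$ is defined as the mirror image of $IntVisibleArea(r,\mathcal{CH}_r)$ across the line through $r$ parallel to $\overleftrightarrow{p_{mid_1}p_{mid_2}}$. Distances from $r$ to the lines in $S^2_r\cup S^3_r$ are preserved under this reflection about a line through $r$, and the reflection sends the triangle $\Delta r\,p_{mid_1}p_{mid_2}$ to the congruent triangle on the outer side of the hull edge $\overline{r_{Nbr_1} r_{Nbr_2}}$; moving outward can only make $r$ ``more extreme,'' so cornerhood is retained (indeed $r$ stays a corner with the reflected supporting lines), and the same distance/obstruction argument as in the interior case shows all stationary robots become visible. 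The only extra check is that moving outward does not push $r$ onto an existing line through two other visible robots — but that is precisely excluded by the reflected half-plane constraints inherited from $S^2_r, S^3_r$.

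\textbf{Boundary case.} When $r$ is a boundary (non-corner) robot, it picks one neighbour, say $r_{Nbr_1}$, sets $L_r=\overleftrightarrow{r r_{Nbr_1}}$, and moves along $\overline{r\,p_{mid_1}}$ to a point $t_r$ with $d(r,t_r)=\tfrac12\min\{d(r,L)\colon L\in S^2_r\cup S^3_r\}$. The claim is weaker here — only visibility of stationary robots in the open half-plane $\mathcal{H}_{L_r}$ not containing $\mathcal{CH}_r$ — so I would argue: (a) moving a strictly positive distance that is less than half the minimum gap to any line in $S^2_r\cup S^3_r$ means $r$ does not cross any such line, so for any stationary robot $r'$ on that open half-plane, no third stationary robot ends up on $\overline{t_r r'}$ (if one did, by the same parallel-translate argument it would have been within $d(r,t_r)$ of $r$ along a line of $S^3_r$, contradiction), handling the inductive peeling of collinear chains exactly as before; (b) the degenerate sub-case where $\mathcal{CH}_r$ is a single point (so $r$ has no neighbours, i.e.\ $r$ sees only collocated robots or is isolated) is noted separately: $r$ skips the half-plane selection and just moves the prescribed tiny distance, which trivially cannot create new collinearities with the — possibly empty — set of other visible robots, and the general obstruction argument degenerates correctly.

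I expect the main obstacle to be case~(a)/(ii) in each part: cleanly justifying that a robot currently \emph{invisible} to $r$ (hence not accounted for in $\mathcal{V}_r$, $S^1_r$, $S^2_r$, $S^3_r$) nonetheless becomes visible after the move. The resolution is the standard ``peel the innermost obstructor'' induction — a maximal collinear run of stationary robots blocking $r$'s view of some far robot has its $r$-closest element in $\mathcal{V}_r$, and the visible-area construction guarantees $r$ leaves the line of that run, so after the move that run no longer blocks; iterating over the finitely many such runs clears all of them. Since the move relocates only $r$ and $r$ ends strictly inside the constructed region, no \emph{new} collinearity among the other (stationary) robots is created, so the induction terminates with all stationary robots visible. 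Everything else is a direct transcription of the geometric estimates in~\cite{SharmaVTBR16}, so I would cite that paper for the detailed computations and only spell out the reflection-invariance remark needed for the exterior case and the point-hull degeneracy needed for the boundary case.
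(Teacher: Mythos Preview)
Your proposal is correct and takes essentially the same approach as the paper: the paper gives no explicit proof at all, simply stating that the lemma ``follows directly from~\cite{SharmaVTBR16}.'' Your write-up is considerably more detailed than the paper's one-line citation---in particular, your reflection-invariance remark for the exterior case and the peeling argument for currently-invisible robots are not spelled out anywhere in the paper---but the underlying route is identical.
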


% \begin{restatable}{lem}{lemmaVisibleareaforboundary}\label{lemma:visibleareaforboundary}
% If a non-faulty robot $r$ is a boundary robot on $\mathcal{CH}_r$ and moves to a point in $BoundaryVisibleArea(r, \mathcal{CH}_r)$, it sees all the stationary robots on the plane except the robots lying on $\overleftrightarrow{r r_{Nbr_1}}$.
% \end{restatable}

\subsection{Highlevel Idea of the Algorithm in 7-$\mathcal{LUMI}$}
\label{highlevel:7color_algo}

We exploit a useful geometric property of the Euclidean plane.
If a robot $r$ lies on a corner (or on a side or in the interior) of its local convex hull $\mathcal{CH}_r$, then this positional classification of $r$ remains unchanged with respect to the global convex hull $\mathcal{CH}$.
In the algorithm, we mainly have \textit{five} stages.
In the \textit{first} stage, \textsc{Robot-Classification}, we begin by classifying the robots using just two colors: robots lying on the perimeter of the convex hull $\mathcal{CH}$ are colored \texttt{OUTER}, while those strictly in its interior are colored \texttt{INNER}.
In the \textit{second} stage, \textsc{Inner-To-Outer}, our goal is to have the robots on $\mathcal{CH}^1$ move to the positions of the robots on $\mathcal{CH}^0$, followed by those on the subsequent layers towards the innermost one, while keeping the robots on $\mathcal{CH}^0$ stationary throughout.
Note that under {\async}, it is possible that the layers on the initial configuration might be intermixed during the execution, as some robots move while others await activation, making the layer structure potentially dynamic over time. 
In our technique, we use a specific color (e.g., \texttt{MOVE}) for a robot moving from interior to the position of a corner robot.
During this process, some robots may suffer mobility faults. In the next activation, if a robot with a particular color detects no change in its positional view, it switches to another color (e.g., \texttt{FAULT}) to indicate that the movement was unsuccessful due to the fault.
By disregarding the visible \texttt{OUTER} and \texttt{FAULT}-colored robots, $r$ can accurately determine whether it is a corner on the convex hull formed by the remaining robots, essentially identifying itself as an \emph{eligible robot} for a movement toward $\mathcal{CH}^0$. 
Such a robot first moves to a point within its interior visible area, allowing it to see the \texttt{OUTER}-colored robots on $\mathcal{CH}^0$, and then to the position of its nearest \texttt{OUTER}-colored robot.
%Again due to {\async}, the movement of some eligible robots may obstruct the line-of-sight of others, causing those robots to wait before proceeding.
We repeat this process for all \texttt{INNER}-colored robots, ultimately transforming the configuration so that all non-faulty \texttt{INNER} robots are relocated to positions on $\mathcal{CH}^0$, while the faulty ones remain inside the (global) convex hull $\mathcal{CH}$.

In the \textit{third} stage, \textsc{Confirmation-Signal-To-Outers}, the faulty robots that are left in the interior of $\mathcal{CH}$ use a designated color to signal the robots on $\mathcal{CH}^0$ to start the next stage.
In the \textit{fourth} stage, \textsc{Boundary-To-Corner}, we proceed to sequentially move the boundary robots on $\mathcal{CH}$  to the corner positions. %of $\mathcal{CH}$.
We minimize the number of colors by reusing some from the earlier three stages.
Consequently, in the \textit{fifth} stage, \textsc{Gathering}, the corner robots are then relocated for gathering, either to the CG of the convex hull formed by faulty robots (if present), or to the position of another corner robot, through a controlled sequence of color transitions.
Although the stages used to describe the algorithm may overlap due to {\async} and obstructed visibility, each stage eventually accomplishes its designated objective.

When the initial configuration is a line, robots move from the endpoints while keeping the formation linear throughout.
An endpoint robot moves towards its neighbour.
If it fails to do so, the neighboring robot is treated as the new endpoint and moves toward its own neighbor on the opposite side.
This process continues in a sequential manner, like a wave propagating inward along the line.
The sequential nature of this algorithm gives overall $O(N)$ runtime in the {\async} model and uses total 7 colors. 
A comprehensive description of this algorithm appears in the next subsection.

\subsection{Description of the Algorithm in 7-$\mathcal{LUMI}$}
\label{app-sec:description_7color_algo}
Initially, all robots are colored \texttt{OFF}. We describe the algorithm from the perspective of an individual robot $r$, depending on its current color and observations. 
For clarity and ease of analysis, we divide the algorithm into a series of \emph{stages}, each with a descriptive name (e.g., \textsc{Robot-Classification}).
The algorithm is divided into two main cases based on the nature of the initial configuration: Case 1 handles nonlinear configurations, and Case 2 addresses linear configurations.

\noindent \textbf{Case 1 ($\mathcal{CH}$ is non-linear):}
In the following stage, we begin by classifying the robots into two categories based on their positions relative to their local convex hull: those lying on the perimeter and those situated in the interior.
\vspace{1mm}

\noindent \underline{\textsc{Robot-Classification:}} Upon activation with color \texttt{OFF}, if $r$ finds itself as an interior robot on $\mathcal{CH}_r$, it changes its color to \texttt{INNER}, otherwise \texttt{OUTER}.
Any robot that observes another robot still colored \texttt{OFF} will wait until that robot updates its color to either \texttt{OUTER} or \texttt{INNER}.

\vspace{1mm}

\noindent \underline{\textsc{Inner-To-Outer:}} In this stage, we now bring the non-faulty \texttt{INNER}-colored robots to the nearest \texttt{OUTER}-colored robots through the following sequence of steps.
   \begin{itemize}
        \item \textbf{$r.color =$ \texttt{INNER}:}
        $r$ waits if it sees any \texttt{MOVE1}, or \texttt{MOVE2}-colored robot. Otherwise, it checks eligibility by ignoring visible \texttt{OUTER} and \texttt{FAULT}-colored robots and computing the convex hull $\mathcal{CH}_r^*$ of the remaining robots. If $r$ is a corner of $\mathcal{CH}_r^*$, it is considered \emph{eligible}. An eligible $r$ computes its interior visible area $IntVisibleArea(r, \mathcal{CH}_r^*)$, changes its color to \texttt{MOVE1}, and moves to a point within this area (see Fig. \ref{fig:inner-to-outer}). If non-faulty, this move enables $r$ to see the nearest \texttt{OUTER}-colored robots.

        \item \textbf{$r.color =$ \texttt{MOVE1}:} If $r$ sees any robot with color \texttt{MOVE2}, it remains idle. 
        Otherwise, it identifies its neighbors $r_{Nbr_1}$ and $r_{Nbr_2}$ on $\mathcal{CH}_r^*$, and computes the line $L_r$ passing through its position and parallel to $\overleftrightarrow{r_{Nbr_1}r_{Nbr_2}}$. 
        Then $r$ changes its color to \texttt{MOVE2} and moves to the nearest visible \texttt{OUTER}-colored robot in $H_{L_r}$, the open half plane that excludes both neighbours of $r$.
        In case of only one neighbour (i.e., $\mathcal{CH}_r^*$ is linear), $L_r$ is defined as the line perpendicular to $\overleftrightarrow{rr_{Nbr_1}}$.
        When $r$ has no neighbours, it skips computing $H_{L_r}$ and directly moves to the nearest \texttt{OUTER}-colored robot.
        Moreover, $r$ may not see any \texttt{OUTER} or \texttt{MOVE2}- colored robots. In such a case, $r$ updates its color to \texttt{FAULT}, as it may happen that the layers are concentric and, being a faulty robot, $r$ is unable to observe any robot on the outermost layer.

        \item \textbf{$r.color =$ \texttt{MOVE2}:} 
        If $r$ is collocated with \texttt{OUTER}-colored robots upon activation, it changes its color to \texttt{OUTER}; otherwise, being faulty, it switches from \texttt{MOVE2} to \texttt{FAULT}.

        % \item \textbf{$r.color =$ FAULT:} In this case, if $r$ finds at least one robot with the color INNER, MOVE1, or MOVE2, it does not change its color.
        % Otherwise, it computes the local layers $L_r^0, L_r^1, .....$ and determines the layer $L_r^i$ where $r$ is currently situated. If $L_r^i$ is the local inner-most layer, and any robot lying on $L_r^i$ has either the color FAULT or FAULT-FINISH, the robot $r$ changes its color to FAULT-FINISH.  
    \end{itemize}

    Faulty robots (colored \texttt{FAULT}) execute the stage \textsc{Confirmation-Signal-To-Outers} to change their color to \texttt{FAULT-FINISH} and signal \texttt{OUTER}-colored robots to begin the next stage. An \texttt{OUTER}-colored robot waits until all robots are either \texttt{FAULT-FINISH} or \texttt{OUTER}. 

    \begin{figure}[h]
    \centering
    \begin{minipage}[b]{0.48\linewidth}
    \centering
        \includegraphics[width=0.86\linewidth]{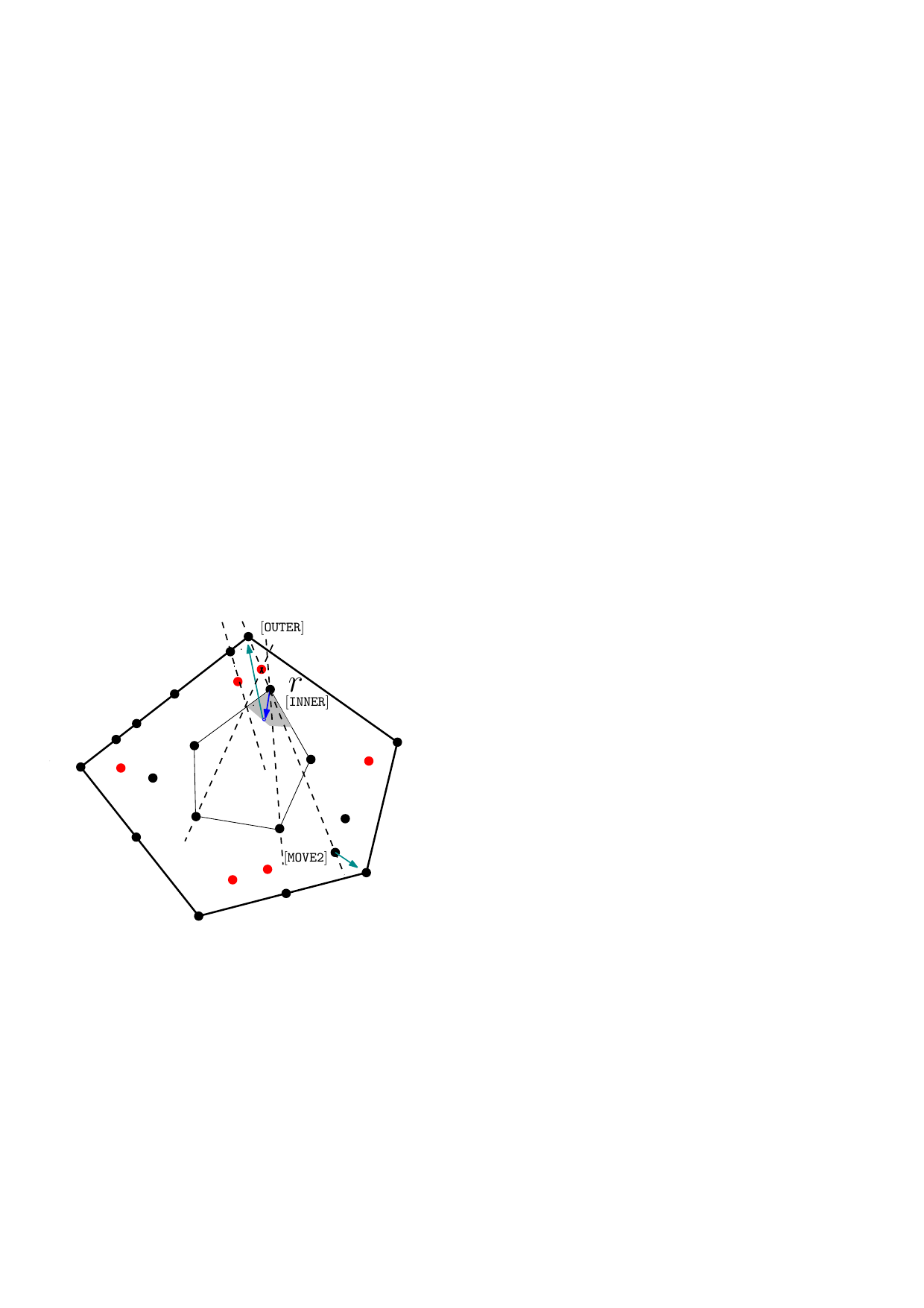}
        \caption{Movement of an eligible robot $r$ to its visible area (blue arrow) and then to \texttt{OUTER} robot (cyan arrow). Red points are the \texttt{FAULT}-colored robots}
        \label{fig:inner-to-outer}
    \end{minipage}\hfill
    \begin{minipage}[b]{0.48\linewidth}
    \centering
      \includegraphics[width=0.86\linewidth]{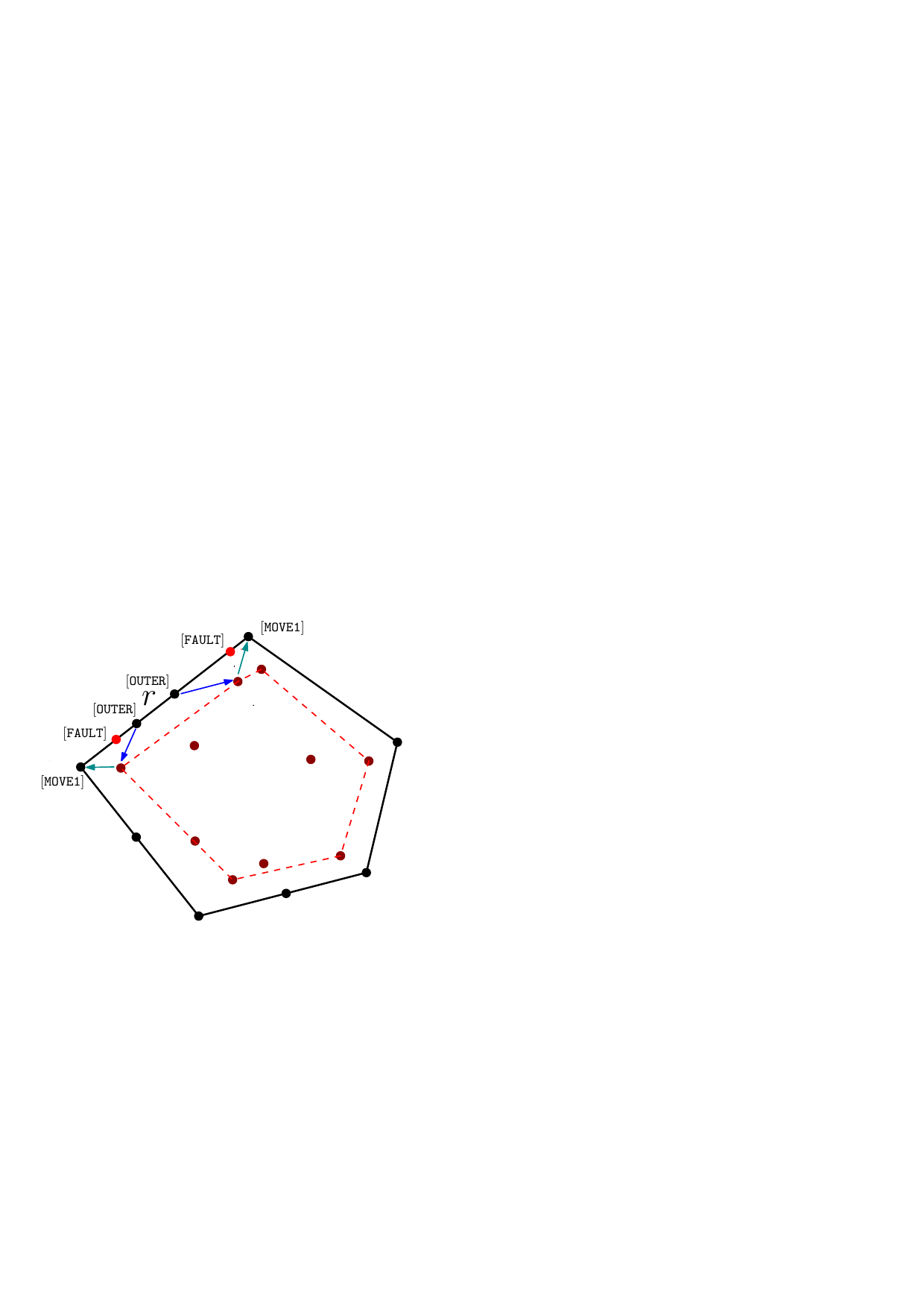}  
    \caption{Movement of boundary robot to nearest \texttt{FAULT-FINISH}-colored robots (blue arrow) and then to \texttt{MOVE1}-colored corners (cyan arrow). Brown points are \texttt{FAULT-FINISH} robots}
    \label{fig:boundary-to-corner}
    \end{minipage}
\end{figure}

\vspace{1mm}

    \noindent \underline{\textsc{Confirmation-Signal-To-Outers}:}
    If $r$ (colored  \texttt{FAULT}) observes any robot with a color from the set $\{\texttt{INNER, MOVE1, MOVE2}\}$, it does nothing. 
    Otherwise, it calculates the local layers $\mathcal{CH}_r^0, \mathcal{CH}_r^1,\cdots, \mathcal{CH}_r^{l'}$ using the visible robots in $\mathcal{V}_r$ and determines the layer $\mathcal{CH}_r^i$ to which it belongs. 
    If the interior of the convex hull formed by the robots on $\mathcal{CH}_r^i$ is either empty or contains robots only with the color \texttt{FAULT-FINISH}, $r$ changes its color to \texttt{FAULT-FINISH}. 
    The color \texttt{FAULT-FINISH} acts as a signal for the \texttt{OUTER}-colored robots to initiate their movement in the next stage. 

    \vspace{1mm}

    \noindent \underline{\textsc{Boundary-To-Corner:}} The objective of this stage is to relocate the boundary robots colored \texttt{OUTER} to the positions of the corner robots, while keeping the positions of the corners remain unchanged. To minimise the color usage, we reuse certain colors introduced in earlier stages.
    This stage follows a strategy similar to that of the \textsc{Inner-To-Outer} stage, where we first move the boundary robots adjacent to corner robots, followed by their neighbouring boundary robots and so on. 
    Finally, the faulty boundary robots use designated color transitions to signal the corner robots to initiate the next stage.
    
    \begin{itemize}
        \item \textbf{$r.color =$ \texttt{OUTER}:} 
        If $r$ is a corner robot on $\mathcal{CH}_r$, it transitions its color to \texttt{MOVE1} to signal its adjacent boundary robots (currently colored \texttt{OUTER}) to move to its location. 
        If $r$ is a boundary robot and finds a neighbour on $\mathcal{CH}_r$ with color \texttt{MOVE1}, it switches to \texttt{MOVE2} and moves to that neighbour's position. 
        If instead it detects a \texttt{FAULT}-colored neighbor, it moves with color \texttt{OFF}—either to the nearest corner of the convex hull formed by visible \texttt{FAULT-FINISH} robots (if any exist, refer Fig. \ref{fig:boundary-to-corner}), or otherwise to a point $t_r$ located in the interior of $\mathcal{CH}_r$.
        To compute $t_r$, the robot $r$ considers two nearest robots $r_1, r_2$, each lying on an adjacent side of the boundary segment containing $r$ in $\mathcal{CH}_r$. The point $t_r$ satisfies $\overline{r t_r} \perp \overline{r_1 r_2}$ and $d(r, t_r) = \frac{1}{2} d(r, \overleftrightarrow{r_1 r_2})$. 
        The computation of $t_r$ is constrained in such a way that another robot moving simultaneously with $r$ cannot block the view of $r$ to see the designated corner robots after the movement.

        \item \textbf{$r.color =$ \texttt{MOVE2}:} If $r$ is a corner robot  on $\mathcal{CH}_r$ and is collocated with a \texttt{MOVE1}-colored robot, it adopts \texttt{MOVE1}, indicating that $r_1$ has successfully moved to the corner. 
        If $r$ is a boundary robot whose both neighbours are colored from the set $\{\texttt{MOVE1}, \texttt{MOVE2}, \texttt{FAULT}\}$ and theres is no visible \texttt{OUTER}-colored robots, this indicates that all the boundary robots on $\mathcal{CH}$ have already commenced the current stage. 
        Consequently, it changes its color to \texttt{FAULT-FINISH}, signaling the corner robots to initiate the next stage. 
        If $r$ observes any neighbour with color \texttt{OFF}, it waits. Otherwise, either being a boundary robot with one \texttt{OUTER}-colored neighbour or being an interior robot on $\mathcal{CH}_r$, $r$ changes its color to \texttt{FAULT}.  

        \item \textbf{$r.color =$ \texttt{FAULT}:} $r$ switches to \texttt{FAULT-FINISH} if it is a boundary robot on $\mathcal{CH}_r$ and has at least one neighbour colored \texttt{FAULT-FINISH}.

        \item \textbf{$r.color= $ \texttt{OFF}:}
        If $r$ is a boundary robot on $\mathcal{CH}_r$ and notices a neighbouring robot with color \texttt{FAULT}, it updates its color to \texttt{MOVE2}.
        % When $r$ is an interior robot and finds itself collocated with a \texttt{FAULT-FINISH}-colored robot, it changes to \texttt{MOVE2} and moves toward a \texttt{MOVE1}-colored robot.
        When $r$ is an interior robot and detects a \texttt{MOVE1}-colored corner robot on $\mathcal{CH}_r$ with a \texttt{FAULT}-colored neighbour, $r$ moves to that corner's location while changing its color to \texttt{MOVE2}, as depicted in Fig. \ref{fig:boundary-to-corner}.

    \end{itemize}

    At this point, every boundary robot from the initial configuration either reaches a corner position or switches its color to \texttt{FAULT-FINISH}, and we move to the next stage.

    \begin{figure}[h]
    \centering
    \begin{minipage}[b]{0.48\linewidth}
    \centering
        \includegraphics[width=0.9\linewidth]{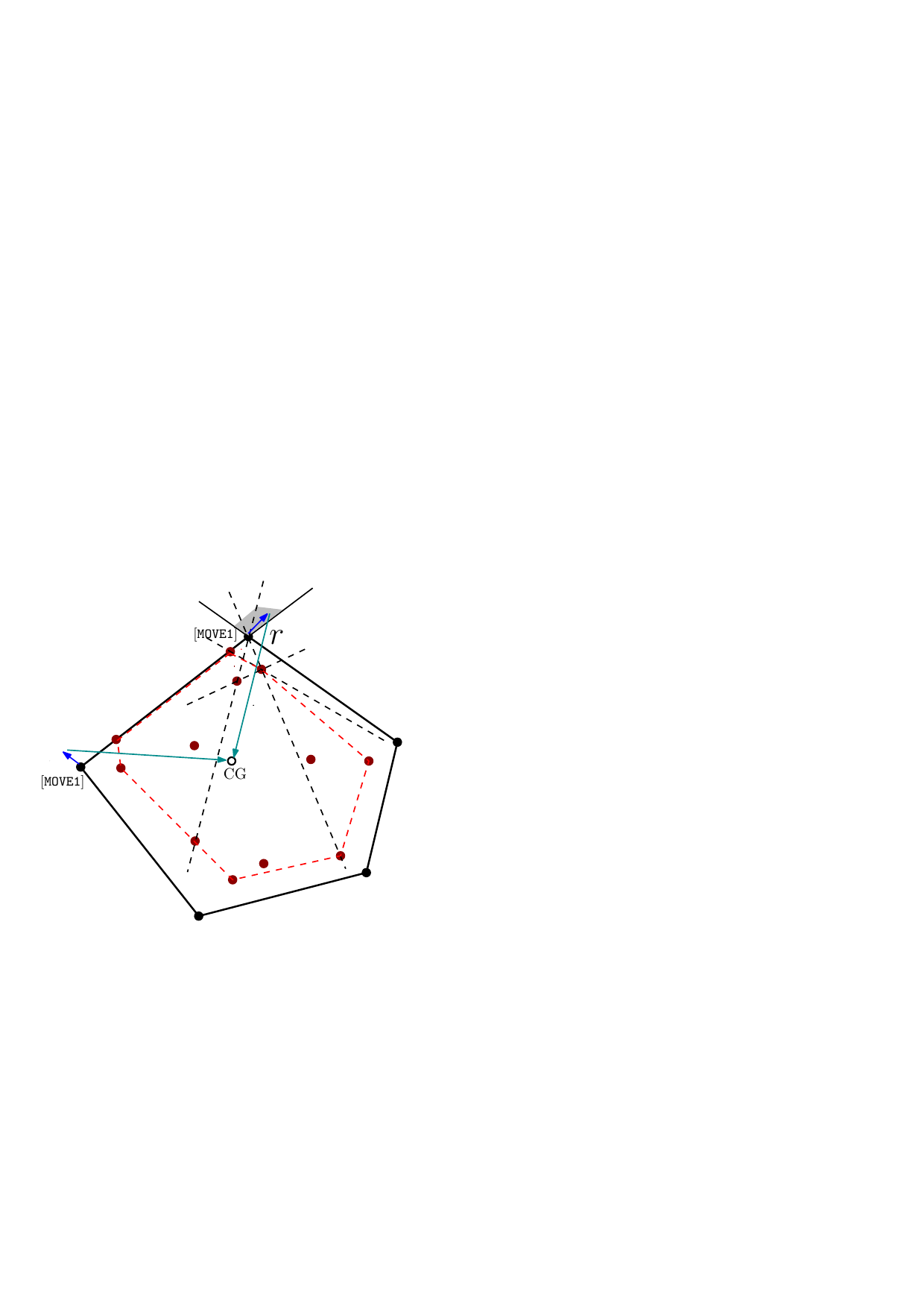}
        \caption{Movement of corner robots to visible area (blue arrow) and then to CG (cyan arrow)}
        \label{fig:corner-to-CG}
    \end{minipage}\hfill
    \begin{minipage}[b]{0.48\linewidth}
    \centering
      \includegraphics[width=0.9\linewidth]{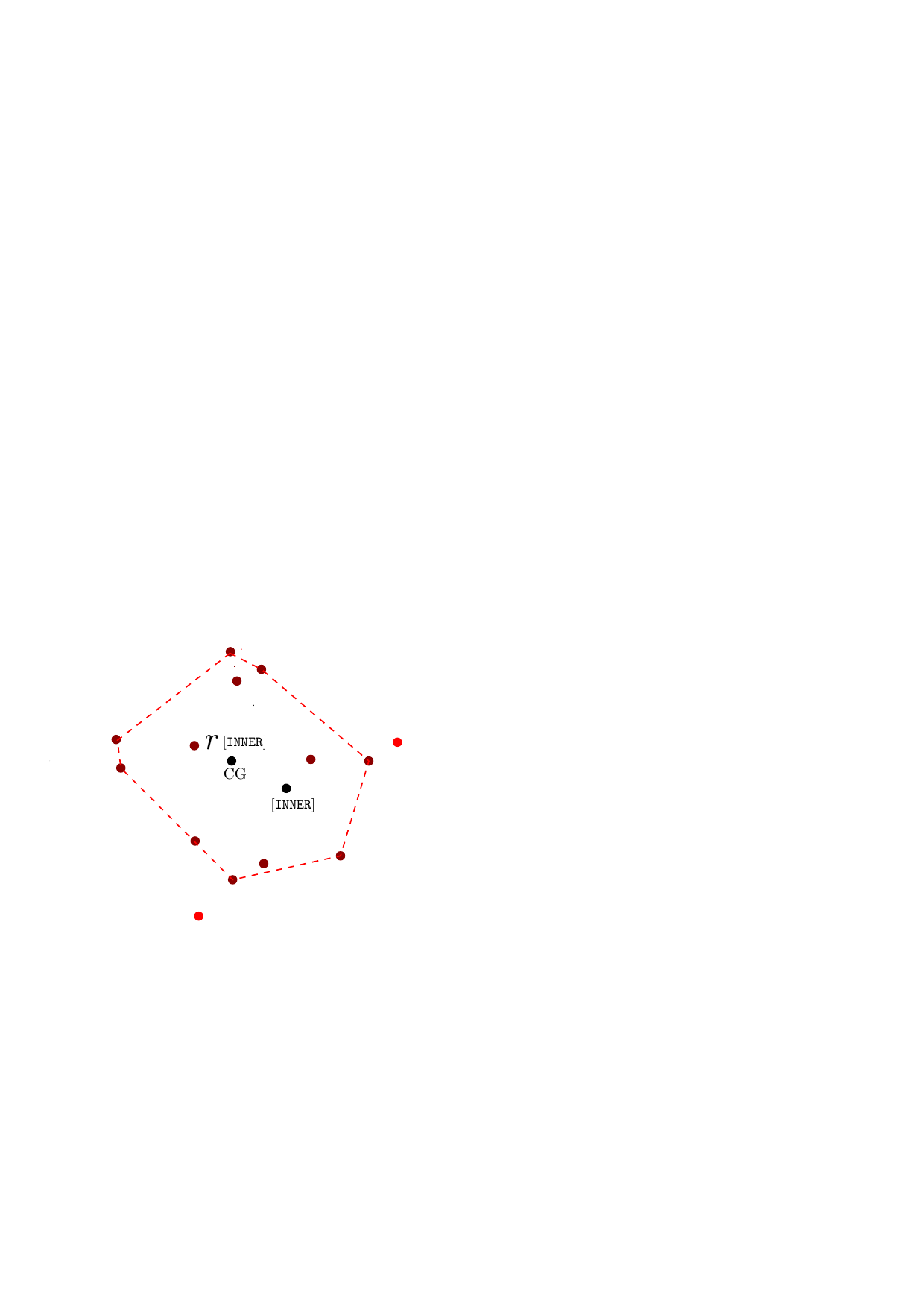}  
    \caption{All non-faulty robots move to CG, while the faulty ones cannot reach CG}
    \label{fig:movement_finish_to-CG}
    \end{minipage}
\end{figure}

\begin{figure}
    \centering
      \includegraphics[width=0.4\linewidth]{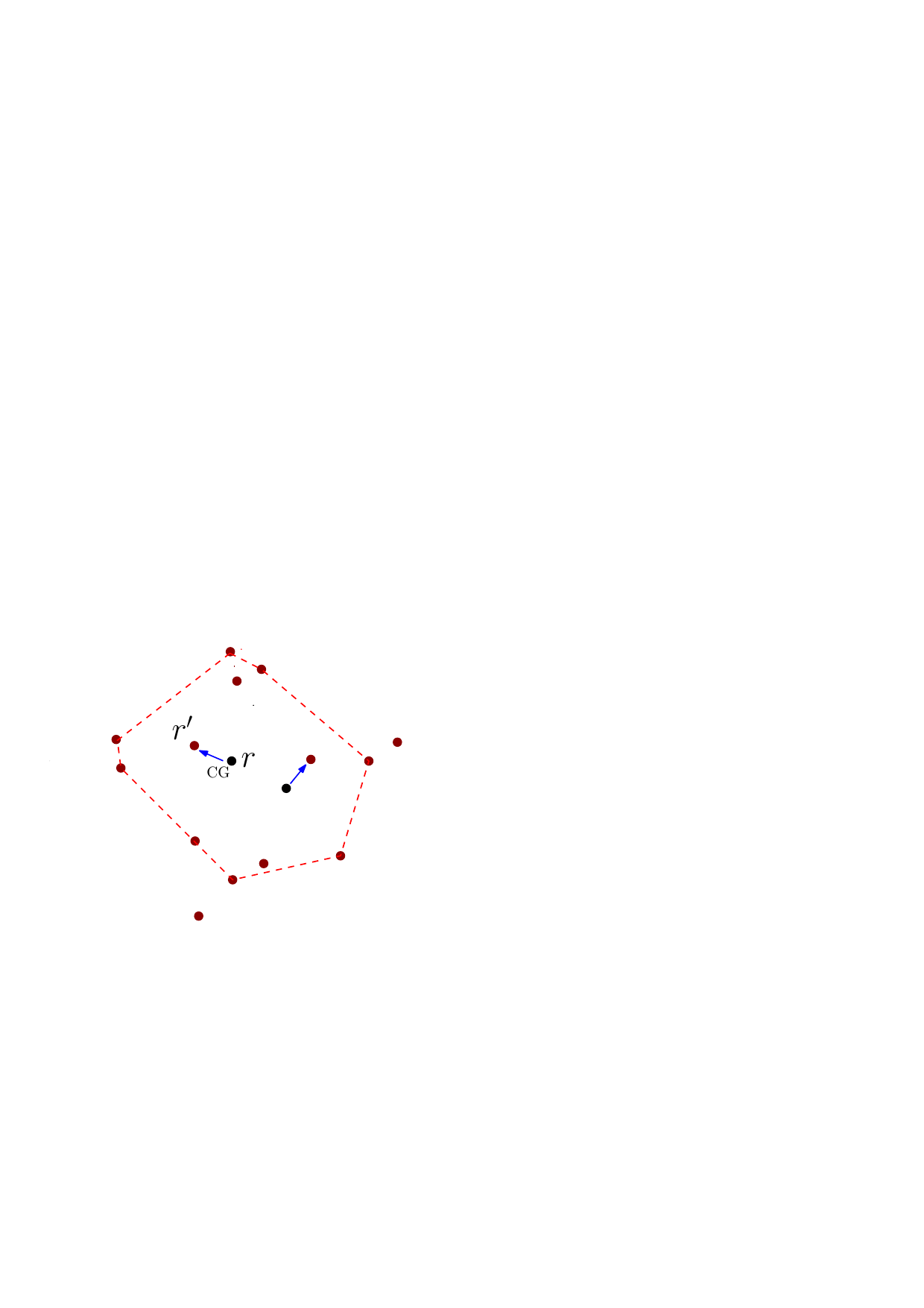}  
    \caption{Non-faulty robots move to nearest \texttt{FAULT-FINISH} robots (brown points) from the CG}
    \label{fig:CG-to-Terminate}
\end{figure}

\vspace{1mm}
    \noindent \underline{\textsc{Gathering:}} At this stage, two scenarios may arise. If there exists at least one \texttt{FAULT-FINISH}-colored robot in the configuration, the corner robots first move to their respective exterior visible areas and subsequently proceed to the CG of the convex hull formed by the \texttt{FAULT-FINISH} robots. If, instead, there is no \texttt{FAULT-FINISH} robot, the corner robots move directly to the CG of the convex hull of all visible robots.
    Due to asynchronous activation, some of the corner robots may commence their movement earlier than the others. In such cases, by reusing the colors from previous stages, those robots are redirected back to appropriate prior stages to maintain coordination.

    \begin{itemize}
        \item \textbf{$r.color=$ \texttt{
        MOVE1}:} If a \texttt{FAULT-FINISH} robot is visible to $r$ and $r$ is a corner robot having both of its neighbours with the color from the set $\{\texttt{FAULT-FINISH, MOVE1}\}$, it moves to a point on a point in $ExtVisibleArea(r, \mathcal{CH}_r)$ with the color \texttt{MOVE2}, as shown in Fig. \ref{fig:corner-to-CG}. 
        If no \texttt{FAULT-FINISH}-colored robot is visible to $r$ and $r$ is a corner robot with all robots on $\mathcal{CH}_r$ having the color \texttt{MOVE1}, it computes the CG of $\mathcal{CH}_r$ and moves to it with the color \texttt{INNER}. 
        Due to asynchronous activation, it is possible that some other robots change their color to \texttt{INNER} before $r$. In that case, $r$ changes its color to \texttt{OUTER} to avoid any potential miscalculation of the CG. 

        \item \textbf{$r.color=$ \texttt{MOVE2} and $r$ is a corner robot with one visible \texttt{FAULT-FINISH} robot:} 
        If there is an \texttt{INNER}-colored robot visible to $r$, potentially blocking its view, it switches its color to \texttt{OUTER}. 
        Otherwise, $r$ computes the CG of the convex hull made by the \texttt{FAULT-FINISH}-colored robots.
        If the CG is visible to $r$, it changes its color to \texttt{OFF} without any movement. 
        If instead, CG is not visible, $r$ moves to a point within $ExtVisibleArea(r, \mathcal{CH}_r)$ with the color \texttt{OFF}.

        \item \textbf{$r.color =$ \texttt{OFF} and $r$ is a corner robot with one visible \texttt{FAULT-FINISH} robot:} 
        If there is an \texttt{INNER}-colored robot visible to $r$, potentially blocking its view, it switches its color to \texttt{OUTER}. 
        If a \texttt{MOVE1}-colored robot is visible to $r$ within the interior of $\mathcal{CH}_r$, it still updates its color to \texttt{OUTER}.
        $r$ again computes the CG of the convex hull made by the \texttt{FAULT-FINISH} robots and moves to it with the color  \texttt{INNER}, as shown in Fig. \ref{fig:corner-to-CG}. 

\begin{figure}[h]
    \centering
    \begin{minipage}[b]{0.48\linewidth}
    \centering
        \includegraphics[width=0.86\linewidth]{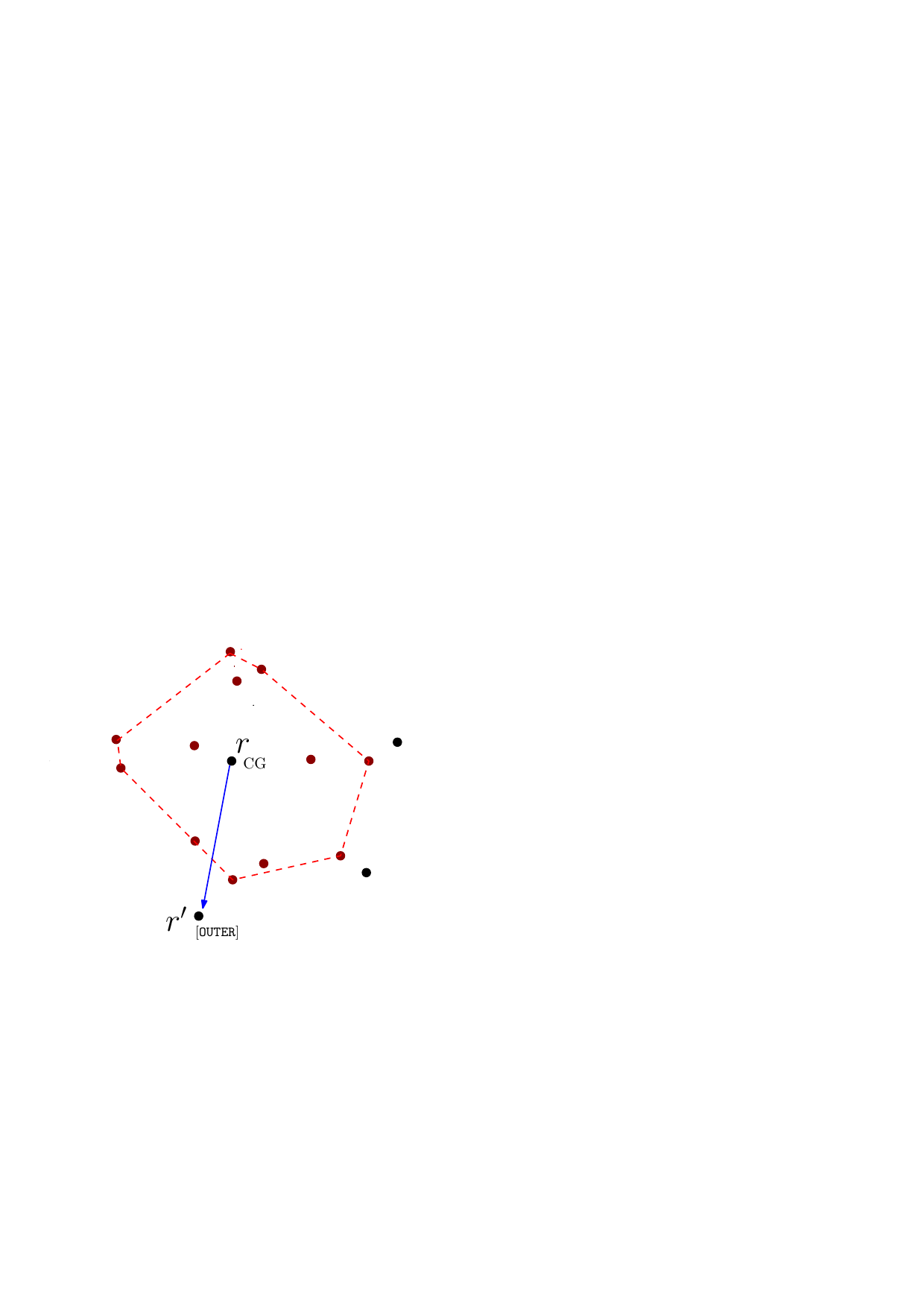}
        \caption{The non-faulty \texttt{INNER} robots moves to the \texttt{OUTER}-colored robot again from the CG}
        \label{fig:CG-to-outer}
    \end{minipage}\hfill
    \begin{minipage}[b]{0.48\linewidth}
    \centering
      \includegraphics[width=0.86\linewidth]{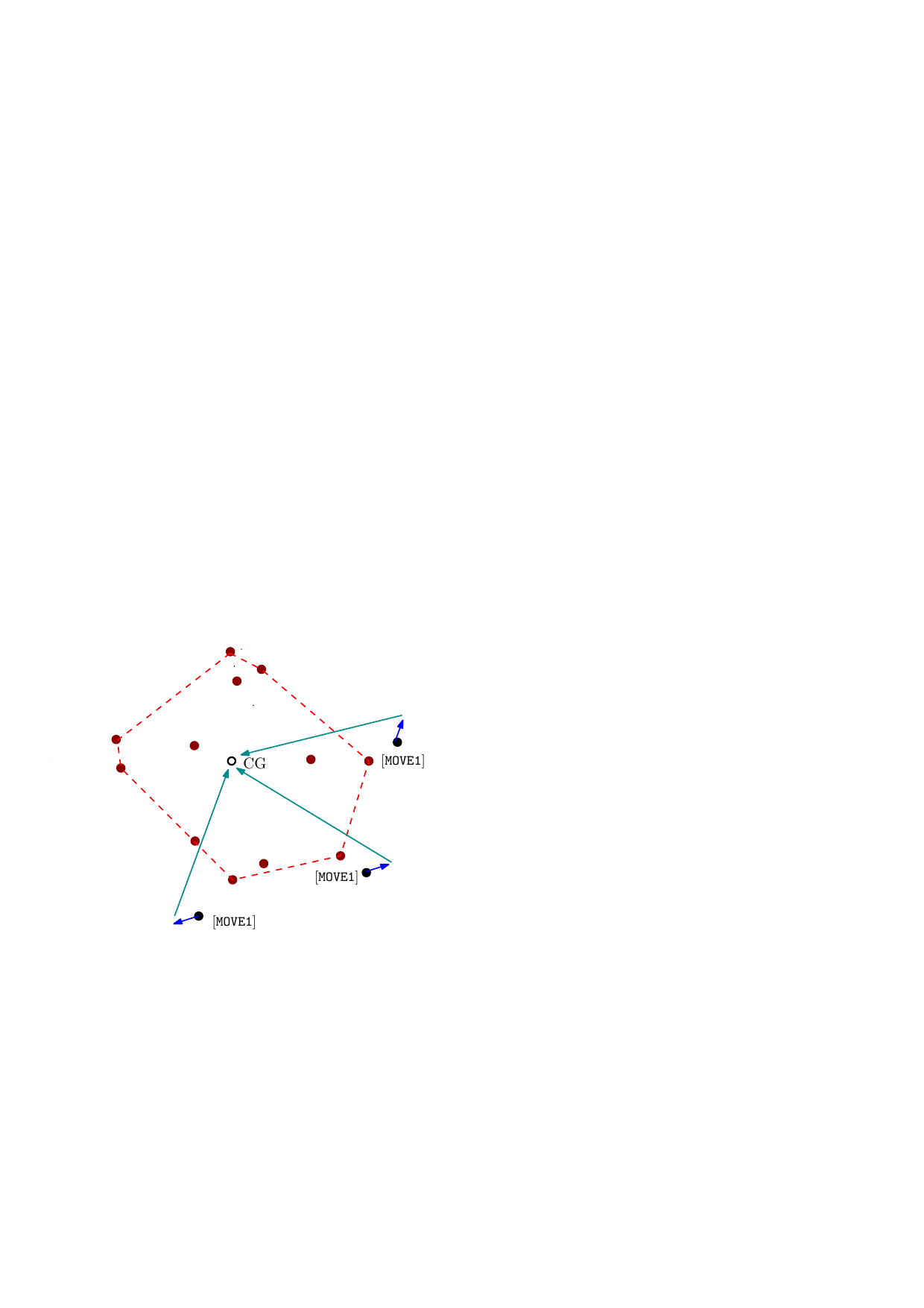}  
    \caption{\texttt{MOVE1}-colored corner robots again moves to the CG}
    \label{fig:outer-to-CG-again}
    \end{minipage}
\end{figure}

        \item \textbf{$r.color=$ \texttt{MOVE2} and $r$ is a corner robot with no visible \texttt{FAULT-FINISH} robot:} In this case, if $r$ finds itself collocated with \texttt{OUTER} or \texttt{MOVE1}, it adopts the color accordingly. 
        Otherwise, it changes its color to \texttt{FAULT} or \texttt{FAULT-FINISH} by following the steps in the previous stages.

        \item \textbf{$r.color =$ \texttt{INNER} and $r$ finds a \texttt{FAULT-FINISH} robot:} If $r$ is a corner or a boundary robot on $\mathcal{CH}_r$, it changes its color to  \texttt{FAULT}, indicating that it is unable to move to the CG of the \texttt{FAULT-FINISH} robots.
        If $r$ is an interior robot in $\mathcal{CH}_r$, it may happen that some robots successfully enter the interior of the convex hull, but do not reach the CG due to fault. In such case, there could be multiple positions of \texttt{INNER}-colored robots on the interior of the convex hull (see Fig. \ref{fig:movement_finish_to-CG}), among them at most one of the position is the position of non-faulty robots.
        As it is not possible to distinguish the non-faulty ones among such robots, we again move those robots to the outer layer by following the stage \textsc{Inner-To-Outer}, if any \texttt{OUTER}-colored robot exists (refer to Fig. \ref{fig:CG-to-outer} and \ref{fig:outer-to-CG-again}).
         By executing that stage, $r$ can either move to the position of an \texttt{OUTER}-colored robot, when it finds any such robot or changes its color to \texttt{FAULT}.

        \item \textbf{$r.color =$ \texttt{INNER} and $r$ finds no \texttt{FAULT-FINISH} robot:} With its current configuration, $r$ is either in the \textsc{Inner-To-Outer} stage or in the current stage \textsc{Gathering}.
        In this case, if $r$ is a corner or a boundary robot on $\mathcal{CH}_r$, it changes its color to \texttt{FAULT}. If $r$ is an interior robot on $\mathcal{CH}_r$, it follows the steps mentioned in the stage \textsc{Inner-To-Outer}.

        \item \textbf{$r.color =$ \texttt{FAULT}:} $r$ moves to the position of the nearest \texttt{FAULT-FINISH} or \texttt{FAULT}-colored robot (if exists) with the color \texttt{FAULT-FINISH} and terminates. 
        In this case, there could be one point where the non-faulty robots are gathered with the color \texttt{FAULT}, while other locations containing \texttt{FAULT}-colored robots have only faulty robots. Such a step ensures that the non-faulty robots gather at the position of the faulty robots and terminate (refer to Fig. \ref{fig:CG-to-Terminate}).  
    
    \end{itemize}

\noindent \textbf{Case 2 ($\mathcal{CH}$ is linear):}
We use the same set of colors as in Case 1 (nonlinear configuration).
In this case, the robots move in a manner that preserves the linearity of the configuration throughout.
Movement begins with the robots located at the endpoints of $\mathcal{CH}$, each moving toward the position of its neighbour.
For a robot $r$, $\mathcal{CH}_r^*$ is the (linear) convex hull of the robots excluding the \texttt{FAULT}-colored robots. $r$ classifies itself as \emph{terminal}, if it has only one neighbour on $\mathcal{CH}_r^*$, otherwise \emph{non-terminal}.
Upon activation, if $r$ is a terminal on $\mathcal{CH}_r^*$, it changes its color to \texttt{OUTER}, and to \texttt{INNER}, otherwise.
Furthermore, any robot that observes a robot colored \texttt{OFF} maintains the status quo.

\begin{itemize}
    \item \textbf{$r.color =$ \texttt{OUTER}}: Depending on the current color of the neighbouring robot $r_{Nbr_1}$ on $\mathcal{CH}_r^*$, the terminal robot $r$ does the following.
    \begin{itemize}
        \item $r_{Nbr_1}.color = $ \texttt{INNER}: $r$ moves to the position of $r_{Nbr_1}$ after updating its color to \texttt{MOVE1}.

        \item $r_{Nbr_1}.color =$ \texttt{OUTER}: $r$ moves to the midpoint of $\overline{r r_{Nbr_1}}$ with the color \texttt{MOVE2}.
    
        \item $r_{Nbr_1}.color = $ \texttt{MOVE2}: In this case, $r$ maintains the status quo until $r_{Nbr_1}$ updates its color to \texttt{FAULT-FINISH}.

        \item $r_{Nbr_1}.color = $ \texttt{FAULT-FINISH}: $r$ moves to the position of $r_{Nbr_1}$ with the color \texttt{FAULT-FINISH} and terminates.
    \end{itemize}
    
    \item \textbf{$r.color =$ \texttt{MOVE1}:} In case of $r$ finding itself collocated with an \texttt{INNER}-colored robot, it changes its color to \texttt{INNER}.
    Otherwise, it switches to \texttt{FAULT}.
    
    \item $r.color = $ \texttt{INNER}: If $r$ becomes a terminal robot on $\mathcal{CH}_r^*$ and $r_{Nbr_1}.color = \texttt{INNER}$, it changes its color to \texttt{OUTER}. On the other hand, if $r$ is terminal and $r_{Nbr_1}.color = $ \texttt{OUTER}, it does not change its color. In case of both neighbours of $r$ have the color \texttt{FAULT}, $r$ updates its color to \texttt{FAULT-FINISH} and moves to the location of one of the neighbours, and terminates.

    \item $r.color = $ \texttt{MOVE2}: Without any movement, $r$ changes its color to \texttt{FAULT-FINISH}.

    \item $r_{Nbr_1}.color = $ \texttt{FAULT-FINISH}: If $r$ finds a \texttt{FAULT-FINISH} neighbouring robot, $r$ moves to the position of it with the color \texttt{FAULT-FINISH} and terminate. 
    If $r$ finds an \texttt{OUTER}-colored robot, it stays put.
    However, if $r$ cannot find any \texttt{FAULT-FINISH} or \texttt{OUTER}-colored robot, but a \texttt{FAULT}-colored robot, it moves to the position of such a robot, maintaining the current color and terminates.

\end{itemize}

\subsection{Analysis of the Algortihm in 7-$\mathcal{LUMI}$}
\label{app-subsec:analysis7lumi}
Here we analyse the correctness and the time complexity of the above algorithm.

\begin{restatable}{lem}{LemmaROBOTCategorisation}
\label{lemma:robot-categorize-N-gather}
    In the stage \textsc{Robot-Classification}, all the robots correctly color themselves as \texttt{OUTER} or \texttt{INNER} in $1$ epoch.
\end{restatable}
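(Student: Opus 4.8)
The plan is to split the claim into a \emph{geometric correctness} part and a \emph{timing} part. For correctness I must show that the purely local test each robot runs — ``am I an interior robot of $\mathcal{CH}_r$?'' — gives the globally intended answer, i.e.\ a robot ends this stage colored \texttt{INNER} exactly when it lies strictly inside the global hull $\mathcal{CH}$ and \texttt{OUTER} exactly when it lies on the boundary layer $\mathcal{CH}^0$. For timing I must show every robot has committed to a color by the end of the first epoch. A preliminary observation makes the timing part clean: an \texttt{OFF} robot's classification is a function of the \emph{positions} in $\mathcal{V}_r$ only, not of any colors, so on its first activation it can — and does — commit to \texttt{OUTER} or \texttt{INNER}; the ``wait while some robot is still \texttt{OFF}'' clause serves only to keep robots that have \emph{already left} this stage (now colored \texttt{OUTER}/\texttt{INNER}) from prematurely starting \textsc{Inner-To-Outer}, and therefore never stalls the classification of an \texttt{OFF} robot.

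For the geometric equivalence I would argue the two directions separately, using only that $\mathcal{V}_r$ is a subset of the set of all robots. If $r$ lies on $\mathcal{CH}^0$, pick a supporting line $L$ of $\mathcal{CH}$ through $r$; then all robots, and in particular all robots of $\mathcal{V}_r$, lie in one closed half-plane $\overline{\mathcal{H}}_L^i$, so by definition $r$ is a corner or a boundary robot on $\mathcal{CH}_r$ and hence colors itself \texttt{OUTER}. Conversely, suppose $r$ is strictly interior to $\mathcal{CH}$; then for \emph{every} line $L$ through $r$ each open half-plane $\mathcal{H}_L^1,\mathcal{H}_L^2$ contains at least one robot. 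Here I invoke the standard ``nearest robot on a ray'' argument: taking a robot $q$ with $q\in\mathcal{H}_L^1$, the robot nearest to $r$ on the ray from $r$ through $q$ is visible to $r$ (any robot strictly between them would be nearer $r$) and still lies in the open half-plane $\mathcal{H}_L^1$; symmetrically for $\mathcal{H}_L^2$. Thus no line through $r$ can witness the corner/boundary condition on $\mathcal{CH}_r$, so $r$ is an interior robot of $\mathcal{CH}_r$ and colors itself \texttt{INNER}. Collocated robots share a snapshot and run identical computations, so a multiplicity point is classified consistently and the same argument applies to it verbatim. (For the linear Case~2, the analogous test \emph{terminal} vs.\ \emph{non-terminal} — one vs.\ two neighbours on $\mathcal{CH}_r^*$, which in the first epoch equals $\mathcal{CH}_r$ — is again decided by geometry alone and handled identically.)

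For the timing part, by the definition of an epoch every robot completes at least one full LCM cycle within one epoch; at its first activation such a robot is still \texttt{OFF}, evaluates its snapshot, and by the rule of the stage immediately adopts \texttt{OUTER} or \texttt{INNER} — this holds even for a robot that has already suffered a mobility fault, since its light still functions. Combined with the previous paragraph, after one epoch no robot is \texttt{OFF} and every robot's color is correct, which is the assertion. I expect the main obstacle to be the geometric step: making the ``nearest robot on a ray'' argument fully rigorous so that obstructed visibility genuinely cannot make a globally interior robot appear to be a boundary robot locally, together with a careful check of the degenerate sub-cases (multiplicity points, and robots collocated with $r$, which contribute nothing to either open half-plane and are thus harmless). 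Verifying that the waiting clause never stalls an \texttt{OFF} robot — i.e.\ pinning down the intended reading of the stage — is then routine.
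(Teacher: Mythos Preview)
Your proposal is correct and follows the same approach as the paper's proof --- local classification on $\mathcal{CH}_r$ coincides with global classification on $\mathcal{CH}$, and one activation per robot suffices --- but the paper compresses the geometric equivalence into a single unjustified sentence, whereas you supply the supporting-line and nearest-on-ray arguments that make it rigorous under obstructed visibility. Your digression on the linear Case~2 is extraneous (the lemma concerns the \textsc{Robot-Classification} stage of Case~1), but harmless.
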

\begin{proof}
    By construction, a corner, boundary or interior robot $r$ on its local convex hull $\mathcal{CH}_r$ is also a corner, boundary or interior on $\mathcal{CH}$.  Upon detection, a robot updates its
    color accordingly, which requires only 1 epoch. 
\end{proof}

\begin{restatable}{lem}{LemmaStageInnerToGather}
\label{lemma:stage-Inner-to-outer}
    In the stage \textsc{Inner-To-Outer}, all the non-faulty \texttt{INNER}-colored robots move to $\mathcal{CH}^0$ in $O(N_1)$ epochs, where $N_1$ is the total number of interior robots on $\mathcal{CH}$.
\end{restatable}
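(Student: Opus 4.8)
The plan is to combine a topological invariant --- that the outermost layer $\mathcal{CH}^0$ stays frozen throughout this stage --- with a potential-function argument on the number of robots not yet colored \texttt{OUTER}. By Lemma~\ref{lemma:robot-categorize-N-gather}, after one epoch every robot is \texttt{OUTER} or \texttt{INNER}, and by the local-versus-global correspondence the \texttt{OUTER} robots are exactly the $N-N_1$ robots on $\mathcal{CH}^0$ while the \texttt{INNER} robots are exactly the $N_1$ interior ones. First I would show that during this stage no \texttt{OUTER} robot moves or changes color, so $\mathcal{CH}^0$ is fixed. An \texttt{OUTER} robot acts only by starting \textsc{Confirmation-Signal-To-Outers}/\textsc{Boundary-To-Corner}, which it does only when it sees no robot colored \texttt{INNER}/\texttt{MOVE1}/\texttt{MOVE2}; since every non-\texttt{OUTER} robot in this stage sits strictly inside $\mathcal{CH}^0$, and a corner of a convex hull always sees the nearest strictly-interior robot along any inward ray (no hull-boundary robot can lie strictly between a corner and a strictly-interior robot), a corner of $\mathcal{CH}^0$ keeps seeing such a robot as long as any remains; a boundary \texttt{OUTER} robot acts only on a \texttt{MOVE1}/\texttt{FAULT} neighbour, which cannot appear on $\mathcal{CH}^0$ before corners move. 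Hence the \texttt{OUTER} positions and colors are frozen.

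Second, for correctness I claim every non-faulty \texttt{INNER} robot eventually becomes \texttt{OUTER} and never turns \texttt{FAULT}. If a non-faulty $r$ reaches color \texttt{MOVE1}, it has successfully moved into $IntVisibleArea(r,\mathcal{CH}_r^*)$, so by Lemma~\ref{lemma:visiblearea} it remains a corner of $\mathcal{CH}_r^*$ and sees all stationary robots, in particular all \texttt{OUTER} robots on $\mathcal{CH}^0$; since $\mathcal{CH}_r^*$ is contained in the interior of $\mathcal{CH}^0$, at least one \texttt{OUTER} robot lies in the open half-plane $H_{L_r}$ pointing away from $r$'s neighbours, so $r$ does not take the ``no \texttt{OUTER} visible $\to$ \texttt{FAULT}'' branch, adopts \texttt{MOVE2}, and moves onto an \texttt{OUTER} position, where on its next activation it recolors to \texttt{OUTER}. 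A faulty robot that reaches \texttt{MOVE1} fails the move and, within $O(1)$ epochs, transitions \texttt{MOVE1}$\to$\texttt{FAULT} or \texttt{MOVE1}$\to$\texttt{MOVE2}$\to$\texttt{FAULT}; in either case it leaves the set of non-\texttt{OUTER} robots.

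Third, progress and running time. Let $B(t)$ be the set of robots not colored \texttt{OUTER}; after the classification epoch $|B|=N_1$, and $|B|$ is non-increasing since a robot only ever leaves $B$ (by becoming \texttt{OUTER}). It suffices to show that while $B\neq\emptyset$ some robot leaves $B$ within $O(1)$ epochs, which yields the $O(N_1)$ bound. If some robot is currently \texttt{MOVE1} or \texttt{MOVE2}, the argument above shows it resolves to \texttt{OUTER} or \texttt{FAULT} within $O(1)$ epochs; otherwise the non-\texttt{FAULT} members of $B$ are all \texttt{INNER}, a corner $c$ of their convex hull is a corner of its local hull $\mathcal{CH}_c^*$ (a vertex of a polygon is a vertex of any sub-polygon through it) and sees no \texttt{MOVE1}/\texttt{MOVE2} robot, hence is eligible, so on $c$'s next activation it becomes \texttt{MOVE1} and then resolves as above. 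The step needing the most care is the asynchronous interleaving: several interior robots may enter \texttt{MOVE1} before any of them reaches its interior visible area, and the ``wait on \texttt{MOVE1}/\texttt{MOVE2}'' rule then makes them resolve in a bounded queue-like order rather than simultaneously. I would handle this by noting that once any robot sits in its interior visible area it is seen (Lemma~\ref{lemma:visiblearea}) by every stationary \texttt{INNER} robot, so no further robot can become eligible; hence at most $N_1$ robots ever become \texttt{MOVE1} in this stage, each occupies the ``critical'' role blocking others for only $O(1)$ epochs, and at most one idle epoch separates consecutive batches, so the cumulative cost is $O(N_1)$. Premature firing of later stages by \texttt{FAULT}/\texttt{OUTER} robots under obstructed visibility must likewise be excluded, again via the ``a corner always sees an interior robot'' observation; the remaining geometric claims reduce to Lemma~\ref{lemma:visiblearea} and elementary convexity.
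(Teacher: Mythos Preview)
Your approach is essentially the paper's: both argue that an eligible interior robot resolves (to \texttt{OUTER} or \texttt{FAULT}) within $O(1)$ epochs via the chain \texttt{INNER}$\to$\texttt{MOVE1}$\to$\texttt{MOVE2}$\to$\texttt{OUTER}/\texttt{FAULT}, that whenever one resolves a new eligible robot appears, and that the adversary can serialize this into an $O(N_1)$ sequence. You add two useful ingredients the paper leaves implicit: the argument that \texttt{OUTER} robots on $\mathcal{CH}^0$ stay frozen during this stage, and an explicit treatment of why a non-faulty robot reaching \texttt{MOVE1} always sees an \texttt{OUTER} target and hence cannot fall into the \texttt{FAULT} branch.

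There is one slip you should fix. You define $B(t)$ as the set of robots not colored \texttt{OUTER} and then write that a faulty robot transitioning to \texttt{FAULT} ``leaves the set of non-\texttt{OUTER} robots''; it does not, since \texttt{FAULT}$\neq$\texttt{OUTER}. Your own next paragraph (``the non-\texttt{FAULT} members of $B$'') shows you understand the real invariant. Simply redefine the potential as the set of robots colored \texttt{INNER}, \texttt{MOVE1}, or \texttt{MOVE2}; then every resolution, whether to \texttt{OUTER} or to \texttt{FAULT}, genuinely decreases the potential, and the $O(N_1)$ bound follows cleanly.
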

\begin{proof}
    After the stage \textsc{Robot-Classification}, an \texttt{INNER}-colored robot $r$ correctly detects itself as the corner of $\mathcal{CH}_r^*$, the local convex hull of all the visible robots except the ones with \texttt{OUTER} or \texttt{FAULT}. 
    Such a robot $r$ classifies itself as eligible if it does not find any \texttt{MOVE1} or \texttt{MOVE2}-colored robots.
    In that case, $r$ changes its color to \texttt{MOVE1} and moves to a point in $IntVisibleArea(r, \mathcal{CH}_r^*)$ so that it can see the nearest robot colored \texttt{OUTER} on the outermost layer, ensured by Lemma \ref{lemma:visiblearea}. 
    Thereafter, $r$ moves to the nearest \texttt{OUTER}-colored robot with the color \texttt{MOVE2}, only if there are no \texttt{MOVE2} robots within its visibility. 
    The transition of an eligible $r$ from \texttt{INNER} to \texttt{OUTER} or \texttt{FAULT} (through \texttt{MOVE1} and \texttt{MOVE2}) requires $3$ epochs in total.
    If, instead, there is a \texttt{MOVE2} robot visible to $r$, it waits for that robot to complete its movement and change the color to \texttt{OUTER} or \texttt{FAULT}. The \texttt{MOVE2} robot requires at most $1$ epochs to update its color.
    Therefore, when an \texttt{INNER}-colored eligible robot turns its color either to \texttt{OUTER} or \texttt{FAULT} within $O(1)$ epochs, at least a new eligible robot emerges.
    This implies that an \texttt{INNER}-colored robot eventually becomes eligible. 
    In the {\async} model, the scheduler may activate robots in an order such that some eligible robots execute their part of the algorithm earlier, while others remain inactive or wait until the former complete their actions in $O(1)$ epochs. Hence, the stage \textsc{Inner-To-Outer} can be very sequential in the worst case, leading to $O(N_1)$ runtime.
\end{proof}

\begin{restatable}{lem}{LemmaStageCOnfirmationToCorners}\label{lemma:stage_confirmation_to_outers}
    In the stage \textsc{Confirmation-Signal-To-Outers}, all the \texttt{FAULT}-colored robots change their color to \texttt{FAULT-FINISH} in $O(f_1)$ epochs, where $f_1 (\leq f)$ is the total number of \texttt{FAULT}-colored robots on $\mathcal{CH}$ in this current stage. 
\end{restatable}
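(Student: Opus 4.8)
The plan is to mimic the ``innermost-layer-first wave'' used to analyse the earlier stages, now restricted to the robots that already carry the colour \texttt{FAULT}. First I would pin down when the stage effectively starts: by Lemma~\ref{lemma:stage-Inner-to-outer} the stage \textsc{Inner-To-Outer} terminates in $O(N_1)$ epochs, after which no robot is coloured \texttt{INNER}, \texttt{MOVE1}, or \texttt{MOVE2} --- these colours only disappear once no \texttt{INNER} robot remains, and they cannot reappear from \textsc{Boundary-To-Corner} while any \texttt{FAULT} robot is still present, because an \texttt{OUTER} robot adopts \texttt{MOVE1} only when it sees only robots coloured \texttt{OUTER} or \texttt{FAULT-FINISH}, and since obstructed visibility is symmetric a \texttt{FAULT} robot that could be stalled by such a signal would itself have been visible to that \texttt{OUTER} robot, contradicting the guard. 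Hence from this point the configuration contains only \texttt{OUTER} robots on $\mathcal{CH}^0$ together with \texttt{FAULT}/\texttt{FAULT-FINISH} robots in the interior, nobody moves, and every activated \texttt{FAULT} robot passes its guard and runs the layer computation.

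Next I would argue by induction on the layer index that the \texttt{FAULT} robots turn \texttt{FAULT-FINISH} one occupied layer at a time, each layer in $O(1)$ epochs. For the base case, let $r$ be a \texttt{FAULT} robot lying on the innermost global layer that still carries a \texttt{FAULT} colour; by minimality every robot strictly inside that layer is \texttt{FAULT-FINISH} (or there is none), and every \texttt{OUTER} robot lies on $\mathcal{CH}^0$, hence not strictly inside it. The crucial geometric point is that the layer $\mathcal{CH}_r^i$ that $r$ reconstructs from its possibly incomplete view $\mathcal{V}_r$ is consistent with the global hierarchy in the only sense needed here: passing to a visible subset can only move $r$ to an equal-or-more-outer layer and cannot place into the interior of $r$'s reconstructed layer any robot that is not already in the interior of its global layer --- the convex-layer analogue of the corner/boundary/interior consistency used in Lemma~\ref{lemma:robot-categorize-N-gather}. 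Thus $r$ sees inside its local layer's hull only \texttt{FAULT-FINISH} robots (or nothing), so on its next activation it switches to \texttt{FAULT-FINISH}, which by fairness happens within $O(1)$ epochs. The inductive step is identical: once every \texttt{FAULT} robot on the innermost $k$ occupied layers is already \texttt{FAULT-FINISH}, the same argument applied to a \texttt{FAULT} robot on the next occupied layer shows it converts within $O(1)$ further epochs.

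Since at most $f_1$ robots are coloured \texttt{FAULT} in this stage, they occupy at most $f_1$ distinct layers, so the wave finishes within $O(f_1)$ epochs, which proves the lemma. I expect the main obstacle to be making the geometric-consistency step watertight under obstructed visibility: a \texttt{FAULT} robot may honestly perceive another \texttt{FAULT} robot inside its locally reconstructed layer, so one must show that the relation ``$r$ is stalled because it sees a \texttt{FAULT} robot inside its local layer'' is well-founded (no cyclic stalling), and this is precisely where the monotonicity of convex layers under point deletion, together with the symmetry of mutual visibility, does the real work; incorrect \emph{early} conversions are harmless for this statement since they only accelerate the wave, so the entire difficulty is confined to ruling out a permanently stalled \texttt{FAULT} robot.
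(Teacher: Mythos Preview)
Your proposal is correct and follows essentially the same approach as the paper: both argue an innermost-to-outermost wave over the convex layers formed by the \texttt{FAULT} robots, with each layer converting in $O(1)$ epochs and at most $f_1$ layers in total. The paper's own proof is much terser and simply asserts the layer-by-layer propagation without addressing the guard-interference and obstructed-visibility consistency issues you carefully flag; your extra care on those points is not a different route but a more rigorous execution of the same idea.
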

\begin{proof}
    There could be at most $f_1$ many \texttt{FAULT}-colored robots located within the interior region of $\mathcal{CH}$. 
    In the worst case, these robots can form $O(f_1)$ many layers. 
    Among these layers, the robots on the innermost layer first change their color to \texttt{FAULT-FINISH} in just 1 epoch. 
    Subsequently, the robots in the next layer update their color to \texttt{FAULT-FINISH} in the following epoch.
    This layer-by-layer propagation continues outward until all \texttt{FAULT}-colored robots have transitioned to \texttt{FAULT-FINISH}, resulting in $O(f_1)$ epochs in total.
\end{proof}

\begin{restatable}{rem}{RemarkConfirmationSignalToOuters}
   The stage \textsc{Confirmation-Signal-To-Outers} ensures that if an \texttt{OUTER}-colored robot $r$ finds the interior of $\mathcal{CH}_r^*$ empty or occupied solely by \texttt{FAULT-FINISH} robots, it correctly infers that all the non-faulty interior robots of $\mathcal{CH}$ have already relocated to $\mathcal{CH}^0$.
\end{restatable}

\begin{restatable}{lem}{LemmaStageBoundaryToCorner}
    \label{lemma:stage-boundary-to-corner}
    In the stage \textsc{Boundary-To-Corner}, all the non-faulty \texttt{OUTER}-colored boundary robots on $\mathcal{CH}^0$ move to the corners of $\mathcal{CH}^0$ in $O(N_2)$ epochs, where $N_2$ is the total number of boundary robots on $\mathcal{CH}$.
\end{restatable}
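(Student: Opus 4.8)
The plan is to mirror the proof of Lemma~\ref{lemma:stage-Inner-to-outer}, exploiting the same local/global correspondence already used there: a robot that is a corner (resp.\ boundary) of its local hull $\mathcal{CH}_r$ is a corner (resp.\ boundary) of the global hull $\mathcal{CH}$. By Lemma~\ref{lemma:stage_confirmation_to_outers} and the preceding remark, this stage starts from a configuration in which every robot is colored \texttt{OUTER} or \texttt{FAULT-FINISH}, with all \texttt{FAULT-FINISH} robots strictly inside $\mathcal{CH}$; hence every \texttt{OUTER} robot can decide from its snapshot whether it is a corner or a boundary robot of $\mathcal{CH}^0$. I would first argue that each corner robot, on its first activation in this stage, switches \texttt{OUTER}$\to$\texttt{MOVE1}, and that the algorithm never moves a \texttt{MOVE1}-colored corner; thus each corner of $\mathcal{CH}^0$ becomes a fixed \emph{beacon} for the rest of the stage, and no boundary robot's target ever changes position.

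Next I would fix one boundary robot $r$ and show it reaches a corner of $\mathcal{CH}^0$ (or turns \texttt{FAULT-FINISH}) within $O(1)$ epochs of becoming \emph{active}, i.e.\ of acquiring a \texttt{MOVE1}- or \texttt{FAULT}-colored neighbour on $\mathcal{CH}_r$. \textbf{Sub-case (i):} $r$ has a \texttt{MOVE1} neighbour. By the definition of neighbour on $\mathcal{CH}_r$, the segment $\overline{rr_{Nbr_1}}$ is a hull edge with no robot on it, so $r$ sees that corner; $r$ adopts \texttt{MOVE2} and walks to it. If the move completes, $r$ is collocated with a \texttt{MOVE1} robot and is absorbed into the corner; if $r$ crashes en route, at its next activation it still sees itself as a non-collocated boundary robot of $\mathcal{CH}_r$, detects the inconsistency, and transitions \texttt{MOVE2}$\to$\texttt{FAULT}$\to$\texttt{FAULT-FINISH} exactly as in \textsc{Confirmation-Signal-To-Outers}. \textbf{Sub-case (ii):} the chain from $r$ toward the nearest corner is blocked by a crashed boundary robot, so $r$'s relevant neighbour is \texttt{FAULT}-colored. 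Then $r$ routes around it with color \texttt{OFF}: to the nearest corner of the hull of visible \texttt{FAULT-FINISH} robots if one is visible, otherwise to the interior point $t_r$ with $\overline{rt_r}\perp\overline{r_1r_2}$ and $d(r,t_r)=\frac{1}{2}\,d(r,\overleftrightarrow{r_1r_2})$, where $r_1,r_2$ are the nearest robots on the two sides of $\mathcal{CH}_r$ adjacent to $r$'s side. I would verify that $t_r$ lies strictly inside $\mathcal{CH}$ and so close to the boundary that $r$ still sees the \texttt{MOVE1} corners afterward; then in a later LCM cycle $r$ re-enters sub-case (i) as an interior robot that sees a \texttt{MOVE1} corner with a \texttt{FAULT} neighbour, and moves onto that corner with \texttt{MOVE2}. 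In every outcome, one activation-chain of length $O(1)$ either places $r$ at a corner or makes it \texttt{FAULT-FINISH}, and then the next boundary robot along $\mathcal{CH}^0$ toward a corner becomes active.

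Finally, for the epoch bound I would invoke the waiting rules (a robot with an \texttt{OFF} neighbour, or a corner/interior robot with an \texttt{OUTER} neighbour, stays put) to show the process is well-serialized: on each side of $\mathcal{CH}^0$, boundary robots are consumed one $O(1)$-epoch step at a time, working inward from both adjacent corners, crashed ones converting to \texttt{FAULT-FINISH} in the process, and no deadlock arises. Summing over the at most $N_2$ boundary robots — which the adversarial {\async} schedule may force to act strictly one after another — yields $O(N_2)$ epochs, after which every non-faulty \texttt{OUTER} boundary robot is collocated with a corner of $\mathcal{CH}^0$ (and every faulty one is \texttt{FAULT-FINISH}), as claimed. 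The step I expect to be the main obstacle is sub-case (ii): proving that the detour through $t_r$ (or through the \texttt{FAULT-FINISH} hull) never permanently occludes $r$'s view of a \texttt{MOVE1} corner and never induces a cyclic wait under {\async} — in particular that two boundary robots moving simultaneously cannot become mutually occluding, which is precisely what the constraint defining $t_r$ is meant to rule out and which must be checked against every collinearity configuration the adversary can engineer.
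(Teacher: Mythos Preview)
Your proposal is correct and follows essentially the same approach as the paper: corners become \texttt{MOVE1} beacons, adjacent boundary robots move to them via \texttt{MOVE2}, crashed ones turn \texttt{FAULT}, and the next boundary robot along the side detours through the interior with color \texttt{OFF} to regain sight of a \texttt{MOVE1} corner, with the whole process serialized along each side of $\mathcal{CH}^0$ to yield the $O(N_2)$ bound. One small correction: the \texttt{FAULT}$\to$\texttt{FAULT-FINISH} transition in this stage is not ``exactly as in \textsc{Confirmation-Signal-To-Outers}'' (that stage is layer-based for interior robots) but instead propagates along the boundary via the neighbour rule stated in the \textsc{Boundary-To-Corner} description---this is tangential, however, since the lemma's claim concerns only the non-faulty boundary robots.
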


\begin{proof}
    At the beginning of this stage, a corner robot $r$ changes its color to \texttt{MOVE1} to signal the adjacent boundary robots, let one of them be $r_{Nbr_1}$.
    Thereafter, $r_{Nbr_1}$ moves toward $r$ with \texttt{MOVE2}.
    When non-faulty, $r_{Nbr_1}$ successfully executes this movement and adopts \texttt{MOVE1}.
    Alternatively, if it is faulty, $r_{Nbr_1}$ adopts \texttt{FAULT} if its other neighbour $r' \neq r$ has the color \texttt{OUTER}. 
    When $r'$ observes $r_{Nbr_1}.color =$ \texttt{FAULT}, it changes its color to \texttt{OFF} and moves to the nearest \texttt{FAULT-FINISH} robot or to the interior of $\mathcal{CH}_r$ depending of the existence of \texttt{FAULT-FINISH} robots in the interior of $\mathcal{CH}_r$.
    The movement of the non-faulty robot $r'$ enables it to see the adjacent corner robots colored \texttt{MOVE1} and hence it executes another movement to such a corner with the color \texttt{MOVE2}.
    If instead, $r'$ is faulty and cannot move, $r'$ remains stationary on the boundary of $\mathcal{CH}_r$ and changes its color to \texttt{FAULT} from \texttt{OFF}.
    The similar strategy is sequentially adopted by other boundary robots.
    Thus, all non-faulty boundary robots on a specific side of $\mathcal{CH}^0$ move to their adjacent corners after one by one, starting from the ones adjacent to the corner robots, which requires $O(N_2)$ epochs in the worst case as one of the sides of $\mathcal{CH}^0$ may contain $O(N_2)$ many robots.
\end{proof}

\begin{restatable}{rem}{RemarkMOVE1ConfirmationFromFault}
    A \texttt{MOVE1}-colored corner robot concludes that the non-faulty boundary robots from its adjacent sides have all relocated to the corner positions if it finds its neighbouring robots with the color \texttt{FAULT-FINISH}. This follows from the fact that a robot turns its color to \texttt{FAULT-FINISH} only when both of its neighbours have the color in $\{\texttt{MOVE1}, \texttt{MOVE2}, \texttt{FAULT}\}$ for the first time. Subsequently, the color \texttt{FAULT-FINISH} propagates along the boundary towards the adjacent corners.
\end{restatable}

\begin{restatable}{lem}
{LemmastageGathering}
    \label{lemma:stage-gathering}
    In the stage \textsc{Gathering}, all the non-faulty robots eventually terminate and gather at a single point in $O(N)$ epochs.
\end{restatable}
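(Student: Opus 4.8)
The plan is to first pin down the configuration that the four preceding stages hand to \textsc{Gathering}. Combining Lemmas~\ref{lemma:robot-categorize-N-gather}, \ref{lemma:stage-Inner-to-outer}, \ref{lemma:stage_confirmation_to_outers}, and \ref{lemma:stage-boundary-to-corner} with the two remarks that follow them, I would argue that when \textsc{Gathering} starts every non-faulty robot occupies a corner of $\mathcal{CH}^0$ with color \texttt{MOVE1}, while every faulty robot carries color \texttt{FAULT-FINISH} and sits in the interior of $\mathcal{CH}$. Since faulty robots never move, the convex hull $\mathcal{Q}$ of the \texttt{FAULT-FINISH} robots and its center of gravity $g$ are fixed for the rest of the execution, and $g$ (respectively, the CG of all visible robots when no faulty robot exists) is the point the non-faulty robots converge to. The analysis then splits into scenario (a), at least one \texttt{FAULT-FINISH} robot present, and scenario (b), none.

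\textbf{Scenario (a): serialized to-and-fro and a monotone measure.} The heart of the proof is that the color rules make the attempts to reach $g$ essentially sequential, and that a suitable potential strictly drops per attempt. A \texttt{MOVE1} corner robot advances only when both hull neighbours lie in $\{\texttt{MOVE1},\texttt{FAULT-FINISH}\}$, and the moment any robot turns \texttt{INNER} on its way to $g$, every other robot still in $\{\texttt{MOVE2},\texttt{OFF}\}$ reverts to \texttt{OUTER} (to protect the CG computation), after which each \texttt{OUTER} corner robot re-enters the pipeline via \textsc{Boundary-To-Corner} (\texttt{OUTER} $\to$ \texttt{MOVE1}). For one active robot $r$ I would trace the cycle \texttt{MOVE1} $\to$ \texttt{MOVE2} $\to$ \texttt{OFF} $\to$ \texttt{INNER}: the first step moves $r$ into $ExtVisibleArea(r,\mathcal{CH}_r)$, which by Lemma~\ref{lemma:visiblearea} keeps $r$ a corner and gives it sight of every stationary robot, hence of all of $\mathcal{Q}$; the \texttt{MOVE2} and \texttt{OFF} steps re-position $r$ inside $ExtVisibleArea$ only when needed, until the fixed point $g$ becomes unobstructed; finally $r$ moves onto $g$ with color \texttt{INNER}. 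Either $r$ stays non-faulty and lands on $g$, or it crash-stops: if it crash-stops while still on a corner it is caught by the rule turning an \texttt{INNER}-at-a-corner robot to \texttt{FAULT}, and if it crash-stops in the interior it becomes an \texttt{INNER} interior robot, indistinguishable from a robot that genuinely arrived at $g$.

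\textbf{Termination in $O(N)$ epochs.} Let $\Phi$ be the number of distinct points of $\mathcal{CH}^0$ still occupied by some non-faulty robot. Every non-faulty \texttt{INNER} interior robot --- whether it reached $g$ or merely crash-stopped inside --- is funnelled back to $\mathcal{CH}^0$ by a re-run of \textsc{Inner-To-Outer}, coalescing with the nearest surviving \texttt{OUTER} robot, or turning \texttt{FAULT} if it crashed in the meantime (Figs.~\ref{fig:movement_finish_to-CG}, \ref{fig:CG-to-outer}, and~\ref{fig:outer-to-CG-again}); I would verify that this never increases $\Phi$ and strictly decreases it whenever an attempt is ``spent''. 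Since $\Phi \le N$ at the start and one full to-and-fro costs $O(1)$ epochs, after $O(N)$ epochs $\Phi$ reaches its terminal value: all non-faulty robots are piled on $g$, no \texttt{OUTER} robot survives, so an \texttt{INNER} robot at $g$ finds no \texttt{OUTER} target and turns \texttt{FAULT}; the \texttt{FAULT} rule --- move to the nearest \texttt{FAULT}/\texttt{FAULT-FINISH} robot, adopt \texttt{FAULT-FINISH}, and terminate --- then pulls every non-faulty robot onto a single faulty robot's position and halts them (Fig.~\ref{fig:CG-to-Terminate}). Scenario (b) is the degenerate case of the same argument (the corners descend to the CG of all visible robots; the eventual collocation makes each robot see only a point, read itself as a corner, turn \texttt{FAULT}, and terminate in place), and the linear Case~2 configuration is handled by the same monotone ``wave'' that collapses endpoints inward in $O(f)$ epochs while preserving linearity throughout.

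\textbf{Main obstacle.} The geometry is the easy part --- the existence of $g$ and the visible-area guarantees come straight from Lemma~\ref{lemma:visiblearea}. The real difficulty is the adversarial \async{} scheduler together with obstructed visibility: I expect most of the work to be in showing that the color protocol genuinely serializes the attempts and that $\Phi$ is truly non-increasing, i.e., ruling out executions where a stale snapshot lets two robots leave non-adjacent corners toward $g$ simultaneously and then miscompute it, or where the \texttt{INNER} $\to$ \texttt{OUTER} reversion races a concurrent re-run of \textsc{Inner-To-Outer} so as to recreate a corner occupant and stall the measure. Discharging these requires the same style of (view, action)-pair case analysis used in Section~\ref{sec:impossibility}, now carried out under obstructed visibility.
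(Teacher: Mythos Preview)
Your overall shape --- split on the presence of \texttt{FAULT-FINISH} robots, use a monotone count of occupied corner positions, terminate via the \texttt{FAULT}\,$\to$\,\texttt{FAULT-FINISH} move --- matches the paper's. Two of the claims you rely on, however, are not true, and your ``main obstacle'' paragraph aims at proving something that actually fails.

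\textbf{The target $g$ is not fixed.} Mobility faults can occur at any point during \textsc{Gathering}; a robot that crashes en route turns \texttt{FAULT} and later \texttt{FAULT-FINISH}, so $\mathcal{Q}$ grows and its CG moves between rounds. What the paper uses (and all that is needed) is the weaker fact that all robots that reach color \texttt{MOVE3}/\texttt{OFF} in the \emph{same} batch see exactly the same set of stationary \texttt{FAULT-FINISH} robots (Lemma~\ref{lemma:visiblearea}) and hence compute the same CG; different batches may target different points, and the counting argument is indifferent to this.

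\textbf{The attempts are not serialized, and a batch is not $O(1)$.} The precondition you quote --- both neighbours in $\{\texttt{MOVE1},\texttt{FAULT-FINISH}\}$ --- is satisfied by \emph{every} corner once \textsc{Boundary-To-Corner} finishes, so an adversarial scheduler can launch any number $k_i$ of corners simultaneously, and several can turn \texttt{INNER} before any other robot's Look. The reversion rule you cite only sends the stragglers back to \texttt{OUTER}; it does not force $k_i=1$. More importantly, the return trip of a batch re-invokes \textsc{Inner-To-Outer}, which by Lemma~\ref{lemma:stage-Inner-to-outer} is itself sequential and costs $O(k_i)$ epochs, not $O(1)$. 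The paper's accounting is therefore: batch $i$ removes $k_i$ occupied corner positions and costs $O(k_i)$, so the total is $O\bigl(\sum_i k_i\bigr)=O(N)$. Your potential $\Phi$ is exactly the right quantity, but the charge per batch must be $O(k_i)$; the $O(N)$ bound then comes from telescoping $\sum_i k_i\le N$, not from ``$O(1)$ times $N$ serialized attempts''. (As a side note, the linear configuration you mention at the end is Lemma~\ref{lemma:linear-N-gather}, not part of the present stage.)
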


\begin{proof}
    We divide the proof into two cases based on whether there exists \texttt{FAULT-FINISH}-colored robots.
    We first deal with the case when there is at least one \texttt{FAULT-FINISH}-colored robot.
    A non-faulty \texttt{MOVE1}-colored corner robot $r$ moves to a point in $ExtVisibleArea(r, \mathcal{CH}_r)$ with the color \texttt{MOVE2}, after observing both its neighbours with the color \texttt{FAULT-FINISH} or \texttt{MOVE1}. 
    After the successful movement, $r$ can see all the \texttt{FAULT-FINISH} robots by Lemma \ref{lemma:visiblearea}, implying that the robot $r$ can compute the CG of the convex hull made by the \texttt{FAULT-FINISH}-colored robots. 
    However, the CG may not be visible to $r$. 
    In that case, $r$ again executes a movement to $ExtVisibleArea(r, \mathcal{CH}_r)$ with the color \texttt{OFF} so that the CG becomes visible.
    Finally, it moves to the CG after switching the current color to \texttt{INNER}.
    Any other non-faulty robots, if simultaneously activated with $r$, compute the same CG as $r$, because at this stage, the interior of $\mathcal{CH}_r$ does not contain any robot except the \texttt{FAULT-FINISH} ones, which are always stationary.
    If there are no other corner robots yet to execute this move to CG, the robot $r$ colored \texttt{INNER} terminates within $4$ epochs using the sequence of color transitions \texttt{INNER} $\rightarrow$ \texttt{MOVE1} $\rightarrow$ \texttt{MOVE2} $ \rightarrow$ \texttt{FAULT} $\rightarrow$ \texttt{FAULT-FINISH}. 
    In the worst case, let $k_1$ be the number of robots that move to the CG simultaneously with $r$. 
    By the previous arguments, this movement is done in $O(1)$ epochs.
    After this movement to CG, the $k_1$ robots follow the stage \textsc{Inner-To-Outer}. 
    By Lemma \ref{lemma:stage-Inner-to-outer} and \ref{lemma:stage_confirmation_to_outers}, all those $k_1$ robots move to the corners of the current $\mathcal{CH}$ or change their color to \texttt{FAULT-FINISH} in $O(k_1)$ epochs. 
    At this point, the total number of corner robots of $\mathcal{CH}$ is $N-k_1-f_1$, where $f_1$ is the number of faulty robots before the time when $r$ moves to the CG. 
    Next, we consider that $k_2$ ($k_2 \leq N-k_1-f_1$) robots move to the CG in sync.
    By a similar argument, those robots again move to the location of corners in $O(k_2)$ epochs. 
    The number of corner robots at this time is $N-k_1-k_2-f_1$. 
    The process of movement from corner to CG and again CG to corners continues till we find a $k_i$ such that $N - k_1-k_2- \cdots k_{i} -f_1 = 0$, i.e., there are no corner robots left to gather.
    Hence, the termination requires $O(k_1+k_2+\cdots+k_i) \approx O(N)$ epochs in the worst case.
    
     We now prove the case when there are no \texttt{FAULT-FINISH} robots visible to $r$. In this case, all the robots are corner robots and visible to each other.
     The robot $r$ correctly computes the CG of all the \texttt{MOVE1}-colored robots, and so are the other corner robots.
     By a similar argument as above, the termination requires $O(N)$ epochs in the worst case due to a sequential activation under {\async}. 
\end{proof}

\begin{restatable}{lem}
 {LemmalinearNgather}
    \label{lemma:linear-N-gather}
    If the initial configuration $\mathcal{CH}$ is linear, all the non-faulty robots eventually terminate and gather at a single point in $O(N)$ epochs.
\end{restatable}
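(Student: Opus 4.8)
The plan is to follow the structure of the \textbf{Case 2} description of the algorithm, arguing that linearity is preserved at every step and that the number of ``active'' robots strictly decreases, yielding an $O(N)$ bound. First I would observe that in a linear configuration every robot is either a \emph{terminal} robot (one neighbour on the linear hull $\mathcal{CH}_r^*$ of the non-\texttt{FAULT} robots) or a \emph{non-terminal} robot, and that, just as in Lemma \ref{lemma:robot-categorize-N-gather}, this local classification coincides with the global one; hence in $1$ epoch every robot colours itself \texttt{OUTER} (terminal) or \texttt{INNER} (non-terminal), and nobody acts while an \texttt{OFF}-coloured robot is still visible. Because all motion is directed along the line toward a neighbour's current position (or to the midpoint of a segment between two robots already on the line), the configuration stays collinear throughout; this is the key structural invariant and it is much easier to maintain here than in the nonlinear case, since there are no visible-area detours and no risk of creating new collinearities.

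Next I would argue the progress measure. Consider a terminal robot $r$ with neighbour $r_{Nbr_1}$. If $r_{Nbr_1}$ is \texttt{INNER}, $r$ moves onto it (colour \texttt{MOVE1}); if $r$ is non-faulty it becomes collocated and adopts \texttt{INNER}, thereby vanishing as a distinct endpoint, so the number of distinct occupied positions on the line drops by one. If $r$ is faulty, it switches \texttt{MOVE1}$\to$\texttt{FAULT}, and then its former neighbour $r_{Nbr_1}$ (now detecting a \texttt{FAULT} terminal) re-classifies and becomes the new endpoint, moving inward toward \emph{its} other neighbour; thus each fault ``consumes'' one robot but still lets the wave continue from the next one in. If both terminals are \texttt{OUTER}, the tie is broken by the midpoint move with colour \texttt{MOVE2}, after which the mover turns \texttt{FAULT-FINISH} and the opposite endpoint, upon seeing \texttt{FAULT-FINISH}, walks to its position and terminates; this handles the small-$N$ base cases (in particular the symmetric two-endpoint situation). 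Under {\async}, the scheduler can interleave these endpoint actions, but each endpoint event costs $O(1)$ epochs and strictly reduces either the number of non-faulty non-terminal robots or the number of distinct live positions; since there are at most $N$ such events, the whole process takes $O(N)$ epochs. When the line finally collapses so that all remaining non-faulty robots sit at one point (either having chased a faulty robot to its resting position via the \texttt{FAULT-FINISH} rule, or having converged through the midpoint/\texttt{INNER} mechanism), they adopt \texttt{FAULT-FINISH} and terminate; the rule ``move to the \texttt{FAULT}-coloured robot and terminate'' guarantees that the gathering point may legitimately be a faulty robot's location, exactly as in Lemma \ref{lemma:stage-gathering}.

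I would then stitch these observations into a clean induction on the number of robots not yet terminated: the base case is two robots on a line (covered by the \texttt{OUTER}/\texttt{OUTER} midpoint branch and the \texttt{FAULT-FINISH} follow-up), and the inductive step removes at least one robot from the ``active'' set every $O(1)$ epochs while preserving a linear sub-configuration to which the hypothesis applies. The main obstacle I anticipate is the asynchronous bookkeeping at the two ends simultaneously: one must check that when both endpoints fire in the same epoch with stale snapshots, no robot commits to a destination that has since moved in a way that breaks collinearity or that causes two genuinely non-faulty robots to ``pass through'' each other without meeting. The resolution is that every endpoint target is the \emph{current reported position of a neighbour on the line}, and since interior robots on the line never move, a stale snapshot of an interior neighbour is still correct; the only robots whose positions change are themselves endpoints moving strictly inward, so the line only ever shrinks and the wavefronts from the two ends meet cleanly. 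Verifying this case analysis of concurrent endpoint activations, together with the fairness argument that prioritising the lagging endpoint keeps the schedule fair, is the one part that needs care; everything else is a direct adaptation of the nonlinear analysis with the visible-area machinery stripped away.
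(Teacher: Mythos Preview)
Your proposal is correct and takes essentially the same approach as the paper: classify robots as terminal/non-terminal in one epoch, preserve linearity by only moving along the line toward neighbours, argue that each terminal event (successful merge or fault-and-handoff) costs $O(1)$ epochs and shrinks the active segment, and handle the two-endpoint base case via the \texttt{MOVE2}/\texttt{FAULT-FINISH} midpoint routine inherited from Section~\ref{subsec:opt}. The paper's proof is terser---it splits only into $N=2$ (reduced directly to Theorem~\ref{thm:gathering2robots}) and $N>2$ (each terminal reaches its \texttt{INNER} neighbour or turns \texttt{FAULT} in $2$ epochs, then the neighbour becomes terminal)---and does not spell out the simultaneous-endpoint/stale-snapshot discussion you include, but your extra care there is sound and does not change the argument.
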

\begin{proof}
    When the configuration is linear, the terminal robots change their color to \texttt{OUTER}, while the non-terminals to \texttt{INNER}.
    We divide the proof into two cases: (Case i) $N=2$, and (Case ii) $N>2$.

    When $N=2$, both the robots have the color \texttt{OUTER}, allowing any robot to correctly detect that it is a part of this case. The algorithm closely follows the steps described in Section \ref{subsec:opt}, with the only difference being the use of colors. Here we use the colors \texttt{MOVE2} and \texttt{FAULT-FINISH} instead of \texttt{INTERMEDIATE} and \texttt{END}. Therefore, the lemma follows directly from Theorem \ref{thm:gathering2robots}.

    When $N>2$, a terminal robot $r$ moves to the position of its neighbour $r_{Nbr_1}$, when it finds $r_{Nbr_1}.color= $ \texttt{INNER}. If $r$ is non-faulty, $r$ collocates with the \texttt{INNER}-colored robot and changes its color to \texttt{INNER} within $2$ epochs. On the other hand, a faulty robot $r$ switches to \texttt{FAULT} in the same number of epochs, as it fails to reach $r_{Nbr_1}$. In the next epoch, $r_{Nbr_1}$ becomes terminal on its local $\mathcal{CH}_{r_{Nbr_1}}^*$ and performs the same process as $r$. These steps will continue until the configuration is left with only one or two positions where the non-faulty robots are gathered. For the latter case, it will redirect to the previous Case (i). In the earlier case, where all the positions in the current configuration are occupied by the faulty robots with the color \texttt{FAULT} or \texttt{FAULT-FINISH} except one at $r$, the robot $r$ moves to one of the faulty robots' positions and terminates. 
    In the worst-case scenario, each terminal robot moves to its neighbouring position in exactly two epochs, resulting in an overall time complexity of $O(N)$ epochs for the algorithm.
\end{proof}

All the above lemmas conclude the following theorem.

\begin{restatable}{theo}{theoremNgatherOrderN}\label{thm:N-gather-a}
    There is a deterministic algorithm that solves gathering in an $(N,f)$-mobility system, $f<N$, $f, N$ not known, under {\async} in the 7-$\mathcal{LUMI}$ model. The algorithm has a runtime of $O(N)$ epochs and works under obstructed visibility. %. {\bf do we need it?}, and avoids path crossings. 
\end{restatable}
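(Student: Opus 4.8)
The plan is to assemble Theorem~\ref{thm:N-gather-a} from the stage-by-stage lemmas already proven, verifying that the stages compose correctly under {\async} and obstructed visibility and that the color budget is exactly $7$. First I would handle the color count: the algorithm uses \texttt{OFF}, \texttt{INNER}, \texttt{OUTER}, \texttt{MOVE1}, \texttt{MOVE2}, \texttt{FAULT}, and \texttt{FAULT-FINISH}, which is $7$ colors, and the linear case (Case 2) deliberately reuses the same palette, so the $7$-$\mathcal{LUMI}$ claim holds. Then I would split on the initial configuration. If $\mathcal{CH}$ is linear, Lemma~\ref{lemma:linear-N-gather} directly gives gathering and termination in $O(N)$ epochs, so that branch is done. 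The substantive branch is the non-linear case.

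For the non-linear case I would chain the five stages. By Lemma~\ref{lemma:robot-categorize-N-gather}, stage \textsc{Robot-Classification} finishes in $1$ epoch with every robot correctly labelled \texttt{OUTER} or \texttt{INNER}; here I would invoke the geometric fact (stated in Section~\ref{highlevel:7color_algo}) that a robot's corner/boundary/interior status on $\mathcal{CH}_r$ matches its status on the global $\mathcal{CH}$, which is what makes the local computation meaningful. By Lemma~\ref{lemma:stage-Inner-to-outer}, stage \textsc{Inner-To-Outer} relocates all non-faulty \texttt{INNER} robots onto $\mathcal{CH}^0$ in $O(N_1)$ epochs, and by Lemma~\ref{lemma:stage_confirmation_to_outers} with the following Remark, stage \textsc{Confirmation-Signal-To-Outers} converts all residual \texttt{FAULT} robots to \texttt{FAULT-FINISH} in $O(f_1)$ epochs and correctly signals the \texttt{OUTER} robots that all non-faulty interior robots have left. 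Then Lemma~\ref{lemma:stage-boundary-to-corner} (and its Remark) gives stage \textsc{Boundary-To-Corner} in $O(N_2)$ epochs, leaving every non-faulty robot at a corner of $\mathcal{CH}^0$ and every remaining interior robot \texttt{FAULT-FINISH}. Finally Lemma~\ref{lemma:stage-gathering} gives termination and gathering at a single point in $O(N)$ epochs. Summing, $1 + O(N_1) + O(f_1) + O(N_2) + O(N) = O(N)$ since $N_1, N_2, f_1 \le N$, which yields the claimed runtime; obstructed visibility is already accounted for inside each lemma via the visible-area subroutines (Lemma~\ref{lemma:visiblearea}).

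The main obstacle, and the point I would argue most carefully, is that the five stages \emph{overlap in time} under {\async}: a robot may still be in \textsc{Inner-To-Outer} while a neighbour has moved on, so the ``stages'' are not globally sequential phases but rather color-gated behaviours. I would argue soundness by the monotone-progress structure baked into the color transitions: a robot never enters a ``later'' behaviour until its local view certifies the relevant invariant (no visible \texttt{INNER}/\texttt{MOVE1}/\texttt{MOVE2} before confirming faults, no visible \texttt{OUTER} before a boundary robot becomes a corner, neighbours in $\{\texttt{FAULT-FINISH},\texttt{MOVE1}\}$ before a corner moves toward the CG), and the misfire-recovery clauses in \textsc{Gathering} (a prematurely activated corner robot that sees an \texttt{INNER} robot reverts to \texttt{OUTER}, and \texttt{INNER} robots that fail to reach the CG re-run \textsc{Inner-To-Outer}) guarantee that any out-of-order activation is harmless. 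I would also note that a robot becoming faulty mid-stage is handled uniformly: its light still transitions, so it eventually reaches \texttt{FAULT} and then \texttt{FAULT-FINISH}, and the to-and-fro CG mechanism in Lemma~\ref{lemma:stage-gathering} strictly decreases the number of non-faulty corner robots each round, forcing termination. Combining these with the per-stage lemmas closes the proof.

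\begin{proof}
The algorithm uses exactly the seven colors \texttt{OFF}, \texttt{INNER}, \texttt{OUTER}, \texttt{MOVE1}, \texttt{MOVE2}, \texttt{FAULT}, \texttt{FAULT-FINISH}, and the linear case (Case~2) reuses the same palette; hence it runs in the 7-$\mathcal{LUMI}$ model. If the initial $\mathcal{CH}$ is linear, Lemma~\ref{lemma:linear-N-gather} gives gathering and termination in $O(N)$ epochs, so assume $\mathcal{CH}$ is non-linear.

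By Lemma~\ref{lemma:robot-categorize-N-gather}, after $1$ epoch every robot is correctly colored \texttt{OUTER} or \texttt{INNER}, using that a robot's corner/boundary/interior status on $\mathcal{CH}_r$ coincides with that on the global $\mathcal{CH}$. By Lemma~\ref{lemma:stage-Inner-to-outer}, stage \textsc{Inner-To-Outer} moves all non-faulty \texttt{INNER} robots onto $\mathcal{CH}^0$ in $O(N_1)$ epochs, where $N_1$ is the number of interior robots of $\mathcal{CH}$, using the visible-area guarantee of Lemma~\ref{lemma:visiblearea} under obstructed visibility. By Lemma~\ref{lemma:stage_confirmation_to_outers} and the subsequent remark, stage \textsc{Confirmation-Signal-To-Outers} turns all residual \texttt{FAULT} robots to \texttt{FAULT-FINISH} in $O(f_1)$ epochs ($f_1 \le f$) and correctly signals the \texttt{OUTER} robots that all non-faulty interior robots have relocated. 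By Lemma~\ref{lemma:stage-boundary-to-corner} and its remark, stage \textsc{Boundary-To-Corner} moves all non-faulty boundary robots of $\mathcal{CH}^0$ to corners in $O(N_2)$ epochs, where $N_2$ is the number of boundary robots of $\mathcal{CH}$; now every non-faulty robot is at a corner of $\mathcal{CH}^0$ and every remaining interior robot is \texttt{FAULT-FINISH}. Finally, by Lemma~\ref{lemma:stage-gathering}, stage \textsc{Gathering} makes all non-faulty robots terminate, gathered at a single point, in $O(N)$ epochs.

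Although the five stages may overlap under {\async}, soundness follows because each robot advances its behaviour only when its local view certifies the governing invariant (e.g., no visible \texttt{INNER}, \texttt{MOVE1}, or \texttt{MOVE2} robot before acting on faults; a boundary robot becoming a corner before moving toward the CG; both neighbours in $\{\texttt{FAULT-FINISH},\texttt{MOVE1}\}$ before a corner robot departs for the CG), and the recovery clauses of \textsc{Gathering} redirect any prematurely activated robot back to the appropriate earlier stage. A robot becoming faulty mid-execution still drives its light through to \texttt{FAULT} and then \texttt{FAULT-FINISH}; the to-and-fro movement between corners and the CG in Lemma~\ref{lemma:stage-gathering} strictly decreases the number of non-faulty corner robots each round, forcing termination. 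Summing the epoch bounds, $1 + O(N_1) + O(f_1) + O(N_2) + O(N) = O(N)$ since $N_1, N_2, f_1 \le N$. Hence the algorithm solves gathering in an $(N,f)$-mobility system, $f<N$, with $f,N$ unknown, under {\async} in the 7-$\mathcal{LUMI}$ model in $O(N)$ epochs under obstructed visibility.
\end{proof}
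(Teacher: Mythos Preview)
Your proposal is correct and takes essentially the same approach as the paper: the paper's own proof is the single sentence ``All the above lemmas conclude the following theorem,'' so your assembly of Lemmas~\ref{lemma:robot-categorize-N-gather}--\ref{lemma:linear-N-gather} (plus the color count and the linear/non-linear split) is exactly what is intended, only spelled out in more detail than the paper bothers with.
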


\section{Algorithm in an $(N,f)$-Mobility System under {\async} in 26-$\mathcal{LUMI}$}
\label{sec:lfalgorithm}

We now present a deterministic algorithm that solves gathering in an $(N,f)$-mobility system, $f<N$, $f,N$ not known, under {\async} in 26-$\mathcal{LUMI}$ in $O(\max \{\ell,f\})$ epochs, where $\ell<N$ is the number of distinct polygonal layers in the initial configuration. 
This algorithm uses 19 more colors than the $O(N)$-time algorithm but achieves improved time complexity.
Since each robot runs the same algorithm, we describe our algorithm from one robot $r$'s perspective.

\subsection{Highlevel Idea of the Algorithm in 26-$\mathcal{LUMI}$}
\label{highlevel:time-efficient algorithm}

A key limitation of the previous $O(N)$-time algorithm is its lack of parallelism in robot movement.
By leveraging additional (19) colors, the following algorithm introduces greater concurrency. While we still adopt a layer-by-layer movement strategy, a key difference lies in enabling all robots on $\mathcal{CH}^1$ to move to the corners of $\mathcal{CH}^0$ in $O(1)$ epochs. The process is repeated for other layers, ensuring each layer merges with the outermost in constant time. 

We divide this algorithm into \textit{eight} stages. 
In the \textit{first} stage, \textsc{Robot-Categorisation}, we categorise the robots into corner, boundary and interior robots and assign colors to them accordingly. 
%In the previous algorithm, the movement of boundary robots on $\mathcal{CH}^0$ toward the corner positions is highly sequential, limiting parallelism.
%To address this, we revise the order of operations. 
%Before relocating the interior robots to the corners layer-by-layer, 
We then move the boundary robots on $\mathcal{CH}^0$ to the positions of the nearest interior robots (if they exist) lying on $\mathcal{CH}^1$.
This forms the \textit{second} stage, \textsc{Boundary-To-Interior} and takes only $2$ epochs, facilitated by a few additional colors.
In the \textit{third} stage, \textsc{Detecting-Eligible-Layer}, we identify the specific layer of interior robots that are eligible to move to the corners of $\mathcal{CH}^0$.
At this point, the robots lying on $\mathcal{CH}^1$ are the first to be marked as eligible for movement. 
In the \textit{fourth} stage, \textsc{Interior-To-Corners}, each eligible robot first moves to a position on its current layer that guarantees visibility of the corners of $\mathcal{CH}^0$, and then proceeds to move toward its nearest corner.
By repeating the third and fourth stages, all interior robots progressively relocate to the corner positions of $\mathcal{CH}^0$ layer-by-layer.
In the \textit{fifth} stage, \textsc{Confirmation-Signal-To-Corners}, faulty robots display a color signal to inform corner robots that no interior robots remain hidden behind them.
In the \textit{sixth} stage, \textsc{Corners-To-CG}, the corner robots move toward the center of gravity (CG) of the convex hull formed by the faulty robots.
During this movement, corner robots may also become faulty, obscuring others and altering the agreed CG, thus complicating coordination.
In the \textit{seventh} stage, \textsc{Finding-Gathering-Point}, if any corner robots remain on $\mathcal{CH}^0$ due to asynchronous activation, we designate one of them as the gathering point. If no such corner robot remains, the gathering occurs at the previously agreed-upon CG. 
Finally, in the \textit{eighth} stage, \textsc{Move-To-GatheringPoint\&Terminate}, all non-faulty robots gather at the designated point and ensure the termination of the algorithm.
The layer-by-layer movement in fourth stage takes $O(\ell)$ epochs in the worst case. From fifth stage onward, asynchronous activation may cause only one corner robot to move toward the gathering point while others remain stationary. If the moving robot becomes faulty, the remaining robots, once activated, again correctly compute the CG and proceed towards it.
In the worst case, this unfolds sequentially with up to $f$ faulty activations, giving $O(f)$ epochs. Overall, the time complexity is $O(\max\{\ell,f\})$. Figures~\ref{fig:initial-2}-\ref{fig:final-faulterminate-case2-2} illustrate the algorithm outline.

We separately handle the special case of initial linear configuration. The goal is to break linearity by moving endpoint robots perpendicularly outward. Faults may delay progress, requiring up to $O(f)$ epochs to reach a non-faulty robot that can successfully perform the required movement. 
Figs. \ref{fig:linear-case3.1-2} -- \ref{fig:linear-case3.2-2} depicts an execution example of the linear case.

\subsection{Description of the Algorithm in 26-$\mathcal{LUMI}$ }
\label{subsec:description_algorithm}

Initially, all the robots have the color \texttt{OFF}. 
Based on the initial configuration, we identify three cases. The linearity in the initial configuration is handled in Case 3, whereas the non-linear configuration is addressed in Cases 1 and 2.
The following algorithm for the non-linear case is divided into eight major stages.

\vspace{1mm}
\noindent \underline
{\textsc{Robot-Categorisation:}} After the activation with the color \texttt{OFF}, $r$ needs to determine whether it is a corner, a boundary or an interior robot on the convex hull $\mathcal{CH}_r$. 
It then updates its color to \texttt{CORNER}, \texttt{BOUNDARY}, or \texttt{INTERIOR}, accordingly (see Fig. \ref{fig:initial-2}).
After switching its color from \texttt{OFF} to either \texttt{CORNER}, \texttt{BOUNDARY}, or \texttt{INTERIOR}, if $r$ finds at least one \texttt{OFF}-colored robot within its visibility, it waits without changing its color or position.

It is possible for a configuration to contain no interior robots; for instance, the initial configuration may consist solely of corner and boundary robots. In such cases, after the first epoch, each robot observes only \texttt{CORNER} and \texttt{BOUNDARY}-colored robots in $\mathcal{V}_r$. Therefore, we divide the algorithm into the following three cases.

\vspace{1mm}
\noindent \textbf{Case 1: (At least one \texttt{INTERIOR}-colored robot exists):} We begin the following stage.

\noindent \underline{\textsc{Boundary-To-Interior}:}
In this stage, the boundary robots (currently colored \texttt{BOUNDARY}) first move toward the positions of nearby interior robots, so that they also become interior.
\begin{itemize}
    \item \textbf{$r.color =$ \texttt{BOUNDARY}:} In this case, $r$ moves to the position of the nearest \texttt{INTERIOR}-colored robot after changing its current color to \texttt{MoveTo\_INTERIOR}.

    \item \textbf{$r.color = $ \texttt{MoveTo\_INTERIOR}:} If $r$ finds itself collocated with another \texttt{INTERIOR}-colored robot, it changes its color to \texttt{INTERIOR}. Otherwise, it updates its color to \texttt{FAULT}, (Fig. \ref{fig:boundary-to-interior-2}) as it may have experienced a mobility fault before reaching its intended destination. 

\end{itemize}

 Up to this point, any \texttt{INTERIOR}-colored robot maintains the status quo upon detecting a robot with the color \texttt{BOUNDARY} or \texttt{MoveTo\_INTERIOR}.
 Once no such robots are visible, $r$ being an \texttt{INTERIOR}-colored robot, proceeds to check whether it qualifies as an \emph{eligible robot} for the next stage. 
 This eligibility check is to identify \texttt{INTERIOR}-colored robots that can move to the positions of \texttt{CORNER}-colored robots currently located on the outermost layer $\mathcal{CH}^0$.

 \begin{definition}{(Eligible Robot):} Let $\mathcal{CH}_r^*$ be the convex hull of all visible robots to an \texttt{INTERIOR}-colored robot $r$, excluding those with colors \texttt{CORNER} or \texttt{FAULT}. If $r$ lies on the boundary of $\mathcal{CH}_r^*$, whether as a corner or boundary robot, it considers itself an eligible robot.    
 \end{definition}

Notice that once the non-faulty boundary robots on $\mathcal{CH}^0$ relocate to the positions of interior robots, the robots situated on the next layer, $\mathcal{CH}^1$, are the first to become eligible.

\vspace{1mm}

\noindent \underline{\textsc{Detecting-Eligible-Layer}:} An \texttt{INTERIOR}-colored robot $r$ remains idle if it sees any robot with color in ${\texttt{ELIGIBLE}, \texttt{MOVE1}, \texttt{MOVE2}}$. Otherwise, it checks for eligibility: if eligible, $r$ updates its color to \texttt{NEXT}; if not, to \texttt{NOT-NEXT}. A \texttt{NEXT}-colored robot stays idle while any \texttt{INTERIOR}-colored robot is visible. Once all \texttt{INTERIOR}-colored robots become either \texttt{NEXT}, \texttt{NOT-NEXT}, or \texttt{ELIGIBLE}, each \texttt{NEXT}-colored robot switches to \texttt{ELIGIBLE}. Subsequently, \texttt{NOT-NEXT}-colored robots revert to \texttt{INTERIOR}, as shown in Fig. \ref{fig:next-notnext-2} and \ref{fig:eligible-layer-movement-visiblearea-2}.

Robots with the color \texttt{ELIGIBLE} form the eligible layer. The color transitions are designed to prevent path crossings. For instance, if an eligible robot $r$ starts moving toward a \texttt{CORNER}-colored robot, and another robot $r'$, not in the eligible layer but visible to $r$, also becomes eligible, a conflict may arise. By marking $r$ as \texttt{ELIGIBLE} and keeping $r'$ as \texttt{INTERIOR}, we ensure that $r'$ must wait upon seeing $r$ in \texttt{ELIGIBLE}, \texttt{MOVE1}, or \texttt{MOVE2}.

\vspace{1mm}
 \noindent \underline{\textsc{Interior-To-Corners:}} If $r$ is not an eligible robot, it waits till it becomes one. 
 The goal is to move these eligible robots (colored \texttt{ELIGIBLE}) to the corner positions of $\mathcal{CH}^0$. 
 We continue similarly with those on $\mathcal{CH}^2$, and so on, proceeding layer by layer up to $\mathcal{CH}^{l-1}$.
 An \texttt{ELIGIBLE}-colored robot $r$ executes the following steps. 
\vspace{0pt}
\begin{itemize}
    \item \textbf{$r.color = $ \texttt{ELIGIBLE}:} 
    In this case, $r$ can be a boundary or a corner robot on $\mathcal{CH}_r^*$, the convex hull of all robots after disregarding the visible \texttt{CORNER, FAULT} and \texttt{MOVE2}-colored robots. 
    At this step, we want $r$ to move to a point in such a way that it can see the nearest \texttt{CORNER}-colored robots. 
    To do this, $r$ computes a half-plane $\mathcal{H}_{L_r}$ as follows, by identifying a line $L_r$ passing through it.
    \vspace{0pt}
        \begin{itemize}
            \item \textbf{$r$ is a boundary robot on $\mathcal{CH}_r^*$:} Let $r_{Nbr_1}$ and $r_{Nbr_2}$ be the two neighbours of $r$ on $\mathcal{CH}_r^*$. The half-plane $\mathcal{H}_{L_r}$ is the open half-plane delimited by the line $L_r = \overleftrightarrow{r_{Nbr_1}r_{Nbr_2}}$ that contains no visible \texttt{ELIGIBLE, MOVE1} or \texttt{INTERIOR}-colored robots.
        
            \item \textbf{$r$ is a corner robot on $\mathcal{CH}_r^*$:} In this case, $L_r$ is the line passing through the current location of $r$ and parallel to $\overleftrightarrow{r_{Nbr_1} r_{Nbr_2}}$ .
            So, $\mathcal{H}_{L_r}$ is the open half-plane, delimited by the line $L_r$ that does not have any \texttt{ELIGIBLE, MOVE1} or \texttt{INTERIOR}-colored robots.

            \item \textbf{$r$ is lying on a linear $\mathcal{CH}_r^*$:} In this case, if $r$ is a extreme robot on the line segment $\mathcal{CH}_r^*$, it cannot have two neighbours. So, $r$ simply chooses $\mathcal{CH}_r^*$ as the line $L_r$.
        \end{itemize}        
        
         Let $p_{mid_1}$ and $p_{mid_2}$ be the midpoints of the line segments $\overline{rr_{Nbr_1}}$ and $\overline{rr_{Nbr_2}}$ respectively. 
         The robot $r$ arbitrarily selects one of the midpoints (say $p_{mid_1}$) and computes the region $BoundaryVisibleArea(r, \mathcal{CH}_r^*)$. 
         It then determines a target point $t_r$ within this area, changes its color to \texttt{MOVE1} and moves to $t_r$.
         The purpose of this movement is to let $r$ see all the \texttt{CORNER}-colored robots on the half-plane $\mathcal{H}_{L_r}$.

    \item  \textbf{$r.color = $ \texttt{MOVE1}:} When $r$ gets activated with \texttt{MOVE1}, it indicates that $r$ being an \texttt{ELIGIBLE}-colored robot has moved to its visible area.
    So it again finds $\mathcal{CH}_r^*$ and the half-plane $\mathcal{H}_{L_r}$ in the same way it is described in the earlier case (as in $r.color =$ \texttt{ELIGIBLE}).
    However, $r$ may not see any \texttt{CORNER}-colored robots on $\mathcal{H}_{L_r}$ due to the fault or a \texttt{MOVE2}-colored robot positioned between $r$ and the \texttt{CORNER}-colored robot. If a \texttt{CORNER}-colored robot is visible to $r$, we differentiate the following sub-cases.
    \vspace{0pt}
    \begin{itemize}
        \item \textbf{$r$ is a boundary robot on $\mathcal{CH}_r^*$:} The robot $r$ sets the position of the closest \texttt{CORNER}-colored robots on $\mathcal{H}_{L_r}$ as its target $t_r$.

        \item \textbf{$r$ is a corner robot on $\mathcal{CH}_r^*$:} Let us denote the open half-plane delimited by the line $\overleftrightarrow{rr_{Nbr_1}}$ (resp. $\overleftrightarrow{rr_{Nbr_2}}$) by $\mathcal{H}'_{L_r}$ (resp. $\mathcal{H}''_{L_r}$) that does not have any \texttt{ELIGIBLE, MOVE1} or \texttt{INTERIOR}-colored robots, as shown in Fig. \ref{fig:eligible-to-corner-2} . 
        If $\mathcal{H}'_{L_r}$ contains any visible \texttt{CORNER}-colored robot, then $r$ sets $t_r$ as the position of the nearest such robot. Otherwise, $t_r$ is set to the position of the nearest visible \texttt{CORNER}-colored robot on $\mathcal{H}''_{L_r}$. 
    \end{itemize}

    Finally, it moves to the point $t_r$ after changing its color to \texttt{MOVE2}. In case of $r$ not seeing any \texttt{CORNER}-colored robot, $r$ changes the color to \texttt{FAULT} if it finds no \texttt{MOVE2} robots on $\mathcal{H}_{L_r}$, and to \texttt{ELIGIBLE} otherwise.

    \item \textbf{$r.color =$ \texttt{MOVE2}:} If $r$ finds itself collocated with a \texttt{CORNER}-colored robot, it changes its color to \texttt{CORNER}. 
    Otherwise, it must be a faulty robot that could not reach to its destination. So, it changes its color to \texttt{FAULT} (see Fig. \ref{fig:inner-to-corner-done-2}).
    %Such a faulty \texttt{MOVE2}-colored robot might block the view of a \texttt{MOVE1}-colored robot due to asynchronous activation. 
    %If $r$ finds itself collinear with a \texttt{MOVE1} and a \texttt{CORNER}-colored robot, it retains its color \texttt{MOVE2}. If it does not find such collinearities, it changes its color to \texttt{FAULT}.
\end{itemize}

Now the \texttt{CORNER}-colored robots need to be informed when to begin computing the gathering point. This is achieved through the following stage.

\vspace{1mm}
\noindent \underline{\textsc{Confirmation-Signal-To-Corners}:} By following the steps outlined above, every non-faulty robot with the color \texttt{INTERIOR} eventually reaches the positions of \texttt{CORNER}-colored robots on the outermost layer $\mathcal{CH}^0$. 
The faulty robots, meanwhile, remain inside the convex hull $\mathcal{CH}$ with the color \texttt{FAULT}.
We proceed to move the robots with the color \texttt{CORNER} only after ensuring that no interior robot is left, including any potentially obscured by \texttt{FAULT}-colored robots.
So, in the following steps, we make the \texttt{FAULT}-colored robots update their colors to serve as confirmation signals for the \texttt{CORNER}-colored robots.
\vspace{0pt}
\begin{itemize}
    \item \textbf{$r.color =$ \texttt{FAULT}:} If $r$ finds a robot having a color from the set $\{\texttt{INTERIOR, MOVE1, MOVE2}\}$, it does nothing. 
    Otherwise, it calculates the local layers $\mathcal{CH}_r^0, \mathcal{CH}_r^1,\cdots, \mathcal{CH}_r^{l'}$ using the visible robots in $\mathcal{V}_r$ and identifies the layer $\mathcal{CH}_r^i$ to which it belongs. 
    If the interior of the convex hull formed by the robots on $\mathcal{CH}_r^i$ is either empty or contains robots only with the color \texttt{FAULT-FINISH}, $r$ changes its color to \texttt{FAULT-FINISH} (Fig. \ref{fig:confirmation-signal-2}). 
    The color \texttt{FAULT-FINISH} acts as a signal for the \texttt{CORNER}-colored robots to initiate their movement. 
\end{itemize}

Observe that, at this point, if we reconstruct the (global) layers based on the current positions of the robots on the plane, the \texttt{FAULT}-colored robots on the current innermost layer are expected to be the first to change their color to \texttt{FAULT-FINISH} followed by those on the second innermost layer, and so on outward.

Now we move the robots with the color \texttt{CORNER} lying on $\mathcal{CH}^0$ to the center of gravity (CG) of the convex hull made by the robots with the color \texttt{FAULT-FINISH} in the following stage.
However, there may not be any \texttt{FAULT-FINISH}-colored robots lying inside $\mathcal{CH}$, i.e., all the boundary and interior robots of $\mathcal{CH}$ are non-faulty. 
In this case, a \texttt{CORNER}-colored robot $r$ finds the interior of $\mathcal{CH}_r$ empty and proceeds as outlined in Case 2.
We now consider the scenario where at least one \texttt{FAULT-FINISH}-colored robot is present within the interior.
 
\vspace{1mm}
\noindent \underline{\textsc{Corners-To-C.G}:} 
To compute such a CG, all the \texttt{FAULT-FINISH}-colored robots (all of them remain stationary) must be visible to a \texttt{CORNER}-colored robot.

\begin{itemize}
    \item \textbf{$r.color = $ \texttt{CORNER}:} 
    The robot $r$ waits if it finds a robot with a color outside the set $\{ \texttt{CORNER, FAULT-FINISH}\}$. 
    Otherwise, it changes its color to \texttt{INTERMEDIATE} without moving.

    \item \textbf{$r.color=$ \texttt{INTERMEDIATE}:} $r$ changes its color back to \texttt{CORNER}, if it observes any robot with a color in $\{ \texttt{MOVE4, MOVE-ENDED, MOVE5, MoveTo-CORNER}\}$.  
    If $r$ sees a \texttt{MOVE3} colored robot, it maintains its color and position, as such a robot is already progressing toward the CG due to asynchronous activation. 
    This allows robots that began moving earlier to complete their actions without interference.
    Otherwise, it moves to a point in $ExtVisibleArea(r, \mathcal{CH}_r)$ with color \texttt{MOVE3} so that it can see all the robots with the color \texttt{FAULT-FINISH}.

    \item \textbf{$r.color =$ \texttt{MOVE3}:} This signifies that the robot $r$ has moved to its exterior visible area. If $r$ sees at least one robot with the color in $\{\texttt{MOVE4}, \texttt{MOVE-ENDED}, \texttt{MOVE5},  \texttt{MoveTo-CORNER} \}$, it changes its color to \texttt{CORNER}, allowing those already executing the algorithm to complete. Otherwise, it computes the CG of the convex hull made by the \texttt{FAULT-FINISH} and \texttt{FAULT-TERMINATE}-colored robots. 
    If the CG is visible to $r$, the robot $r$ moves to the CG with the color \texttt{MOVE4}, as shown in Fig. \ref{fig:corner-to-CG-2}. In case of CG is not visible to $r$, it computes the $ExtVisibleArea(r, \mathcal{CH}_r)$ and moves within this area with the color \texttt{OFF}. This movement allows $r$ to eventually see the CG.

    \item \textbf{$r.color=$ \texttt{OFF}:} If $r$ is a corner robot with at least one visible \texttt{FAULT-FINISH} or \texttt{FAULT-TERMINATE} robot, it ensures that it had already moved to the exterior visible area twice in an attempt to see the CG. In this case $r$ first checks if there is any visble robot with the color from the set $\{\texttt{MOVE4}, \texttt{MOVE-ENDED}, \texttt{MOVE5},  \texttt{MoveTo-CORNER} \}$. If $r$ fins such robot, it switches to \texttt{CORNER}. Otherwise, it moves to the CG with the color \texttt{MOVE4}.

    \item \textbf{$r.color =$ \texttt{MOVE4}:} In case of a \texttt{FAULT-FINISH}-colored robot lying on the CG, a non-faulty robot becomes collocated with such robots. So, if $r$ finds itself collocated with \texttt{FAULT-FINISH} robot, it updates its color to \texttt{TERMINATE}.
    Otherwise, the robot $r$ changes its color to \texttt{MOVE-ENDED} to indicate any visible robots that its movement is over.
 
\end{itemize}

Up to this point, the goal is to gather the \texttt{CORNER}-colored robots at the CG of the convex hull formed by the faulty robots.
So, we compute the point of gathering as follows. 

\vspace{1mm}
\noindent \underline{\textsc{Finding-Gathering-Point}:} 
The intended \textit{gathering point} is the CG, assuming all non-faulty robots move towards it.
However, due to asynchronous activation, some of the \texttt{CORNER}-colored robots may remain idle while others move towards the CG.
It is also possible that some \texttt{CORNER}-colored robots may become faulty during this movement, potentially resulting in multiple locations on the plane occupied by \texttt{MOVE-ENDED}-colored robots.
Among these, it is evident that at most one location contains non-faulty robots; however, the robots cannot determine which one it is.
Therefore, each such robot moves within its visible area to locate a \texttt{CORNER}-colored robot and then proceeds to its position as follows. 
In such cases, our goal is to redirect the robots that have already reached the CG back to the position of an idle \texttt{CORNER}-colored robot, which will then serve as the final \textit{gathering point}. 

\begin{itemize}
    \item \textbf{$r.color =$ \texttt{MOVE-ENDED}:} If $r$ sees any \texttt{MOVE4}-colored robots, it waits to confirm their movement has been completed. Otherwise, it calculates its target point $t_r$ in the region $BoundaryVisibleArea(r, \mathcal{CH}_r^*)$, where $\mathcal{CH}_r^*$ is the convex hull formed solely by $r$ itself.  
    It then moves to the point $t_r$ with the color \texttt{MOVE5} (see Fig. \ref{fig:CG-to-visible-area-2}).
    This movement allows $r$ to gain visibility of any \texttt{CORNER}-colored robot present on the outermost layer.

    \item \textbf{$r.color =$ \texttt{MOVE5}:} 
    If $r$ observes any \texttt{MOVE3} or \texttt{INTERMEDIATE}-colored robots, it waits for them to turn into \texttt{CORNER}.
    When $r$ finds a \texttt{CORNER}-colored robot, it moves to its position with the color \texttt{MoveTo-CORNER}, as shown in Fig. \ref{fig:move5-to-corner-2} . 
    If this movement is successfully executed without encountering fault, this position becomes the gathering point for other non-faulty robots.
    It is possible that all \texttt{CORNER}-colored robots simultaneously move to the CG of the convex hull made by the \texttt{FAULT-FINISH}-colored robot from their respective exterior visible areas, as shown in Fig. \ref{fig:corner-toCG-case2-2}. 
    We want to move all the non-faulty robots, currently located at a single position on the plane, to the position of a \texttt{FAULT-FINISH} robot.
    To do this, when $r$ does not find any \texttt{CORNER}-colored robot but a \texttt{FAULT-FINISH} colored robot, it moves to its position with the color \texttt{MoveTo-CORNER}, as shown in Fig. \ref{fig:move5-to-faultfinish-case2-2}.
    The robot changes its color to \texttt{FAULT-FINISH} if it does not find any visible \texttt{CORNER} or \texttt{FAULT-FINISH}.

    \item \textbf{$r.color =$ \texttt{MoveTo-CORNER}:} Upon reaching its destination, $r$ checks whether it is collocated with a \texttt{CORNER}-colored robot. If so, it recognises this as a valid gathering point and updates its color to \texttt{GATHER}. On the other hand, if $r$ finds itself collocated with a \texttt{FAULT-FINISH}-colored robot, it switches to \texttt{TERMINATE}.
    If no such robots are found at its location, $r$ assumes the movement was unsuccessful, due to a fault during execution, and switches its color to \texttt{FAULT-FINISH}.

\end{itemize}

\noindent \underline{\textsc{Move-To-GatheringPoint\&Terminate}:} Due to asynchronous scheduling, a non-faulty robot may have already established a gathering point before $r$. If this point is visible to $r$, it gives priority to the gathering step over any other actions discussed above.
\vspace{0pt}
\begin{itemize}
    \item \textbf{$r.color \neq $ \texttt{FAULT-FINISH} and it finds a \texttt{GATHER}-colored robot:} If $r$ is already collocated with a \texttt{GATHER}-colored robot, it changes its color to \texttt{GATHER}. 
    Otherwise, $r$ moves to the position of the \texttt{GATHER}-colored robot after changing the current color to \texttt{MoveTo-GATHER} (refer to Fig. \ref{fig:gather-at-corner-2}).

    \item \textbf{$r.color =$ \texttt{MoveTo-GATHER}:} If $r$ becomes collocated with a \texttt{GATHER}-colored robot, it changes its color to \texttt{GATHER}. Otherwise, it switches to \texttt{FAULT-FINISH}, indicating that it has become faulty.

    % We can differentiate the following sub-cases depending on the color of the faulty robots.

    % \begin{itemize}
    %     \item \textbf{$r$ sees at least one robot with the color FAULT-FINISH:} It 
    %     %After activating with the color MOVE3, 

    %     \item \textbf{$r$ does not see any robot with the color FAULT-FINISH:} IN this case, it changes its color to MOVE4 and moves to the centre of gravity of the convex hull made by the CORNER-colored robots.
    % \end{itemize}

    \item \textbf{$r.color = $ \texttt{FAULT-FINISH}:} In this case, $r$ first computes the local layers $\mathcal{CH}_r^0, \mathcal{CH}_r^1,\cdots, \mathcal{CH}_r^{l'} $ and determines its own layer $\mathcal{CH}_r^i$, where it belongs. It selects the two neighbours $r_{Nbr_1}$ and $r_{Nbr_2}$ on $\mathcal{CH}_r^i$ and computes the two half-planes delimited by the line $\overleftrightarrow{r_{Nbr_1}r_{Nbr_2}}$. If one of the half-planes is either empty or contains the robots with the color \texttt{FAULT-TERMINATE}, it changes its current color to \texttt{FAULT-TERMINATE} and moves to the position of a \texttt{FAULT-FINISH} or \texttt{FAULT-TERMINATE} robot other than its own (if exists).
    This movement is to ensure that if the non-faulty robots are gathered at a point and colored \texttt{FAULT-FINISH}, they finally gather with a faulty robot (see Fig. \ref{fig:final-faulterminate-case2-2}). 

    \item \textbf{$r.color =$ \texttt{GATHER}:} If all the visible robots other than their own position are with the color \texttt{FAULT-TERMINATE}, $r$ changes its color \texttt{TERMINATE} and moves to the nearest \texttt{FAULT-TERMINATE} robot (see Fig. \ref{fig:terminate-to-fault-2}).

    \item \textbf{Termination Condition:} If $r.color =$ \texttt{TERMINATE} or \texttt{FAULT-TERMINATE}, $r$ terminates. 
\end{itemize}

\noindent \textbf{Case 2: (There are no \texttt{INTERIOR}-colored robots)}
This specific configuration may either be the initial configuration of the robots or arise in cases where none of the interior robots become faulty.
In this case, each \texttt{CORNER}-colored robot that has a \texttt{BOUNDARY}-colored neighbor moves to its respective exterior visible area.
This movement causes the \texttt{BOUNDARY}-colored robots to become interior robots. In the subsequent epoch, they update their color to \texttt{INTERIOR}. 
We now proceed with the following steps. 
\vspace{0pt}
\begin{itemize}
    \item \textbf{$r.color =$ \texttt{CORNER}:} The robot $r$ first checks whether it has any \texttt{BOUNDARY}-colored neighbour. If it does, it moves to a point in $ExtVisibleArea(r,\mathcal{CH}_r)$  (as shown in Fig. \ref{fig:visiblearea}) with the color \texttt{EXPANDING} in order to expand the convex hull and convert the \texttt{BOUNDARY}-colored robots into interior ones.
    Otherwise, if there is a \texttt{BOUNDARY}-colored robot in $\mathcal{V}_r$, $r$ maintains the status quo.
    In case of $r$ not observing any \texttt{BOUNDARY}-colored robot, it changes its color to  \texttt{INTERMEDIATE} without any movement.

    \item \textbf{$r.color =$ \texttt{EXPANDING}:} If both the neighbours of the robot $r$ do not have the color \texttt{BOUNDARY}, then it changes its color to \texttt{CORNER}. 
    Otherwise, it changes its color to \texttt{FAULT1} to indicate that it has become faulty. 
    Any robot that observes a \texttt{FAULT1}-colored robot will always disregard it in all future computations.

    \item $r.color =$ \texttt{INTERMEDIATE}: $r$ changes its color to \texttt{CORNER}, if it finds any robot with the color in $\{ \texttt{MOVE4, MOVE-ENDED, MOVE5, MoveTo-CORNER}\}$.
    Otherwise, it moves to the CG of the convex hull formed by all \texttt{CORNER} and \texttt{INTERMEDIATE}-colored robots after changing its color to \texttt{MOVE4}. When $r$ becomes a robot with the color \texttt{MOVE4}, it follows the steps mentioned in Case 1.

   \item \textbf{$r.color =$ \texttt{BOUNDARY}:} If at least one of $r$’s neighbors has the color \texttt{EXPANDING}, then $r$ takes no action. Otherwise, it computes the convex hull $\mathcal{CH}_r^*$ formed by all visible robots excluding those with color \texttt{FAULT1}. 
   $r$ switches its color to \texttt{CORNER} if it lies on a corner, to \texttt{INTERIOR} if it is an interior robot, and retains the color \texttt{BOUNDARY} if it remains a boundary robot on $\mathcal{CH}_r^*$.

\end{itemize}

In the above steps, whenever a \texttt{CORNER}-colored robot has a neighbouring robot with the color \texttt{BOUNDARY}, it moves in such a way that the neighbouring boundary robot becomes an interior robot. If the \texttt{CORNER}-colored robot is faulty and unable to perform this movement, it eventually changes its color to \texttt{FAULT1}. This change signals that the neighbouring \texttt{BOUNDARY}-colored robot has now become a corner robot on the convex hull formed by the robots that do not have the color \texttt{FAULT1}.
Such \texttt{BOUNDARY}-colored robots then change their color to \texttt{CORNER} and, if necessary, perform similar movements to convert their own neighbouring boundary robots into interior robots.

\noindent \textbf{Case 3: (There is no interior region, i.e. a linear configuration)} 
This scenario arises when every robot sees at most two other robots in its visibility, forming a linear configuration.
For such a configuration, we refer to a robot as a \emph{terminal robot} if it has exactly one neighbour not colored \texttt{FAULT1}, while all others are considered \emph{non-terminal}.
For a robot $r$, let $\mathcal{CH}_r^*$ denote the convex hull of all robots excluding those with the color \texttt{FAULT1}. 
When this convex hull is a line segment, we refer to it as $Linear\_\mathcal{CH}_r^*$.
Otherwise, we simply use $\mathcal{CH}_r^*$. 
The objective in this case is to redirect the robots to follow the steps in Case 1 or Case 2 if the configuration ever becomes non-linear.
To facilitate this transition, we reuse some of the colors introduced earlier. 
We begin with the terminal robots, moving them perpendicularly away from the initial line.
If a terminal robot becomes faulty and cannot execute this movement, it changes its color to \texttt{FAULT1}.
This, in turn, causes its neighbour on the line to become terminal, which then attempts the same movement.
Such a movement, when successfully executed, transforms the configuration from linear to non-linear.
In this process, a terminal robot first changes its color to \texttt{CORNER}, while a non-terminal robot changes its color to \texttt{NON-CORNER}. 
A \texttt{CORNER} or a \texttt{NON-CORNER}-colored robot waits if there exists an \texttt{OFF}-colored neighbour on its linear convex hull.
For a terminal robot $r$, let $r_{Nbr_1}$ denote its neighbour whose color is not \texttt{FAULT1}.
For convenience, we provide an execution example in Fig. \ref{fig:linear-case3.1-2} - \ref{fig:linear-case3.2-2}.
Let us now look at the actions taken by $r$. 
\begin{itemize}
    \item $r.color=$ \texttt{CORNER} on $Linear\_\mathcal{CH}_r^*$: Depending on the color of its neighbour $r_{Nbr_1}$, the robot $r$ decides its action as follows. 
    \vspace{0pt}
    \begin{itemize}
        \item If $r_{Nbr_1}.color =$ \texttt{NON-CORNER}, it indicates that there are more than two non-faulty robots on $Linear\_\mathcal{CH}_r^*$.
        So, $r$ changes its color to \texttt{EXPANDING-L} and moves to a point $t_r$ such that $d(r,t_r) = d(r,r_{Nbr_1})$ and $\overline{rt_r} \perp \overline{rr_{Nbr_1}}$, to make the configuration non-linear.

        \item If $r_{Nbr_1}.color = $ \texttt{CORNER}, it means that there are only two non-faulty robots to gather.
        In this case, $r$ changes its color to \texttt{MOVE1} and moves to the midpoint of $\overline{rr_{Nbr_1}}$.

        \item If $r_{Nbr_1}.color =$ \texttt{MOVE1},  $r$ simply maintains its current color and position just to let $r_{Nbr_1}$ finish its move or change the color to \texttt{MOVE-ENDED}. 

        \item If $r_{Nbr_1}.color =$ \texttt{MOVE-ENDED}, $r$ moves to the position of $r_{Nbr_1}$ with the color \texttt{MOVE1}. 

        % \item There exists no neighbour with a color from the set $\{\texttt{EXPANDING-L, MOVE1, MOVE-ENDED}\}$: $r$ changes its color to \texttt{FAULT1}.  
    \end{itemize}
    \item $r.color =$ \texttt{EXPANDING-L}: If $r$ finds itself on $Linear\_\mathcal{CH}_r^*$, either it has become faulty or the configuration has again become linear due to a symmetric movement of the two terminal robots (it arises when there are three robots). 
    In that case, it waits until $r_{Nbr_1}.color =$ \texttt{BOUNDARY} or \texttt{GATHER}, which serves as a signal for $r$.
    In such a case, if $r_{Nbr_1}.color =$ \texttt{BOUNDARY}, $r$ changes its color to \texttt{FAULT1}.
    If $r_{Nbr_1}.color =$ \texttt{GATHER}, it moves to the position of $r_{Nbr_1}$ after changing its color to \texttt{MoveTo-GATHER}.
    On the other hand, if $\mathcal{CH}_r^*$ is non-linear and $r$ does not find any \texttt{OFF} or \texttt{NON-CORNER}-colored robots, it changes its color to \texttt{CORNER}. 

    \item $r.color = $\texttt{MoveTo-GATHER}: In this case, $r$ is lying on $Linear\_\mathcal{CH}_r^*$.
    If $r$ finds itself collocated with a \texttt{GATHER}-colored robot, it updates its color to \texttt{GATHER}.
    Otherwise, it sets its color to \texttt{FAULT1}.

    \item $r.color =$ \texttt{NON-CORNER}: 
    Upon detecting a robot with \texttt{EXPANDING-L}, $r$ proceeds as follows.
    \vspace{0pt}
    \begin{itemize}
        \item If $r$ is on $Linear\_\mathcal{CH}_r$ and both the neighbours on it have the color \texttt{EXPANDING-L}, it changes its color to \texttt{GATHER}, so that its neighbours gather at its own position.

        \item If $r$ is on $Linear\_\mathcal{CH}_r$ but exactly one of the neighbours has the color \texttt{EXPANDING-L}, it changes its color to \texttt{BOUNDARY}. At some later activation cycle, $r$ might become a terminal robot if its neighbour changes its color to \texttt{FAULT1} from \texttt{EXPANDING-L}. 

        \item If $\mathcal{CH}_r$ is non-linear, $r$ updates its color to \texttt{CORNER}, \texttt{BOUNDARY}, or \texttt{INTERIOR}, depending on its position on $\mathcal{CH}_r$. It then proceeds according to the actions described in Case 1 or Case 2.
    \end{itemize}

    \item $r.color =$ \texttt{BOUNDARY}: In this case, if $r$ is a terminal robot on $Linear\_\mathcal{CH}_r^*$, $r$ simply changes its color to \texttt{CORNER} without any movement.
    Due to asynchronous activation, $\mathcal{CH}_r^*$ may become non-linear, while $r$ is corner robot. In that case, $r$ changes its color to \texttt{CORNER}.

    \item $r.color = $ \texttt{MOVE1}: After getting activated with the color \texttt{MOVE1}, the robot $r$ changes its color to \texttt{MOVE-ENDED} to indicate that it has completed its movement.

    \item $r.color =$ \texttt{MOVE-ENDED}: In this case, if $r$ finds a \texttt{MOVE1} or \texttt{CORNER}-colored robot, it maintains the status quo. 
    If it sees another robot with the color \texttt{MOVE-ENDED} at a different position than its own, it simply moves to the position of such a robot with the color \texttt{TERMINATE}.
    In case of $r$ finding a \texttt{MOVE-ENDED}-colored robot at its own position, it changes its color to \texttt{TERMINATE} and moves to the nearest \texttt{FAULT1}-colored robot, if it exists.

    \item $r.color =$ \texttt{GATHER}: If all the visible robots other than its own position are with the color \texttt{FAULT1}, $r$ changes its color to \texttt{TERMINATE} and moves to the position of the nearest \texttt{FAULT1}-colored robot (if it exists) to become collocated with such a robot.
\end{itemize}

\subsection{An Execution Example of the Algorithm in 26-$\mathcal{LUMI}$}
\label{app-subsec:execution-example-26LUMI}

In this section, we present a concise example of the execution of the algorithm proposed in 26-$\mathcal{LUMI}$ model in Section \ref{sec:lfalgorithm}. This is to provide the reader a pictorial outline of the algorithmic steps adopted in the algorithm. Figs. \ref{fig:initial-2} -- \ref{fig:final-faulterminate-case2-2} (in order) demonstrate the algorithm when the configuration is non-linear, whereas the process to handle the linear case is shown in Fig. \ref{fig:linear-case3.1-2} - \ref{fig:linear-case3.2-2}.
The caption of each figure compactly describes the corresponding robot actions. 

\begin{figure}[h]
    \centering
    \begin{minipage}[b]{0.48\linewidth}
    \centering
        \includegraphics[width=\linewidth]{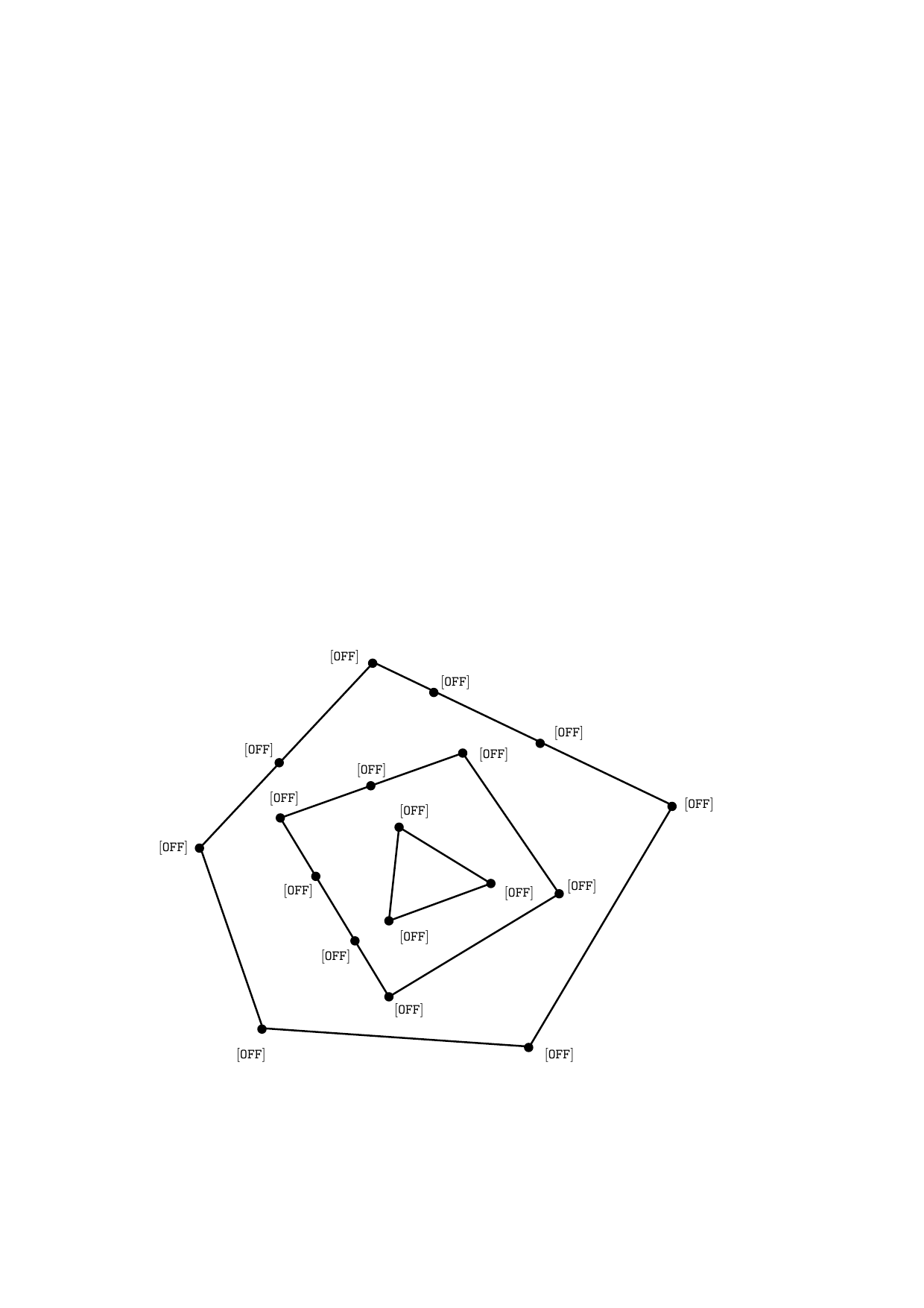}
        \caption{Initial configuration of the robots with $\ell = 3$ layers}
        \label{fig:initial-2}
    \end{minipage}\hfill
    \begin{minipage}[b]{0.48\linewidth}
    \centering
      \includegraphics[width=\linewidth]{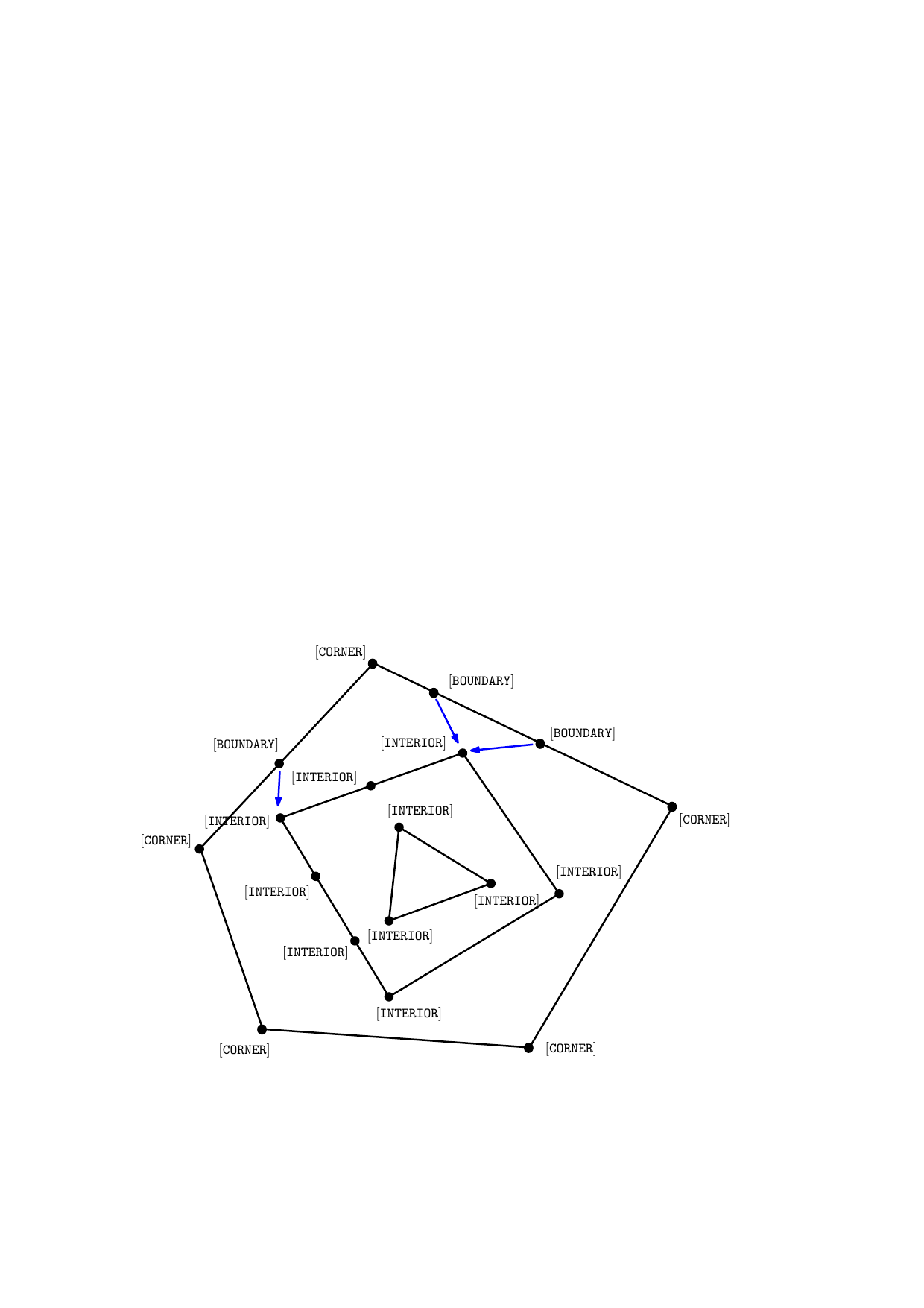}  
    \caption{After \textsc{Robot-Categorisation}, boundary robots move to nearest interior robots}
    \label{fig:boundary-to-interior-2}
    \end{minipage}
    \begin{minipage}[b]{0.48\linewidth}
    \centering
        \includegraphics[width=\linewidth]{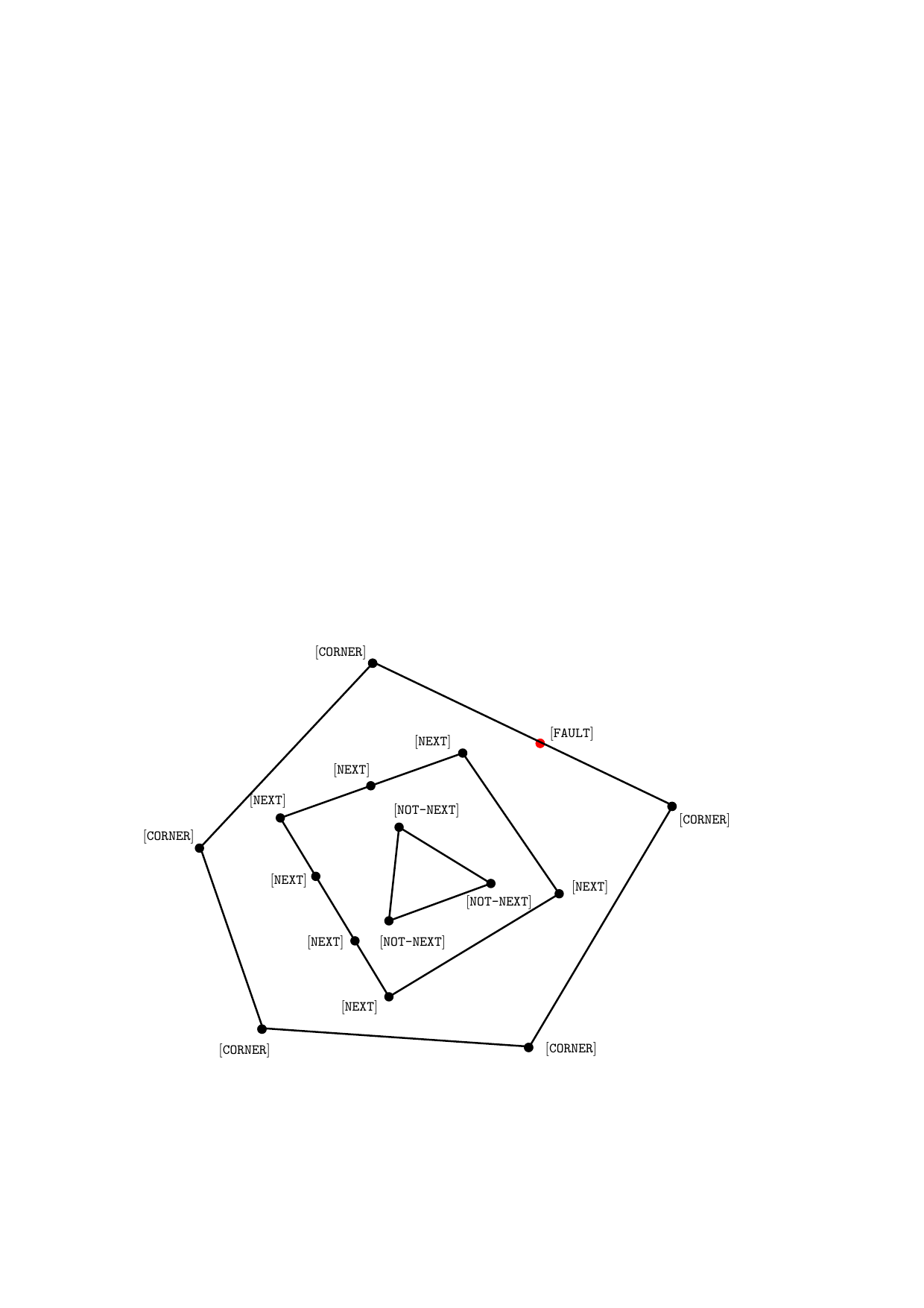}
        \caption{Detecting eligible layer: robots on $\mathcal{CH}^1$ with \texttt{NEXT} while robots on $\mathcal{CH}^2$ with \texttt{NOT-NEXT}}
        \label{fig:next-notnext-2}
    \end{minipage}\hfill
    \begin{minipage}[b]{0.48\linewidth}
    \centering
      \includegraphics[width=\linewidth]{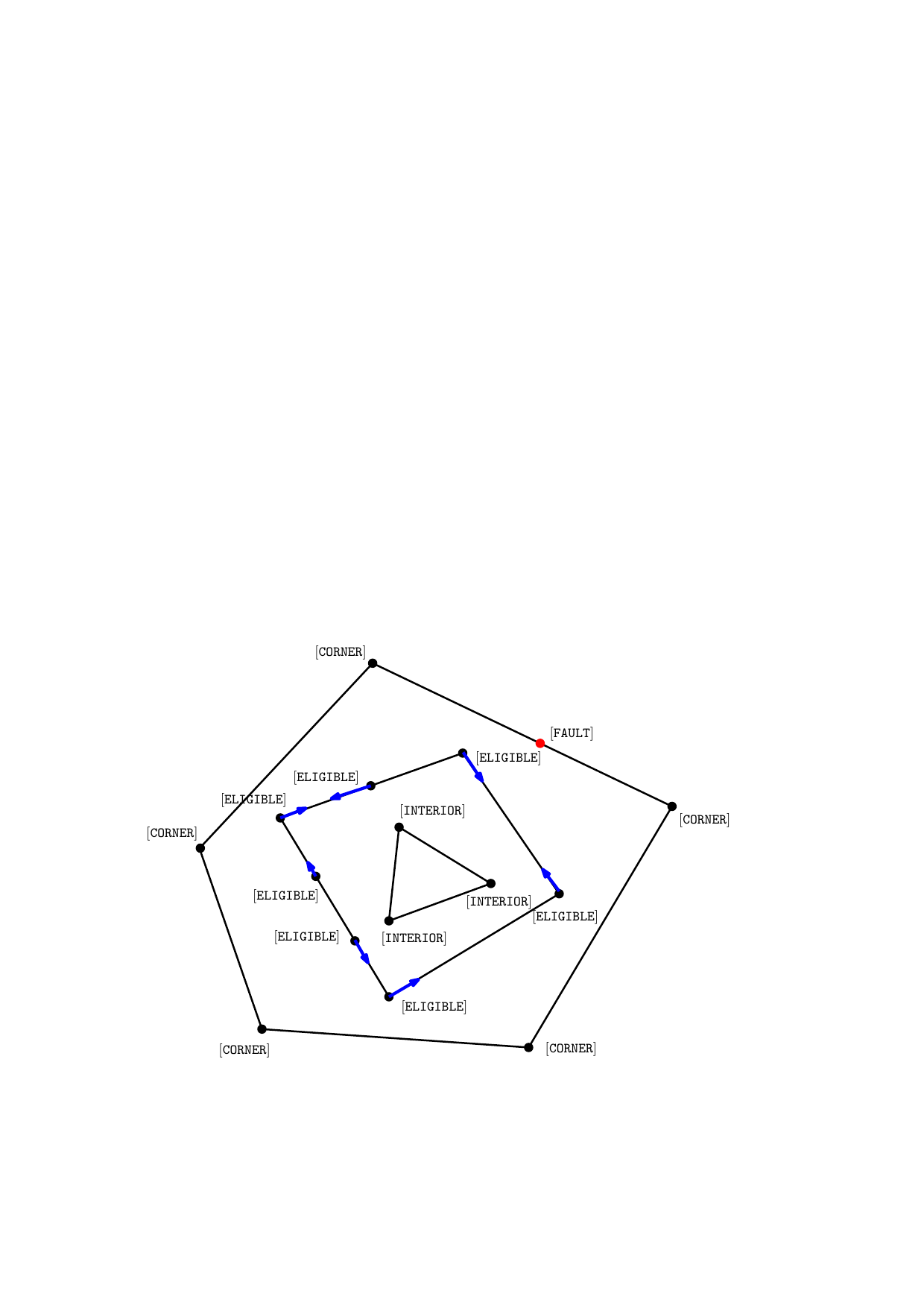}  
    \caption{\texttt{ELIGIBLE}-colored robots on the eligible layer move to their respective $BoundaryVisibleArea()$}
    \label{fig:eligible-layer-movement-visiblearea-2}
    \end{minipage}
\end{figure}

\begin{figure}[ht]
    \centering
    \begin{minipage}[b]{0.48\linewidth}
    \centering
        \includegraphics[width=\linewidth]{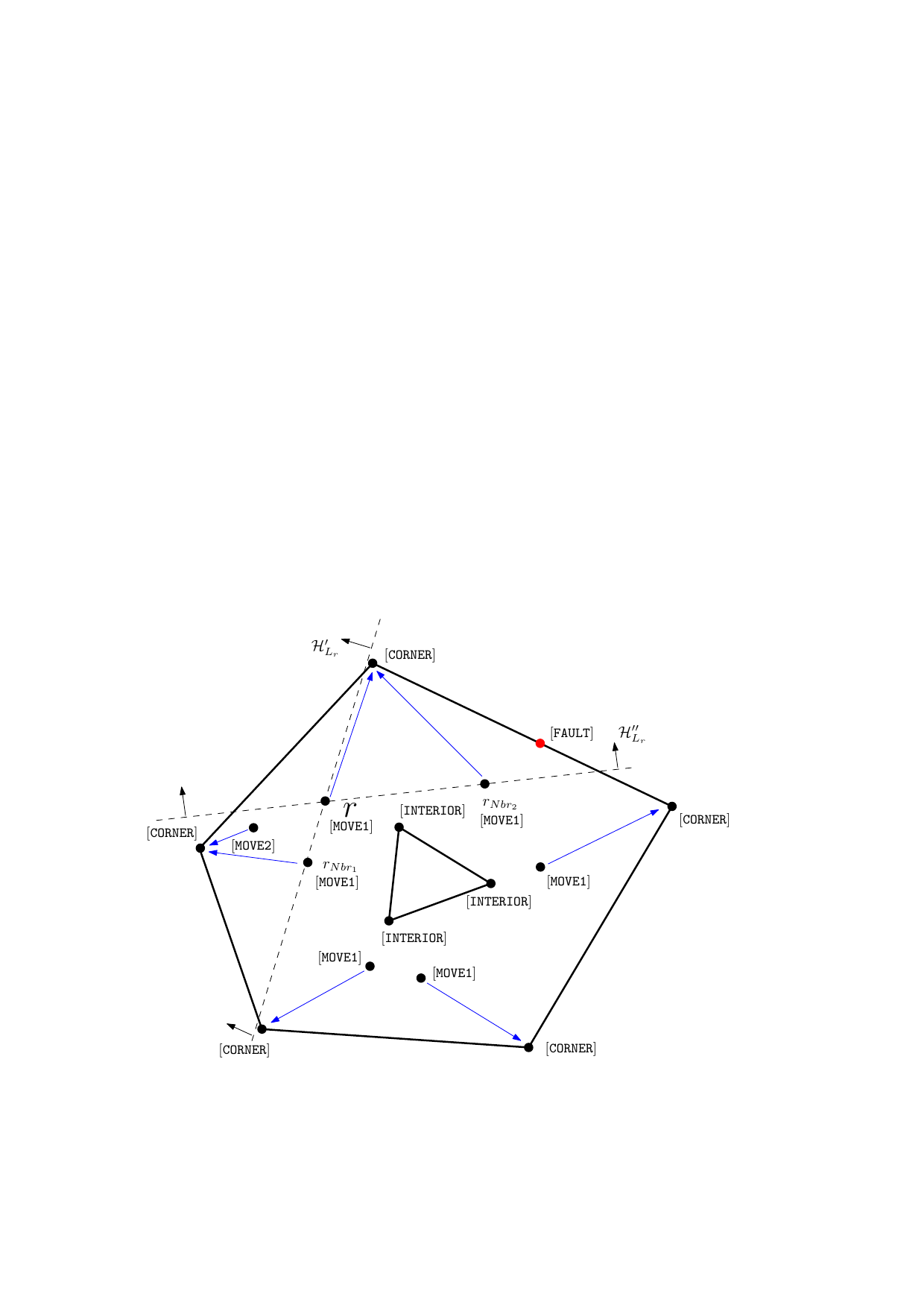}
        \caption{A \texttt{MOVE1} robot $r$ computes $\mathcal{H}_{L_r}'$ and $\mathcal{H}_{L_r}''$ to find a \texttt{CORNER} robot in one of these regions to move its position}
        \label{fig:eligible-to-corner-2}
    \end{minipage}\hfill
    \begin{minipage}[b]{0.48\linewidth}
    \centering
      \includegraphics[width=\linewidth]{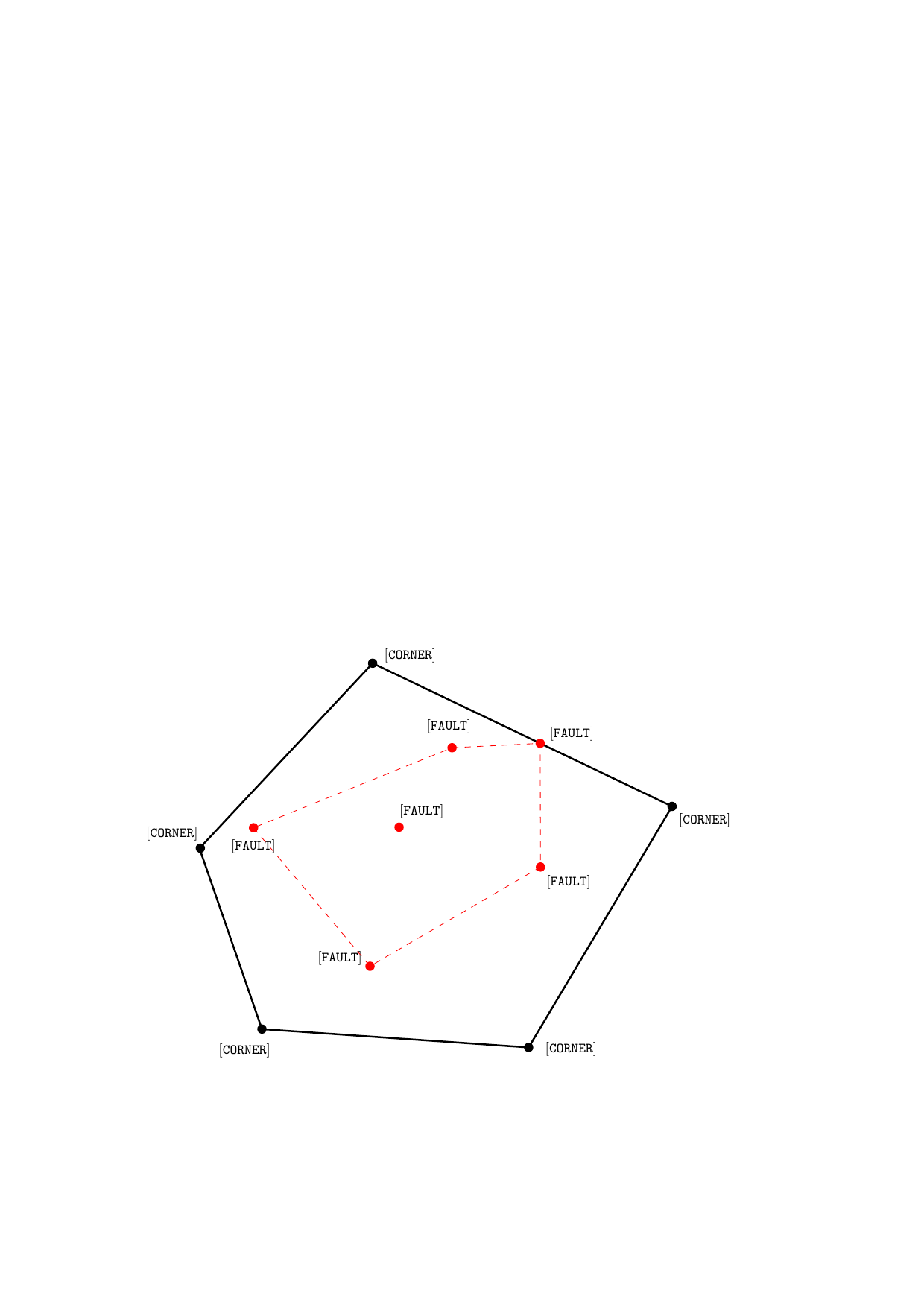}  
    \caption{The non-faulty interior robots move to the positions of \texttt{CORNER} robots while faulty ones remain stationary with color \texttt{FAULT}}
    \label{fig:inner-to-corner-done-2}
    \end{minipage}
    \begin{minipage}[b]{0.48\linewidth}
    \centering
        \includegraphics[width=\linewidth]{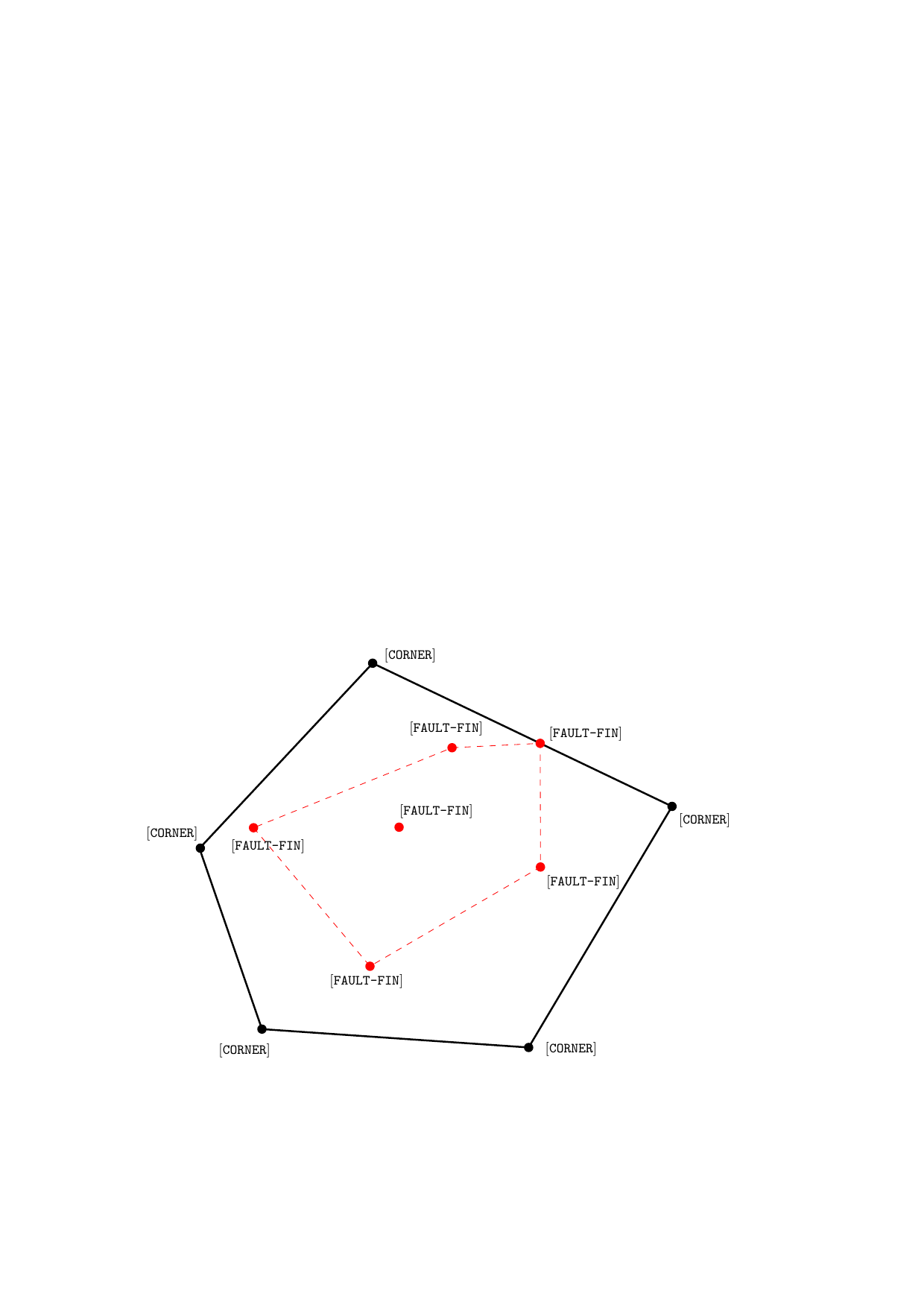}
        \caption{\texttt{FAULT}-colored robots switches to \texttt{FAULT-FINISH} to signal the \texttt{CORNER}-colored robots to start next stage}
        \label{fig:confirmation-signal-2}
    \end{minipage}\hfill
    \begin{minipage}[b]{0.48\linewidth}
    \centering
      \includegraphics[width=\linewidth]{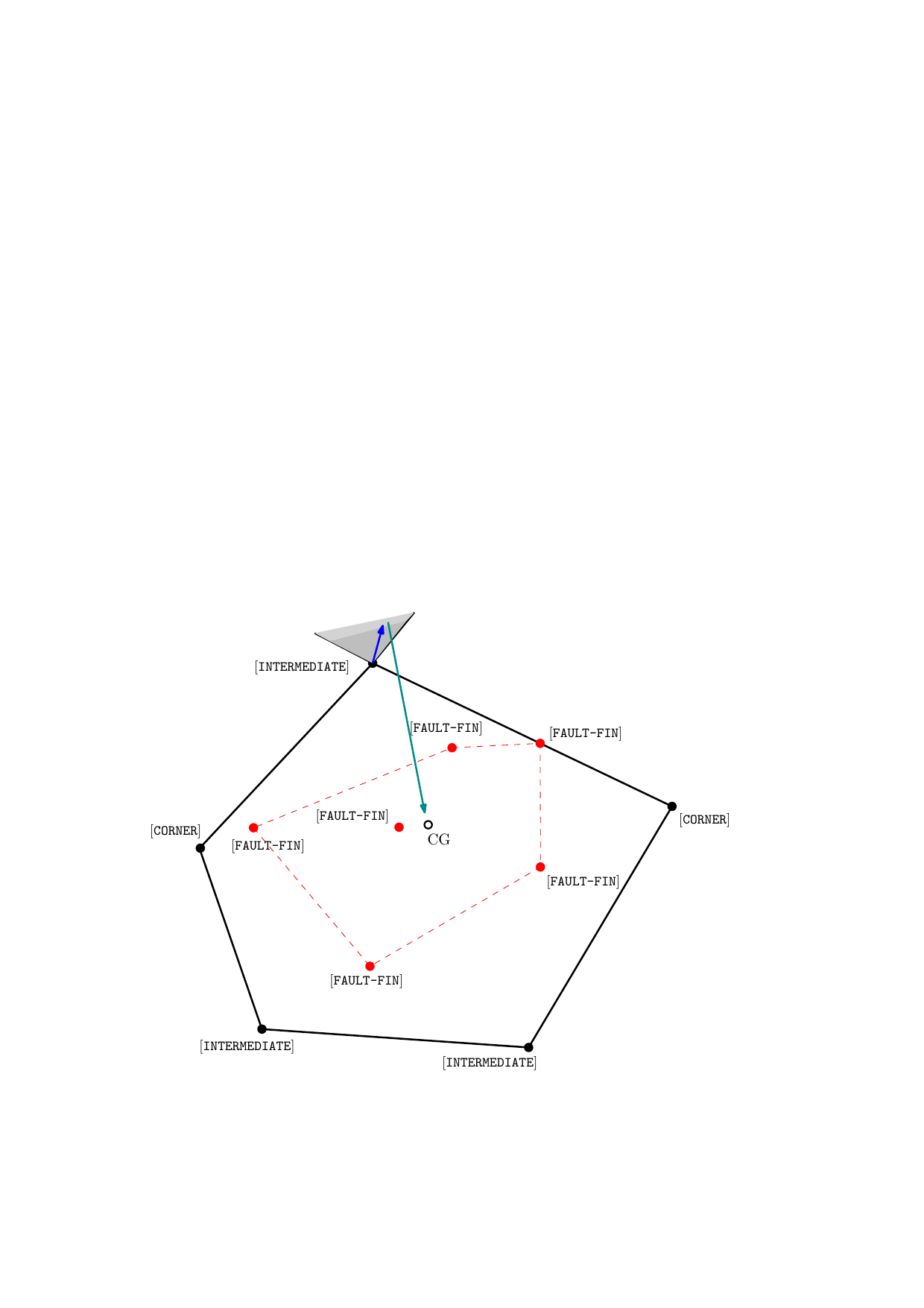}  
    \caption{A corner robot moves to $ExtVisibleArea()$ and then to the CG of the convex hull formed by the \texttt{FAULT-FINISH} robots}
    \label{fig:corner-to-CG-2}
    \end{minipage}

\end{figure}

\begin{figure}[h]
\centering
    \begin{minipage}[b]{0.48\linewidth}
    \centering
        \includegraphics[width=\linewidth]{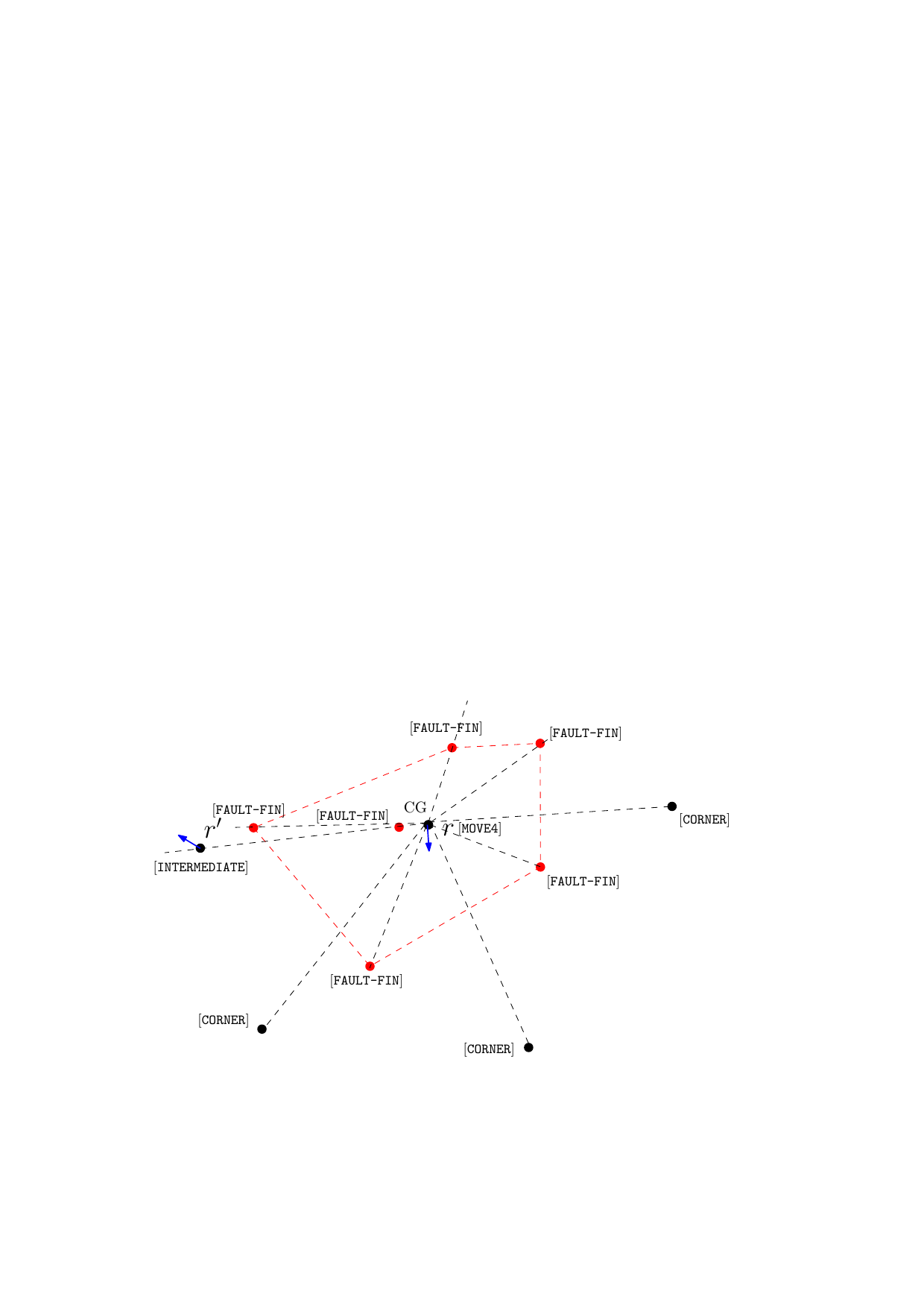}
        \caption{\texttt{MOVE4}-colored robot $r$ moves to $BoundaryVisibleArea()$ from CG, while another corner robot $r'$ move to $ExtVisibleArea()$}
        \label{fig:CG-to-visible-area-2}
    \end{minipage}\hfill
    \begin{minipage}[b]{0.48\linewidth}
    \centering
      \includegraphics[width=\linewidth]{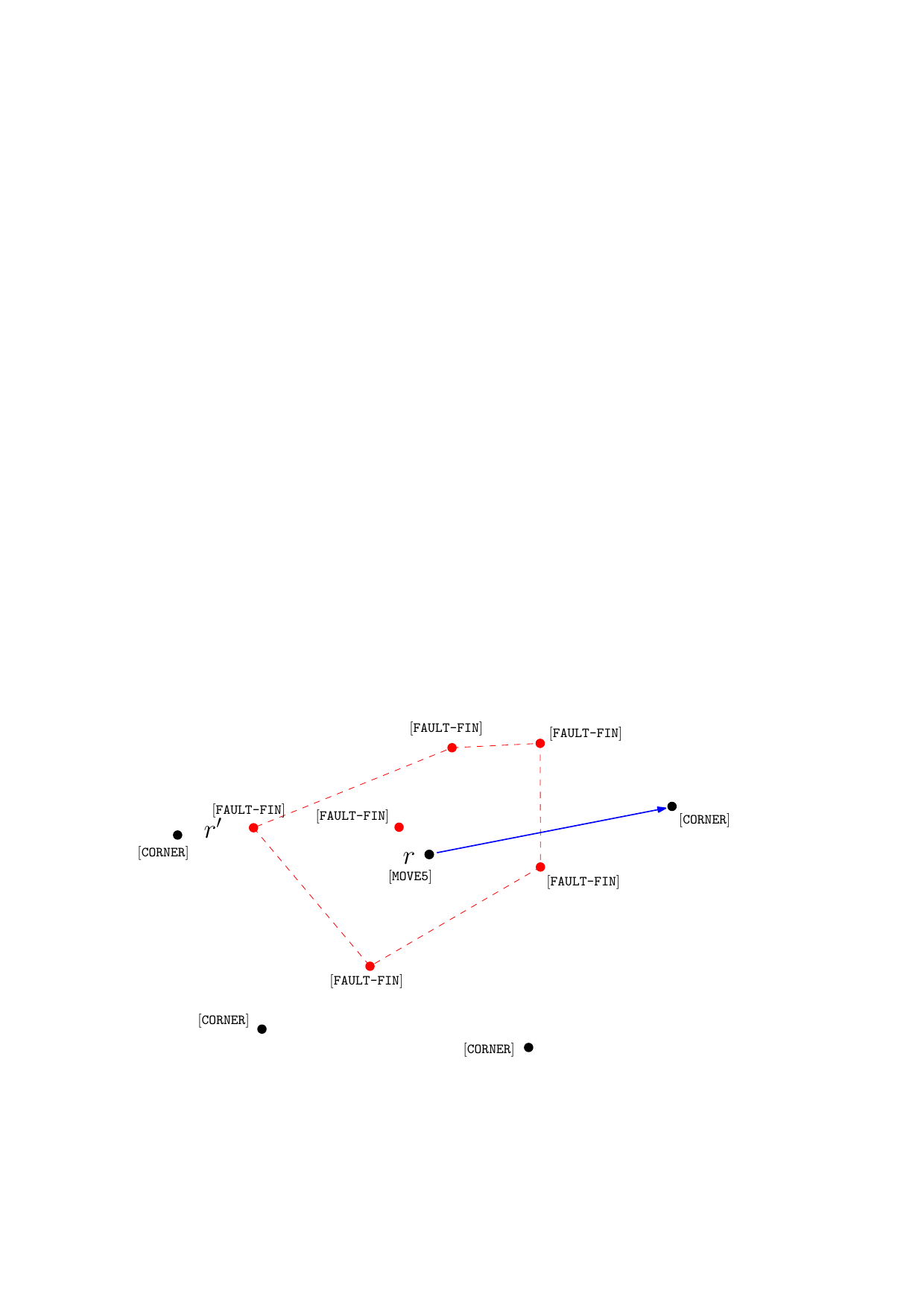}  
    \caption{\texttt{MOVE5} robot $r$ moves to the position of \texttt{CORNER} robot, which will be the gathering point for others non-faulty robots}
    \label{fig:move5-to-corner-2}
    \end{minipage}
\end{figure}

\begin{figure}[h]
    \centering
    \begin{minipage}[b]{0.48\linewidth}
    \centering
        \includegraphics[width=\linewidth]{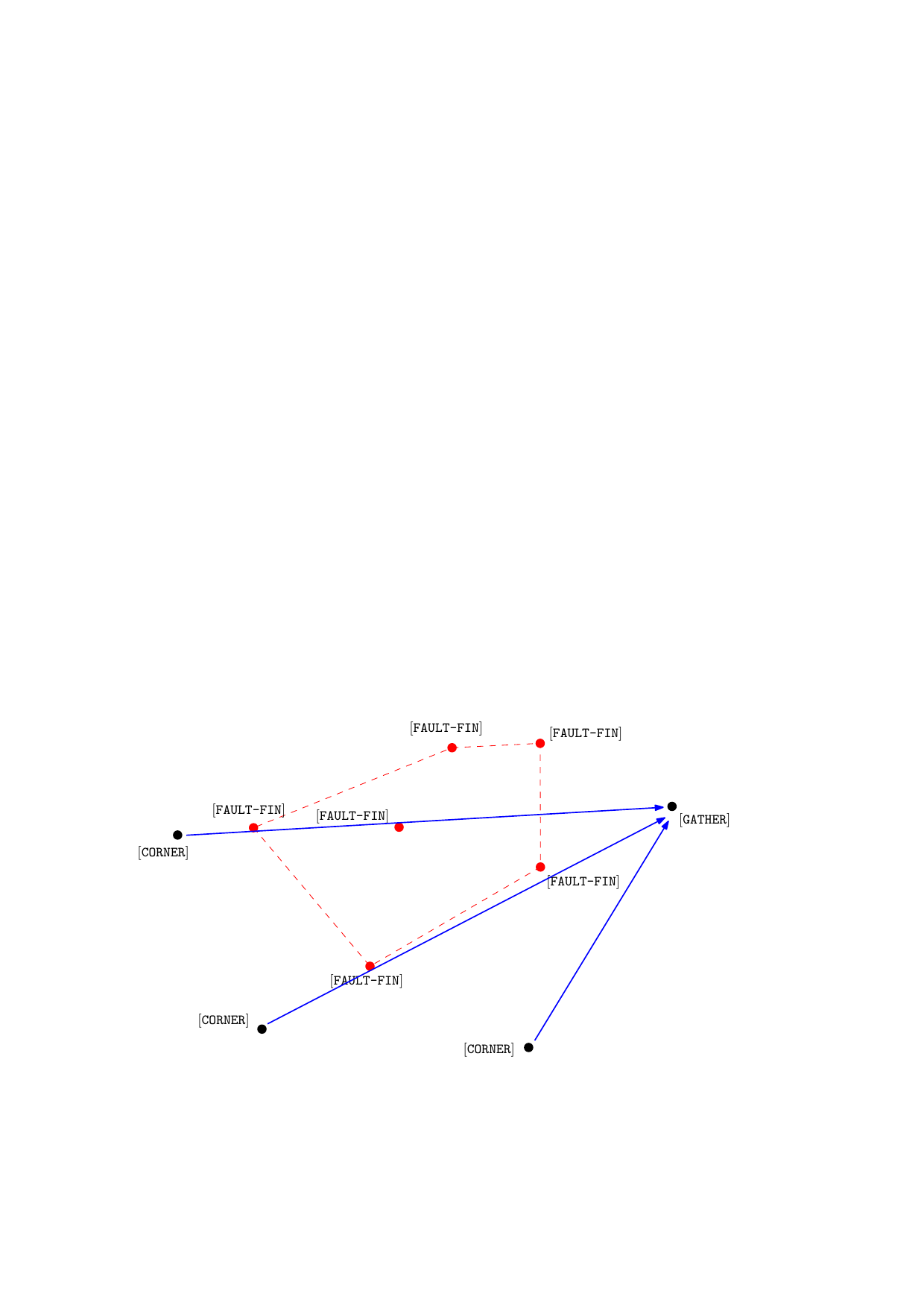}
        \caption{After seeing \texttt{GATHER}-colored robot at a point, all other non-faulty corner robots move to that position for gathering with the color \texttt{MoveTo-GATHER}}
        \label{fig:gather-at-corner-2}
    \end{minipage}\hfill
    \begin{minipage}[b]{0.48\linewidth}
    \centering
      \includegraphics[width=\linewidth]{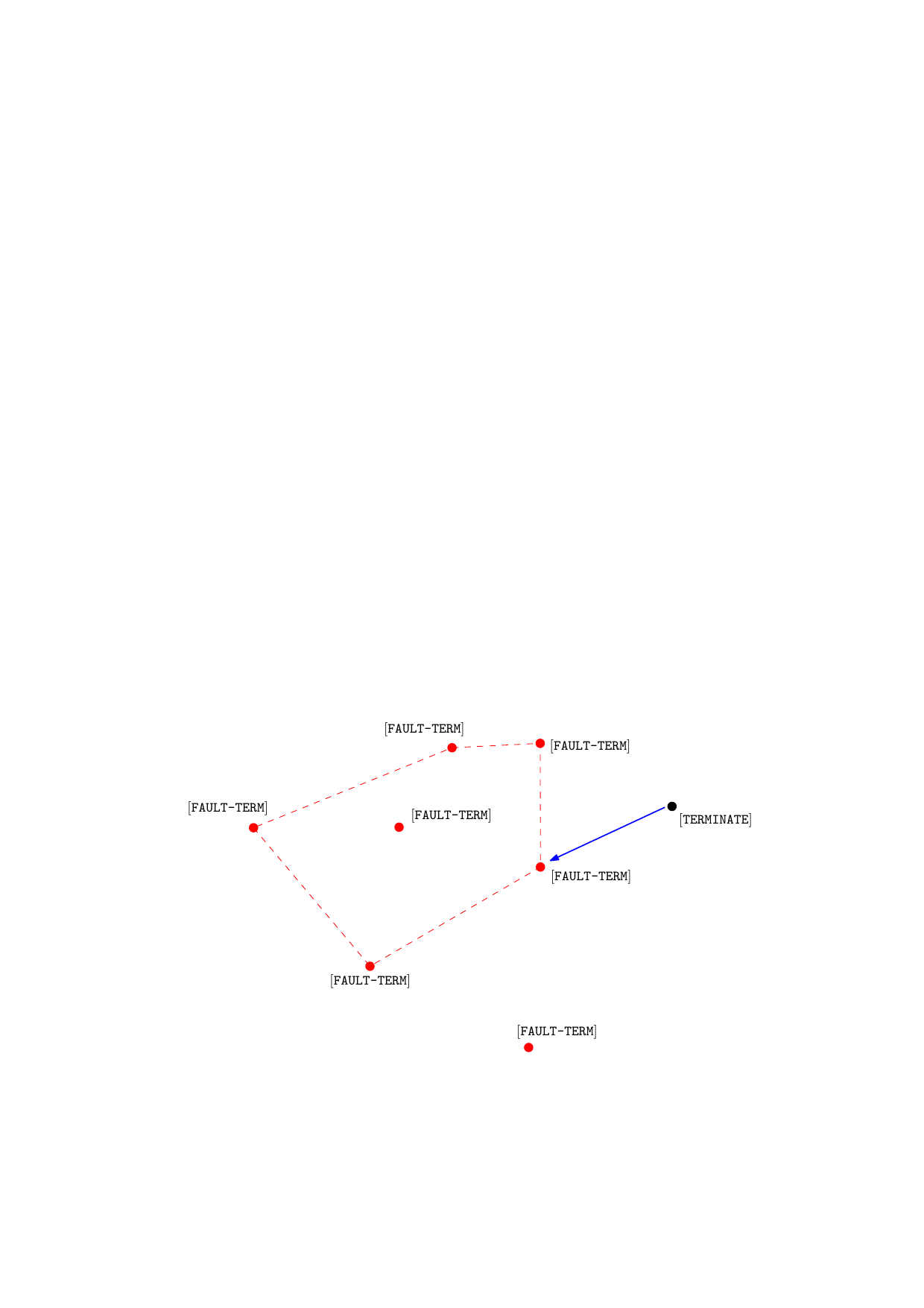}  
    \caption{All non-faulty robots are now gathered at a point with color \texttt{TERMINATE}, while other faulty corners update to \texttt{FAULT-TERMINATE}, then they together move to a faulty position}
    \label{fig:terminate-to-fault-2}
    \end{minipage}
\end{figure}

\begin{figure}[h]
    \centering
    \begin{minipage}[b]{0.48\linewidth}
    \centering
        \includegraphics[width=\linewidth]{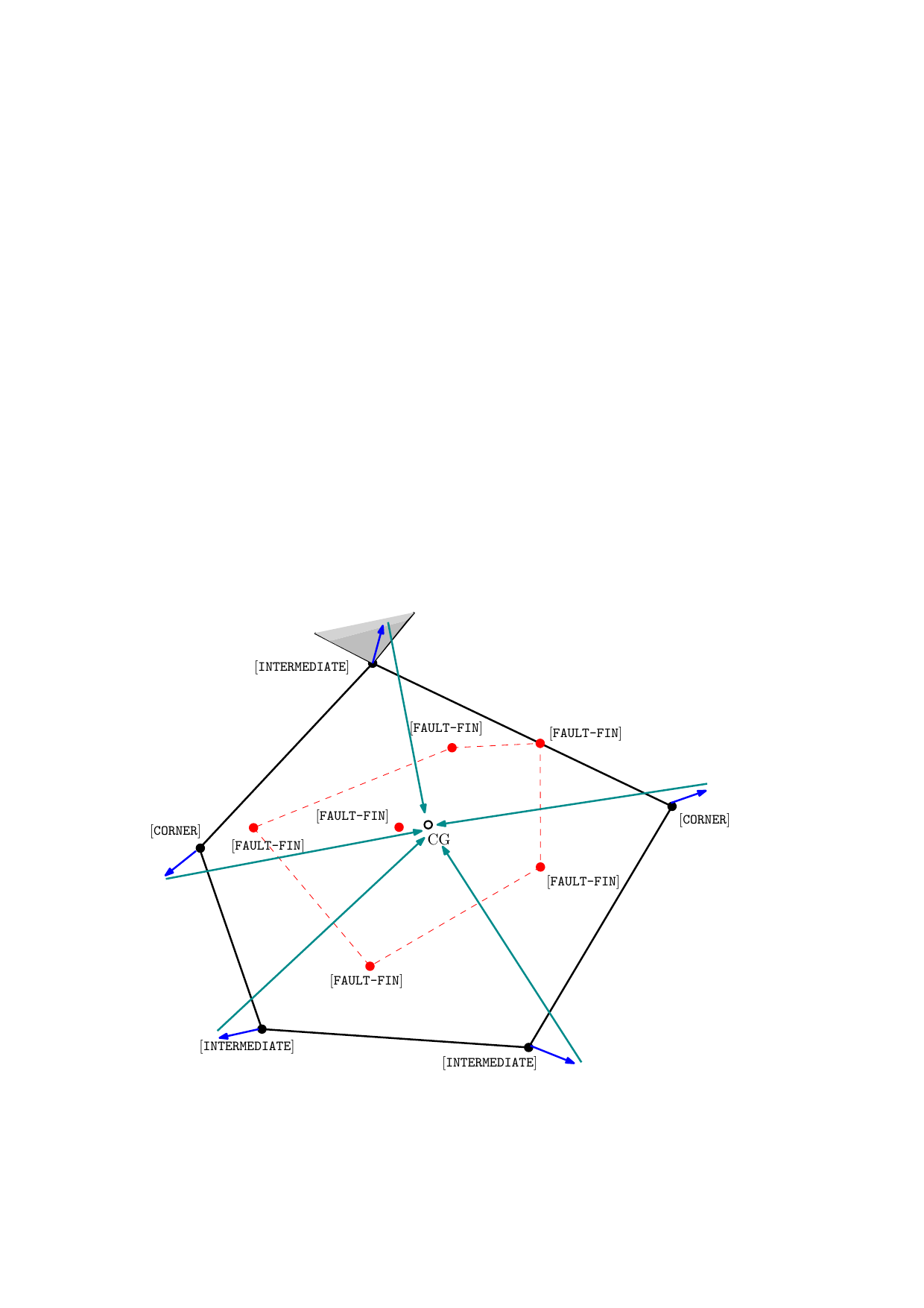}
        \caption{It is also possible that all \texttt{CORNER} robots move to CG simultaneously}
        \label{fig:corner-toCG-case2-2}
    \end{minipage}\hfill
    \begin{minipage}[b]{0.48\linewidth}
    \centering
      \includegraphics[width=\linewidth]{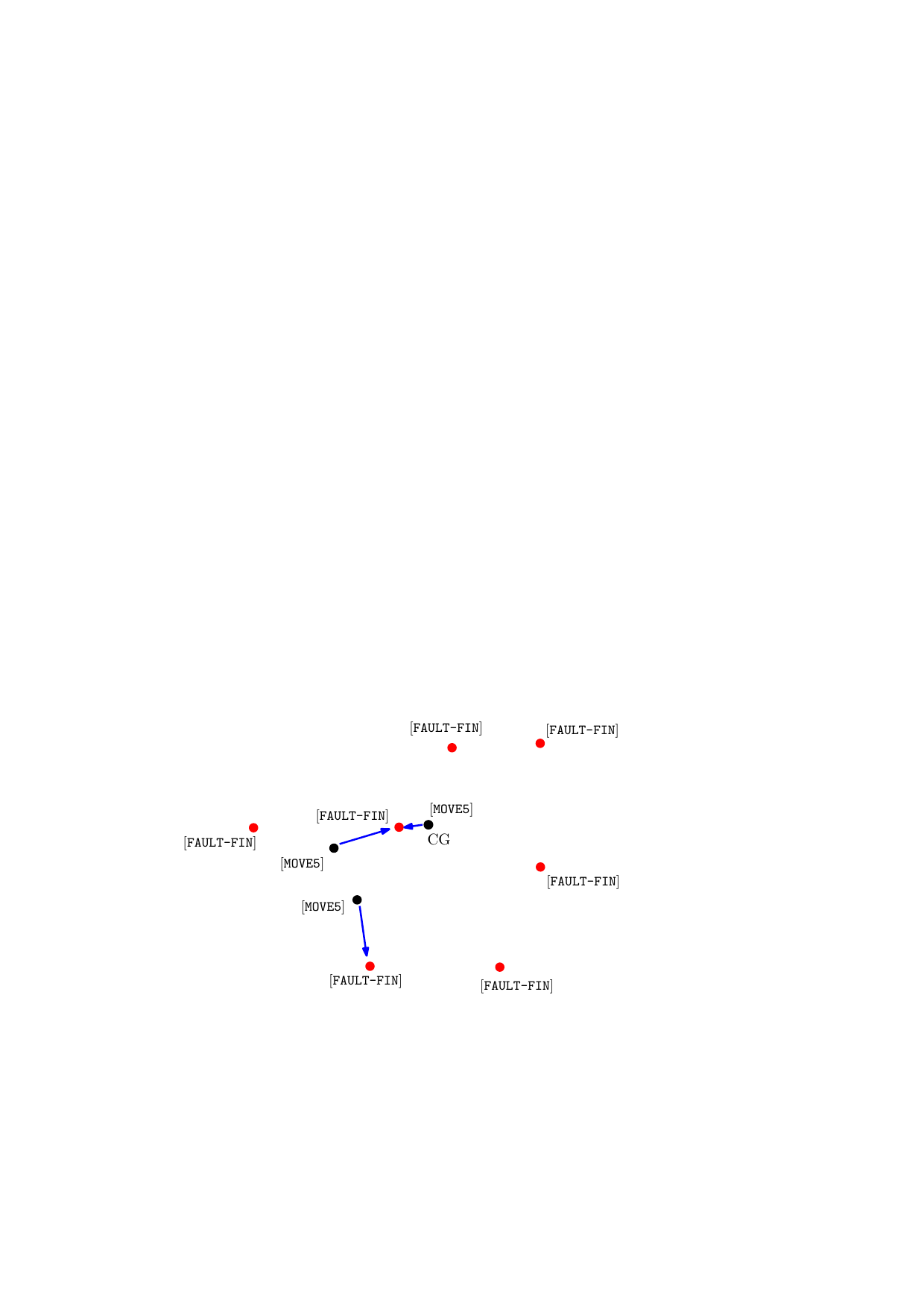}  
    \caption{Non-faulty robots reach to CG. All \texttt{MOVE5} robots move to \texttt{FAULT-FINISH} robots}
    \label{fig:move5-to-faultfinish-case2-2}
    \end{minipage}
    \hfill
    \begin{minipage}[b]{0.48\linewidth}
    \centering
      \includegraphics[width=\linewidth]{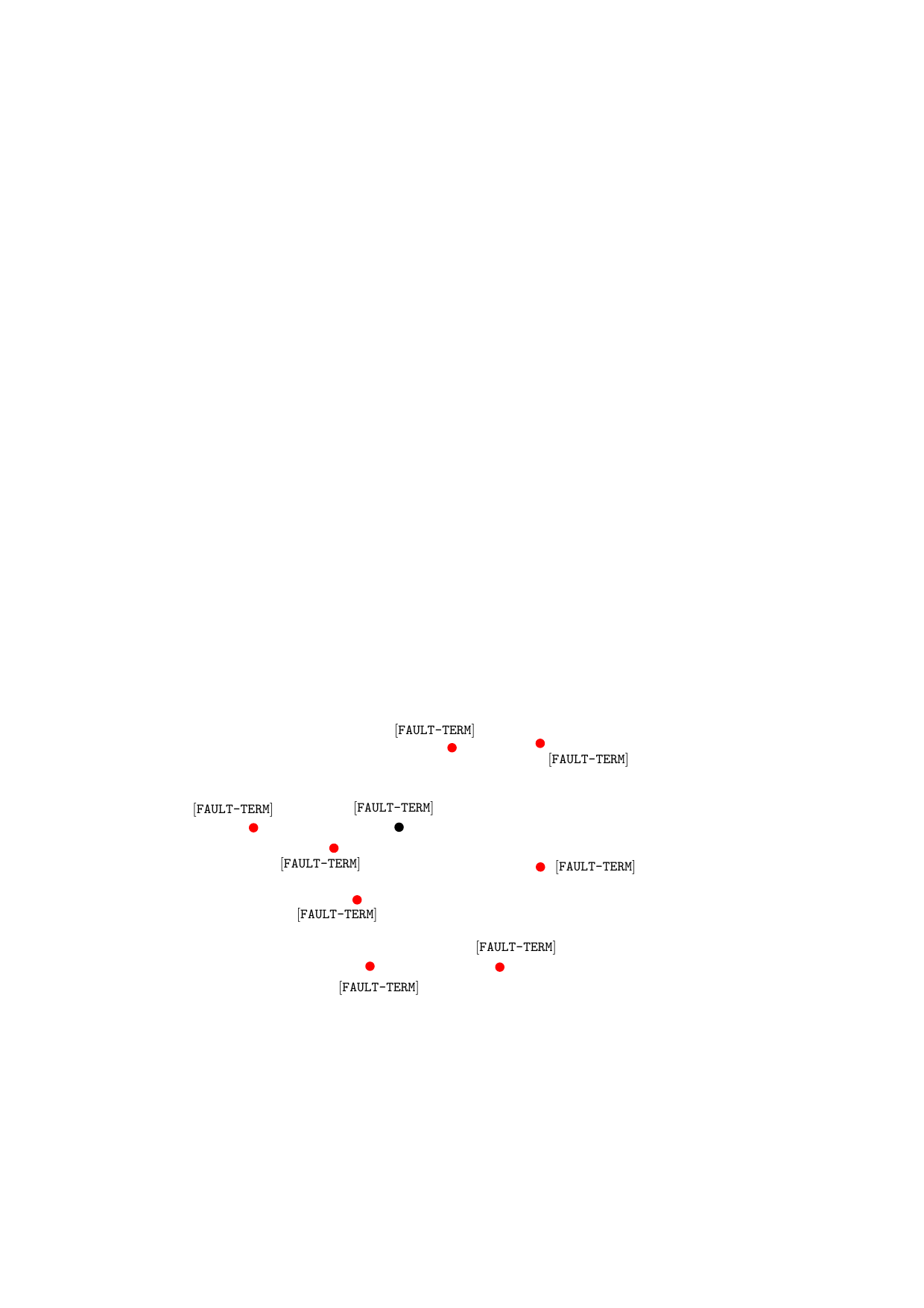}  
    \caption{All non-faulty robots terminate on a single position that also has faulty robots}
    \label{fig:final-faulterminate-case2-2}
    \end{minipage}
\end{figure}

\begin{figure}[h]
    \centering
    \begin{minipage}[b]{0.48\linewidth}
    \centering
        \includegraphics[width=\linewidth]{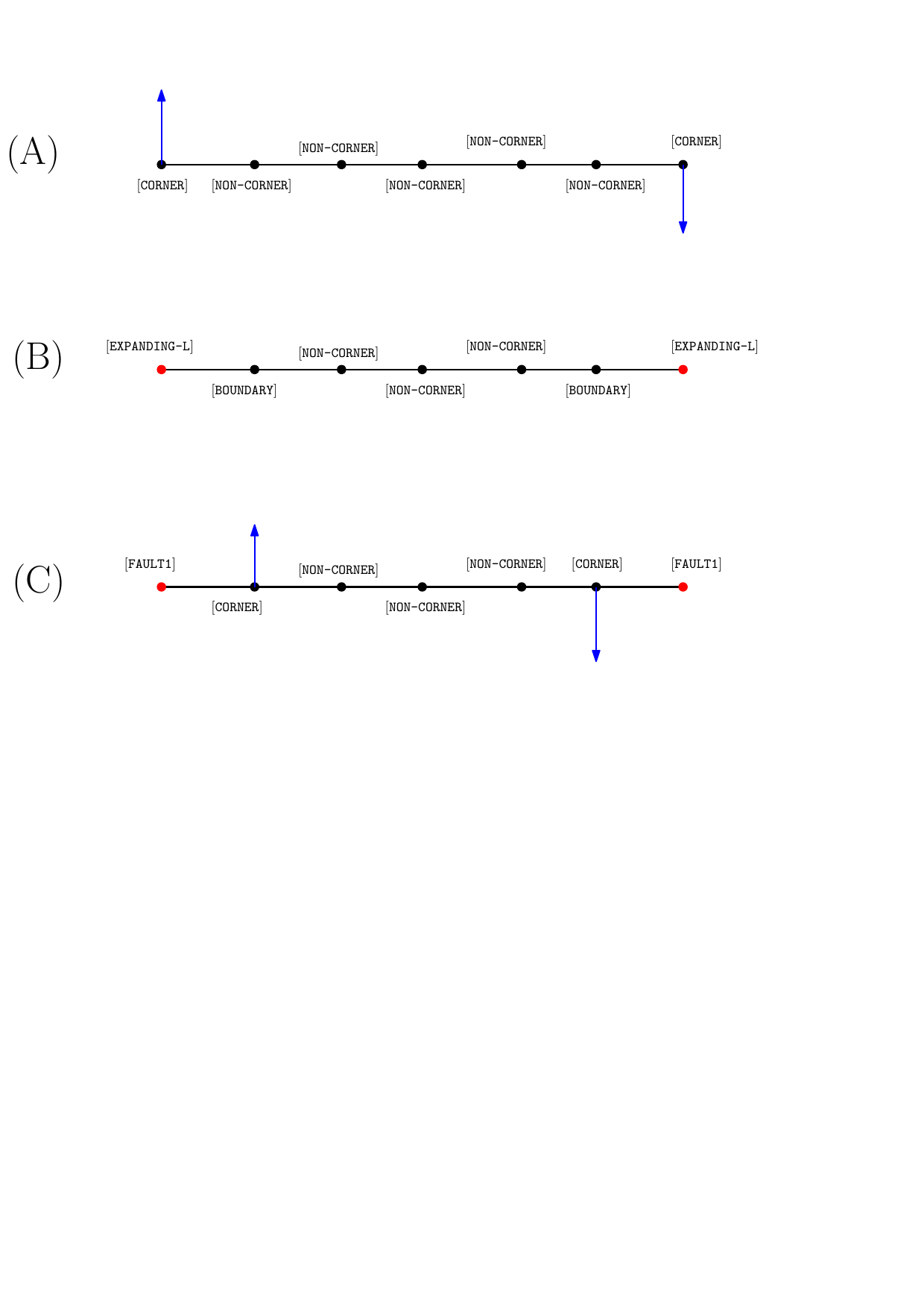}
        \caption{(A) The terminal \texttt{CORNER}-colored robots perpendicularly move away from linear $\mathcal{CH}$. (B) Due to fault, both terminals remain in place with color \texttt{EXOANDING-L}, while their adjacent robots switch to \texttt{BOUNDARY}. (C) The non-faulty neighbouring robots of the faulty terminals become new terminals on $\mathcal{CH}$ and move perpendicular to $\mathcal{CH}$}
        \label{fig:linear-case3.1-2}
    \end{minipage}\hfill
    \begin{minipage}[b]{0.48\linewidth}
    \centering
      \includegraphics[width=\linewidth]{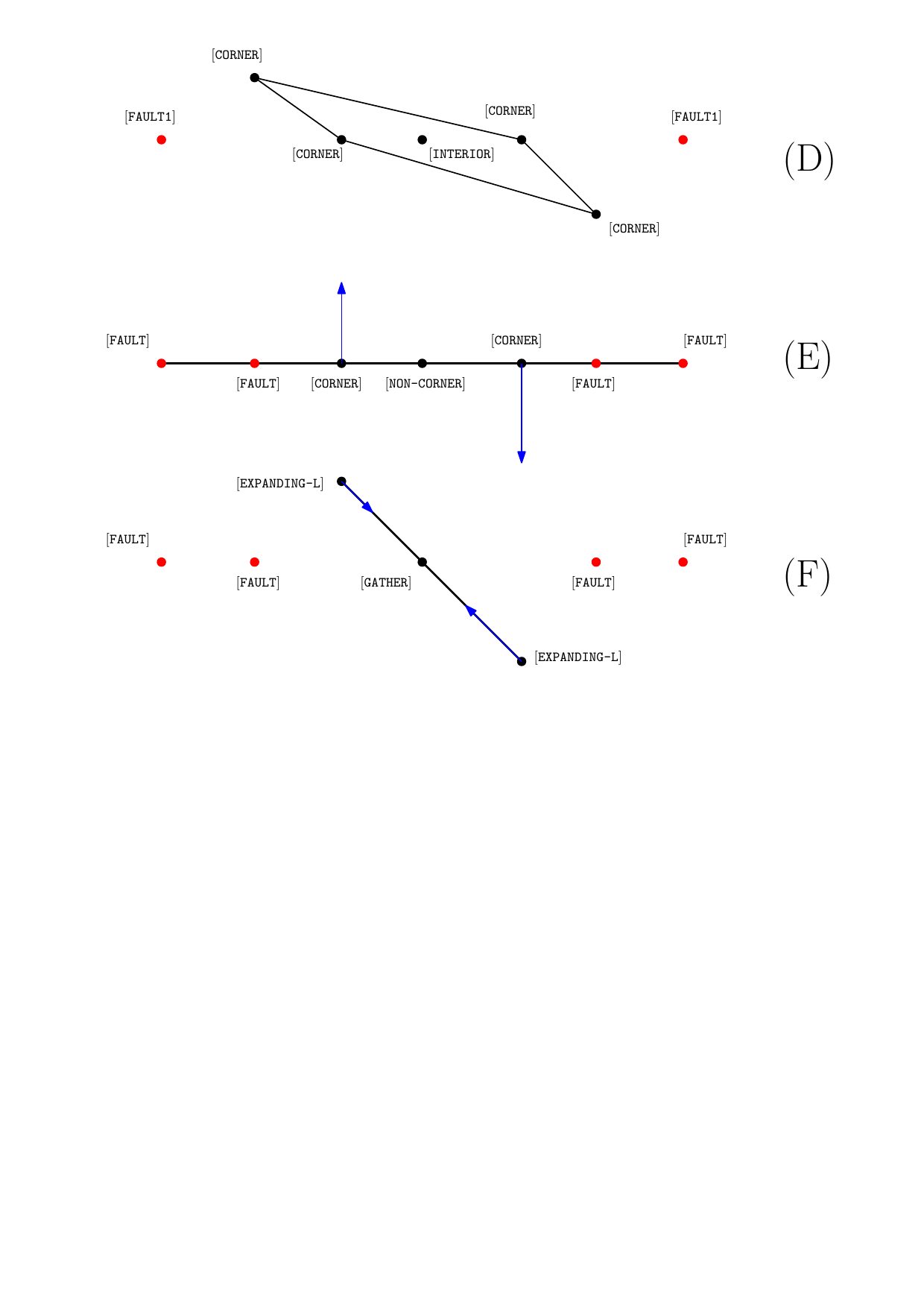}  
    \caption{(D) Successful move of terminal robots transforms the linear configuration to non-linear. (E) When only $3$ non-faulty robots are left, the non-faulty terminals go in opposite perpendicular directions. (F) They again become collinear after the movement and the central robot change its color to \texttt{GATHER}, becoming the gathering point}
    \label{fig:linear-case3.2-2}
    \end{minipage}
\end{figure}

\subsection{Analysis of the  Algorithm in 26-$\mathcal{LUMI}$}
\label{app-subsec:analysisalgo2}

In this section, we analyse the correctness and the time complexity of the algorithm presented in 26-$\mathcal{LUMI}$ model.

\begin{restatable}{lem}{robotcategorisation}\label{lemma:categorization}
    In stage \textsc{Robot-Categorisation}, all corner, boundary and interior robots on $\mathcal{CH}$ correctly color themselves with \texttt{CORNER}, \texttt{BOUNDARY} and \texttt{INTERIOR} resp. in $O(1)$ epochs.
\end{restatable}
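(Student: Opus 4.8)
The plan is to reduce the statement to a single geometric invariant — that obstructed visibility does not alter a robot's corner/boundary/interior status — and then dispatch the timing claim in one line. Concretely, the first step is to prove: for every robot $r$, $r$ is a corner (resp.\ boundary, interior) robot of its local hull $\mathcal{CH}_r$ if and only if it is a corner (resp.\ boundary, interior) robot of the global hull $\mathcal{CH}$. The engine behind this is the fact that obstruction only removes \emph{radially redundant} robots from $r$'s view: if a robot $s\neq r$ is hidden from $r$, then some robot lies on $\overline{rs}$, and hence the robot $s'$ closest to $r$ among those on the ray emanating from $r$ through $s$ (other than $r$ itself) is visible to $r$ and occupies the same direction from $r$ as $s$. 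Therefore the set $D_r$ of directions (rays from $r$) occupied by robots other than $r$ is the same whether we range over all robots or over $\mathcal{V}_r$ only. Because every wedge and half-plane appearing in the definition of corner/boundary/interior has its bounding line(s) passing through $r$, membership of a robot in such a region depends only on its direction from $r$; consequently the classification of $r$ is a function of $D_r$ alone — $r$ is a corner iff $D_r$ lies in an open half-plane through $r$, a boundary robot iff $D_r$ lies in some closed half-plane through $r$ but in no open one, and an interior robot otherwise. Since $D_r$ agrees locally and globally, so does the classification, and this argument is insensitive to degenerate cases such as $\mathcal{CH}_r$ being a segment or a single point, or to multiplicity points (collocated robots contribute no direction).

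With the invariant in hand, correctness is immediate: the stage \textsc{Robot-Categorisation} is movement-free (robots only change colors), so whenever $r$ is activated with color \texttt{OFF} it observes exactly the visible set $\mathcal{V}_r$ fixed by the initial positions, determines its position on $\mathcal{CH}_r$, and sets its color to \texttt{CORNER}, \texttt{BOUNDARY}, or \texttt{INTERIOR} accordingly; by the invariant this matches its role on $\mathcal{CH}$, and the color is never revised during this stage.

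For the time bound I would invoke the definition of an epoch: within the first epoch every robot completes at least one LCM cycle, and the first time a robot acts in this stage it still carries color \texttt{OFF} and updates unconditionally. Hence after one epoch every robot holds its correct color. The auxiliary rule by which a (now recolored) robot idles as long as it still sees an \texttt{OFF}-colored robot is vacuous once the first epoch has elapsed; it only postpones the transition to the next stage by $O(1)$ further epochs. Thus \textsc{Robot-Categorisation} meets its specification in $O(1)$ epochs.

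I expect the geometric invariant to be the only real obstacle — in particular, phrasing the ``radial redundancy'' argument cleanly and checking it against the edge cases of the definition (collinear sub-configurations, points where $\mathcal{CH}_r$ degenerates, and multiplicity points). Once that is settled, both the correctness of the assigned color and the $O(1)$-epoch bound are routine, since the stage involves no motion and the \texttt{OFF}-to-colored transition happens on first activation.
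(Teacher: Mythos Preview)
Your proposal is correct and follows essentially the same approach as the paper: establish that a robot's corner/boundary/interior status on its local hull $\mathcal{CH}_r$ coincides with its status on the global hull $\mathcal{CH}$, and then observe that a single epoch suffices for every robot to update its color. The paper's own proof merely asserts the local--global invariant (``by construction'') without justification, whereas you actually prove it via the radial-redundancy argument; your treatment is strictly more complete.
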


\begin{proof}
    By construction, a corner robot $r$ on its local convex hull $\mathcal{CH}_r$ is also a corner on $\mathcal{CH}$. The same is true for boundary and interior as well. Upon detection, a robot updates its color accordingly, which requires only $1$ epoch.
\end{proof}

\begin{restatable}{lem}{corollaryfirstlayercategorisarion}
    \label{cor:ch_1-categorization}
    After the stage \textsc{Robot-Categorisation}, an \texttt{INTERIOR}-colored robot $r$ correctly detects whether it is a part of $\mathcal{CH}^1$. 
\end{restatable}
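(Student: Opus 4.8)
The plan is to realise this statement as nothing more than Lemma~\ref{lemma:categorization} applied to the sub-configuration $C'$ consisting of the robots colored \texttt{INTERIOR}, whose outer convex layer is exactly $\mathcal{CH}^1$. First I would record, using Lemma~\ref{lemma:categorization}, that immediately after \textsc{Robot-Categorisation} a robot is colored \texttt{INTERIOR} precisely when it lies strictly inside $\mathcal{CH}$ (a robot on $\partial\mathcal{CH}$ admits a supporting line and hence would have been colored \texttt{CORNER} or \texttt{BOUNDARY}); equivalently, the \texttt{INTERIOR}-colored robots are exactly those on the layers $\mathcal{CH}^1,\mathcal{CH}^2,\dots$, so if $P_1$ denotes the convex hull of the \texttt{INTERIOR}-colored robots then $\mathcal{CH}^1=\partial P_1$. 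The rule $r$ uses is: form $\mathcal{CH}_r^*$, the convex hull of $r$ together with the \texttt{INTERIOR}-colored robots visible to $r$ (excluding all \texttt{CORNER}- and \texttt{BOUNDARY}-colored robots, as these are precisely the robots of $\mathcal{CH}^0$), and declare ``$r$ lies on $\mathcal{CH}^1$'' iff $r$ is a corner or boundary robot of $\mathcal{CH}_r^*$ in the earlier sense; I must show this decision is correct.

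The key geometric observation is that no \texttt{CORNER}- or \texttt{BOUNDARY}-colored robot can lie on the open segment joining two \texttt{INTERIOR}-colored robots: both endpoints lie in $\mathrm{int}(\mathcal{CH})$, hence so does the entire open segment, whereas every robot of $\mathcal{CH}^0$ lies on $\partial\mathcal{CH}$. Consequently, for an \texttt{INTERIOR}-colored robot $r$, the set of \texttt{INTERIOR}-colored robots it can see is exactly the set it would see if the $\mathcal{CH}^0$ robots were absent, so the classification $r$ extracts from $\mathcal{CH}_r^*$ is precisely the corner/boundary/interior classification of $r$ inside $C'$. Then the argument of Lemma~\ref{lemma:categorization} carries over verbatim to $C'$: if $r\in\partial P_1$, a supporting line of $P_1$ at $r$ places all robots of $C'$ — hence all visible ones — in one closed half-plane, so $r$ is a corner or boundary robot of $\mathcal{CH}_r^*$; if instead $r\in\mathrm{int}(P_1)$, then for every line $L$ through $r$ there is a robot of $C'$ strictly on each side of $L$, and taking, in each such direction, the robot of $C'$ closest to $r$ along the connecting segment — which is visible to $r$ (nothing lies strictly between), lies in $\mathrm{int}(\mathcal{CH})$, and is therefore itself \texttt{INTERIOR}-colored and strictly on the same side of $L$ — one sees that $r$ is classified as an interior robot. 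Since ``$r\in\partial P_1$'' is the same as ``$r$ lies on $\mathcal{CH}^1$'', this settles both directions.

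I expect the only genuine subtlety, and the place to be careful, to be this obstruction observation together with the degenerate situations in which $P_1$ is a single point or a line segment: there every \texttt{INTERIOR}-colored robot is on $\mathcal{CH}^1$, and one checks directly that $\mathcal{CH}_r^*$ is then a point or a sub-segment on which $r$ is a corner or boundary robot, so the rule correctly answers ``yes''. Everything else is immediate from Lemma~\ref{lemma:categorization} and standard convexity (the open segment between an interior point and any point of a convex set lies in the interior), and since only color relabelling happens in \textsc{Robot-Categorisation}, no additional epochs are needed, matching the $O(1)$-epoch bound of that stage.
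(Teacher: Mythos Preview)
Your proposal is correct and follows the same approach as the paper: form $\mathcal{CH}_r^*$ from the visible \texttt{INTERIOR}-colored robots and test whether $r$ lies on its boundary. In fact you supply justification the paper omits---the paper's proof is two sentences that simply assert the rule works, whereas you explicitly argue the key obstruction point (no $\mathcal{CH}^0$ robot can lie on the open segment between two interior points of $\mathcal{CH}$) and handle the degenerate cases, so your write-up is strictly more complete.
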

\begin{proof}
    This is done by considering $\mathcal{CH}_r^*$, which is the convex hull of all the robots with the color \texttt{INTERIOR}. In case of $r$ lying on the boundary of the $\mathcal{CH}_r^*$ (as a corner or boundary robot), it is on $\mathcal{CH}^1$, otherwise not. 
\end{proof}

\begin{restatable}{lem}{lemmaBoundarytointerior}
    \label{lemma:boundary-to-interior}
    In the stage \textsc{Boundary-To-Interior}, all the non-faulty boundary robots on $\mathcal{CH}^0$ become collocated with the interior robots of $\mathcal{CH}$ in $O(1)$ epochs.
\end{restatable}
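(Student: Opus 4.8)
The plan is to combine a short convexity/visibility argument with the observation that throughout this stage both the global hull $\mathcal{CH}$ and the set of ``interior positions'' stay frozen. First I would invoke Lemma~\ref{lemma:categorization}: after $O(1)$ epochs every robot has left colour \texttt{OFF} and is correctly coloured \texttt{CORNER}, \texttt{BOUNDARY} or \texttt{INTERIOR}, and because of the ``wait while an \texttt{OFF}-coloured robot is visible'' rule one further epoch suffices so that no \texttt{BOUNDARY} robot is still blocked on categorisation. Since the stage \textsc{Boundary-To-Interior} only arises in Case~1, at least one \texttt{INTERIOR}-coloured robot exists, so $\mathcal{CH}$ is a non-degenerate polygon whose vertices are the positions of the corner robots of $\mathcal{CH}^0$. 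In this stage corner robots never move, an \texttt{INTERIOR}-coloured robot only ``maintains the status quo'' while some \texttt{BOUNDARY} or \texttt{MoveTo\_INTERIOR} robot is visible, and (by induction on the moves performed) every \texttt{MoveTo\_INTERIOR} or \texttt{INTERIOR} robot sits at the position of some original interior robot, while every original interior robot stays put with colour \texttt{INTERIOR}. Hence the convex hull of all robots always equals the convex hull of the stationary corners, so $\mathcal{CH}$ and $\mathrm{int}(\mathcal{CH})$ are unchanged throughout the stage, and any robot lying in $\mathrm{int}(\mathcal{CH})$ is collocated with an \texttt{INTERIOR}-coloured original interior robot.

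Next I would establish the key visibility claim: at every moment of the stage, each \texttt{BOUNDARY}-coloured robot $r$ sees at least one \texttt{INTERIOR}-coloured robot. A boundary robot lies in the relative interior of an edge of $\mathcal{CH}$, so for any $p\in\mathrm{int}(\mathcal{CH})$ every point of $\overline{rp}$ other than $r$ lies in $\mathrm{int}(\mathcal{CH})$ (the standard fact that $(1-t)x+ty\in\mathrm{int}(K)$ whenever $x\in\partial K$, $y\in\mathrm{int}(K)$ and $t\in(0,1]$). Thus only robots in $\mathrm{int}(\mathcal{CH})$ can occlude $r$'s line of sight to an interior robot; no corner or boundary robot can. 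Starting from any original interior robot and repeatedly replacing it by a robot (necessarily in $\mathrm{int}(\mathcal{CH})$) lying strictly between $r$ and it, the distance to $r$ strictly decreases, so after finitely many steps $r$ has an unobstructed line of sight to some robot in $\mathrm{int}(\mathcal{CH})$; by the previous paragraph that position is also occupied by an \texttt{INTERIOR}-coloured original interior robot, which keeps that colour for the whole stage. Hence $r$ sees an \texttt{INTERIOR}-coloured robot and can apply its move rule.

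The remainder is bookkeeping. When a non-faulty \texttt{BOUNDARY} robot $r$ is next activated (and no \texttt{OFF} robot is in view), it recolours to \texttt{MoveTo\_INTERIOR} and moves to the position of its nearest visible \texttt{INTERIOR}-coloured robot $q$; since $q$ is stationary and $r$ is non-faulty the move completes, so at $r$'s following activation it is collocated with $q$ and adopts colour \texttt{INTERIOR} --- i.e.\ it is now collocated with an interior robot of $\mathcal{CH}$. (A faulty boundary robot instead halts in transit and, correctly, switches to \texttt{FAULT}; this is irrelevant to the statement.) Because interior-robot positions and colours are immutable during the stage, a stale snapshot cannot mislead $r$ about $q$, and under {\async} a non-faulty \texttt{MoveTo\_INTERIOR} robot cannot be re-activated before its move has finished. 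Counting epochs: $O(1)$ for \textsc{Robot-Categorisation}; at most one epoch for the \texttt{BOUNDARY}$\rightarrow$\texttt{MoveTo\_INTERIOR} recolour-and-move of every boundary robot; and at most one more epoch for every non-faulty such robot to confirm \texttt{MoveTo\_INTERIOR}$\rightarrow$\texttt{INTERIOR}; so $O(1)$ epochs in total.

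The step I expect to be the main obstacle is the visibility claim under {\async}: one must show that a \texttt{BOUNDARY} robot always has an \texttt{INTERIOR}-coloured robot in view even while other boundary robots migrate inward and temporarily carry the intermediate colour \texttt{MoveTo\_INTERIOR}. The resolution is precisely the ``frozen hull / frozen interior positions'' observation of the first paragraph together with the fact that any robot occupying an interior position necessarily shares it with an unchanging \texttt{INTERIOR}-coloured original interior robot, so occlusion can only ever hide one \texttt{INTERIOR} robot behind another.
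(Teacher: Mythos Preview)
Your argument has a real gap in the visibility claim. The invariant you assert in the first paragraph, ``every \texttt{MoveTo\_INTERIOR} or \texttt{INTERIOR} robot sits at the position of some original interior robot,'' is false under {\async}: once a boundary robot has recoloured to \texttt{MoveTo\_INTERIOR} and begun its Move phase, any other robot performing a Look sees it at its current intermediate point on the segment from its boundary start to its interior target. That point lies in $\mathrm{int}(\mathcal{CH})$ but is \emph{not} an original interior position, so no \texttt{INTERIOR}-coloured robot is collocated with it. Your descending-chain argument can therefore terminate at such an in-transit robot: it is visible to $r$, it lies in $\mathrm{int}(\mathcal{CH})$, but it carries colour \texttt{MoveTo\_INTERIOR}, not \texttt{INTERIOR}. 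Nothing you prove forces $r$ to see an \texttt{INTERIOR}-coloured robot, and hence nothing lets $r$ trigger its move rule. Your final paragraph correctly identifies this as the crux but resolves it with the same false premise that every robot in $\mathrm{int}(\mathcal{CH})$ shares its position with an \texttt{INTERIOR}-coloured original interior robot.

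The paper closes exactly this gap by using the word ``nearest'' in the move rule, which you never exploit. Suppose an in-transit \texttt{MoveTo\_INTERIOR} robot $r'$, travelling from its former boundary position $r'^{prev}$ to its own nearest \texttt{INTERIOR} target $r_2$, lay on the segment $\overline{rr_1}$, where $r_1$ is the \texttt{INTERIOR} robot nearest to $r$. Then the segments $\overline{rr_1}$ and $\overline{r'^{prev}r_2}$ cross; in the quadrilateral on $r,r'^{prev},r_1,r_2$ these are the two diagonals, so $d(r,r_1)+d(r'^{prev},r_2)>d(r,r_2)+d(r'^{prev},r_1)$, contradicting the nearest choices $d(r,r_1)\le d(r,r_2)$ and $d(r'^{prev},r_2)\le d(r'^{prev},r_1)$. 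Hence migrating boundary robots never obstruct $r$'s line of sight to its nearest \texttt{INTERIOR} robot; your (correct) convexity observation already rules out corner and still-stationary boundary robots as occluders. This non-crossing-of-nearest-neighbour-paths step is the missing idea in your proposal.
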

\begin{proof}
    A \texttt{BOUNDARY}-colored robot moves to the position of the nearest \texttt{INTERIOR}-colored robot with the color \texttt{MoveTo-INTERIOR}.
    We argue that such a robot does not experience any delay in initiating its movement due to obstruction by other robots.
    Assume for contradiction that a \texttt{BOUNDARY}-colored robot $r$ is unable to see its closest \texttt{INTERIOR}-colored robot because another robot $r'$ is blocking its line of sight. 
    If both of these robots move toward the same robot $r_1$, the robot $r'$ cannot lie on the line segment $\overline{rr_1}$, as they started from different locations and hence the lines $\overleftrightarrow{rr_1}$ and $\overleftrightarrow{r'r_1}$ do not intersect except at $r_1$. 
    This implies that $r$ and $r'$ must have different destinations $r_1$ and $r_2$, respectively. 
    Let $r'$ start from a previous position $r'^{prev}$ towards $r_2$.
    Since $r$ moves toward $r_1$, we have $d(r,r_1) \leq d(r, r_2)$ and also $d(r'^{prev}, r_2) \leq d(r'^{prev}, r_1)$. These two inequalities give  $d(r, r_1)+d(r'^{prev}, r_2) \leq d(r, r_2)+ d(r'^{prev}, r_1)$, which contradicts the geometric fact that the sum of the lengths of two opposite sides of a quadrilateral $rr'^{prev} r_1 r_2$ is lesser than the sum of the lengths of its two diagonals.
    Hence, no such obstruction occurs and all the non-faulty boundary robots move to the position of interior robots and switch their color to \texttt{INTERIOR} within $2$ epochs.
\end{proof}

\begin{restatable}{rem}{RemarkTwoELigibleWaitFree}
    \label{rem:two-eligible-wait-free}
    Any two eligible robots from the same layer do not wait for one another to move to the position of \texttt{CORNER}-colored robots.
\end{restatable}

\begin{restatable}{lem}{lemmaInteriorToCorners}\label{lemma:interior-to-corners}
    In the stage $\textsc{Interior-To-Corners}$, all the non-faulty \texttt{INTERIOR}-colored robots move to the corners of $\mathcal{CH}^0$ in $O(l)$ epochs.
\end{restatable}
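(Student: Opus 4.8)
The plan is to argue by induction on the number of distinct convex layers formed by the currently \emph{active} robots — those whose color is neither \texttt{CORNER} nor \texttt{FAULT} — showing that each outermost active layer is absorbed into the corners of $\mathcal{CH}^0$ within $O(1)$ epochs, and that there are at most $l-1$ such layers to process.

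First I would pin down the structural invariant at the entry of \textsc{Interior-To-Corners}. By Lemma~\ref{lemma:categorization} and Lemma~\ref{lemma:boundary-to-interior}, once that stage begins every non-faulty robot is either \texttt{CORNER} (sitting on a corner of $\mathcal{CH}^0$) or \texttt{INTERIOR}, and the \texttt{INTERIOR} robots occupy the $l-1$ layers $\mathcal{CH}^1,\dots,\mathcal{CH}^{l-1}$ (the former boundary robots of $\mathcal{CH}^0$ having collocated with robots of $\mathcal{CH}^1$, so no new positions are created). Using the geometric fact recalled at the start of Section~\ref{sec:lfalgorithm} — a robot's corner/boundary/interior status on its \emph{local} hull coincides with its status on the \emph{global} hull — applied to the hull $\mathcal{CH}_r^{*}$ obtained by deleting the \texttt{CORNER} and \texttt{FAULT} robots, I would show that the eligibility test in \textsc{Detecting-Eligible-Layer} marks exactly the robots of the current outermost active layer, and that the \texttt{NEXT}/\texttt{NOT-NEXT} handshake makes this determination agree across all robots before any eligible robot starts to move, so the eligible layer is well defined even under \async.

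Next I would prove the single-layer claim: once a layer $\Lambda$ becomes the eligible layer, within $O(1)$ epochs every non-faulty robot of $\Lambda$ becomes collocated with a corner of $\mathcal{CH}^0$ and recolors \texttt{CORNER}, while every faulty robot of $\Lambda$ recolors \texttt{FAULT}. The ingredients are: (i) the \texttt{CORNER} robots stay put for the whole stage, since a \texttt{CORNER} robot moves only after it sees nothing outside $\{\texttt{CORNER},\texttt{FAULT-FINISH}\}$; (ii) by the boundary-visible-area part of Lemma~\ref{lemma:visiblearea}, once an \texttt{ELIGIBLE} robot $r$ has moved to $BoundaryVisibleArea(r,\mathcal{CH}_r^{*})$ it sees every stationary robot in the outward half-plane $\mathcal{H}_{L_r}$, which contains at least one corner of $\mathcal{CH}^0$; (iii) from its \texttt{MOVE1} state $r$ then moves straight onto the nearest visible \texttt{CORNER} robot, and the constrained choice of the target point, together with Remark~\ref{rem:two-eligible-wait-free}, keeps the number of epochs that a single non-faulty eligible robot of $\Lambda$ needs to reach a corner bounded by a constant (a concurrently moving eligible robot of $\Lambda$ neither blocks $r$'s view of that corner nor forces $r$ to wait); (iv) the interior robots of the inner layers stay idle during this round, because each of them either sees an \texttt{ELIGIBLE}/\texttt{MOVE1}/\texttt{MOVE2} robot, or — being obstructed — is not on the boundary of its own $\mathcal{CH}_r^{*}$ and hence not yet eligible. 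Since each transition \texttt{ELIGIBLE}$\to$\texttt{MOVE1}$\to$\texttt{MOVE2}$\to$\texttt{CORNER}/\texttt{FAULT}, as well as the preceding \texttt{NEXT}/\texttt{NOT-NEXT} reclassification, costs $O(1)$ epochs, the entire round is $O(1)$ epochs.

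Finally I would close the induction: after the round, all robots of $\Lambda$ are \texttt{CORNER} or \texttt{FAULT}; since $\mathcal{CH}_r^{*}$ ignores both colors, the number of active layers has dropped by exactly one, the next outermost active layer becomes eligible, and the argument repeats — at most $l-1$ times, hence $O(l)$ epochs in total. The main obstacle will be making steps (iii) and (iv) fully rigorous: ruling out, under obstructed visibility and adversarial asynchronous activation, any mutual blocking or deadlock among the concurrently moving eligible robots of one layer, i.e. verifying that the half-plane restrictions and the $\frac{1}{2}$-distance rules built into the visible-area and target-point computations really do force every non-faulty eligible robot to complete both of its moves in $O(1)$ epochs. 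This geometric bookkeeping is the delicate core of the lemma and is exactly where the $19$ extra colors and the stricter constraints over the $7$-color algorithm are needed.
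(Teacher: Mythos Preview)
Your proposal is correct and follows essentially the same approach as the paper: a layer-by-layer induction showing each eligible layer is absorbed into $\mathcal{CH}^0$ in $O(1)$ epochs, with the \texttt{NEXT}/\texttt{NOT-NEXT} handshake isolating one layer at a time. The paper resolves the ``delicate core'' you flag in (iii) with a concrete geometric argument --- if a \texttt{MOVE2} robot $r_2$ heading to its nearest corner $r_3$ blocked $r$'s view of $r$'s nearest corner $r_1$, the inequalities $d(r,r_1)\le d(r,r_3)$ and $d(r_2^{prev},r_3)\le d(r_2^{prev},r_1)$ would contradict the quadrilateral diagonal inequality --- and separately handles the corner-vs-boundary cases on $\mathcal{CH}_r^{*}$; it also notes that any temporary reversion \texttt{MOVE1}$\to$\texttt{ELIGIBLE} due to blocking by an outer-layer \texttt{MOVE2} robot is absorbed into that outer layer's $O(1)$ epochs, which is the piece that makes your induction step fully airtight.
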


\begin{proof}
    We begin with the following claim.
    
    \noindent \textit{Claim: In the stage $\textsc{Interior-To-Corners}$, all the non-faulty \texttt{INTERIOR}-colored robots on $\mathcal{CH}^1$ move to the corners of $\mathcal{CH}^0$ in $O(1)$ epochs.} 

    \noindent \textit{Proof:} After the stage \textsc{Boundary-To-Interior} is over, all robots lying on $\mathcal{CH}^1$ correctly determine their eligibility within a single epoch as ensured by Corollary \ref{cor:ch_1-categorization}: eligible ones adopt the color \texttt{NEXT}, whereas those in the interior of $\mathcal{CH}^1$ update to \texttt{NOT-NEXT}.
    Subsequently, all the \texttt{NEXT}-colored robots transition to \texttt{ELIGIBLE}, after which all \texttt{NOT-NEXT}-colored robots change their color to \texttt{INTERIOR}. 
    Once this color updation is complete, an eligible robot $r$ on $\mathcal{CH}^1$ computes $BoundaryVisibleArea(r, \mathcal{CH}_r^*)$, changes its color to \texttt{MOVE1} and moves to a point within this area.
    By Lemma \ref{lemma:visiblearea}, the robot with color \texttt{MOVE1} sees all the stationary \texttt{CORNER}-colored robots lying on the open half-plane $H_{L_r}$, unless blocked by a simultaneously moving \texttt{MOVE2}-colored robot. 
    This potential obstruction arises when another \texttt{MOVE2} robot $r'$ is moving toward its designated \texttt{CORNER}-colored target. Nevertheless, we prove that each \texttt{MOVE1}-colored robot $r$ on $\mathcal{CH}^1$ always finds a valid corner to move to (i.e., all \texttt{CORNER}-colored robots cannot be blocked). To prove this, we analyse two cases depending on whether $r$ is a boundary or corner robot on $\mathcal{CH}_r^*$. 

    Let $r$ be the boundary robot on $\mathcal{CH}_r^*$ and $r_1$ denote the nearest \texttt{CORNER}-colored robot to $r$ in the open half-plane $\mathcal{H}_{L_r}$.
    Suppose, for contradiction, that $r_1$ is not visible to $r$, implying a \texttt{MOVE2}-colored robot $r_2$ lies between them and blocks the view.
    Let $r_2^{prev}$ be the previous position of $r_2$ before its movement.
    If $r_1$ is the target for the robot $r_2^{prev}$, implying $r_2^{prev}$ would lie on the line $\overleftrightarrow{rr_1}$. 
    However, this would imply that $r$ lies on $\overleftrightarrow{r_1 r_2^{prev}}$, which is not possible by Lemma \ref{lemma:visiblearea}.
    %For that to happen, $r_2$ must have originally been on the line $\overleftrightarrow{rr_1}$ before it started moving.
    This means that $r_2^{prev}$ has a different \texttt{CORNER}-colored robot as its target other than $r_1$, say $r_3$. 
    Since $r_1$ is the nearest \texttt{CORNER}-colored robot to $r$ and $r_3$ is the nearest to $r_2^{prev}$, we get $d(r,r_1) \leq d(r, r_3)$ and $d(r_2^{prev}, r_3) \leq d(r_2^{prev}, r_1)$, implying $d(r, r_1)+ d(r_2^{prev}, r_3) \leq d(r, r_3) + d(r_2^{prev}, r_1)$, which is a contradiction to the fact that the sum of two opposite sides of the quadrilateral $rr_2^{prev}r_1r_3$ is lesser than the sum of the two diagonals. 
    Additionally, suppose there exists another robot $r'$ that shares the same target $r_1$ as $r$. Since robots move directly toward their target, the paths $\overline{rr_1}$ and $\overline{r'r_1}$ can only intersect at $r_1$, not before. 
    Therefore, $r_1$ is visible to the robot $r$.

    Now consider the case where the \texttt{MOVE1}-colored robot $r$ is a corner robot on $\mathcal{CH}_r^*$. 
    It is possible that $r$ was a boundary robot on $\mathcal{CH}_r^*$ in its previous LCM cycle, but became a corner robot on $\mathcal{CH}_r^*$ in the current cycle due to movements of its neighbouring robots, potentially altering the structure of $\mathcal{CH}_r^*$ (see the Fig. \ref{fig:eligible-to-corner-2}). 
    Consequently, the currently computed open half-plane $\mathcal{H}_{L_r}$ differs from its previous counterpart $H_{L_r}^{prev}$.
    Observe that $\mathcal{H}_{L_r}^{prev} \subset \mathcal{H}_{L_r}' \cup \mathcal{H}_{L_r}''$.
    We claim that one of the half-planes $\mathcal{H}_{L_r}'$ and $\mathcal{H}_{L_r}''$ contains a visible \texttt{CORNER}-colored robot. 
    W.l.o.g, let $\mathcal{H}_{L_r}'$ contain exactly one such robot, say $r_1$.
    If $r_1$ is visible to $r$, the claim is established. 
    Otherwise, let $r_2$ be a \texttt{MOVE2}-colored robot $r_2$ obstructing the view between $r$ and $r_1$.
    Then, its previous position $r_2^{prev}$ must lie on the line segment $\overline{rr_1}$, implying that $r$ and $r_2^{prev}$ were neighbours on $\mathcal{CH}_r^*$ in the previous cycle and $r$ was a boundary robot then.
    In that case, $r$ considers the other half-plane $\mathcal{H}_{L_r}''$ as $\mathcal{H}_{L_r}$. By applying the same reasoning as in the last paragraph, namely the geometric constraints and the inequalities, we can conclude that $\mathcal{H}_{L_r}''$ (and hence $\mathcal{H}_{L_r}$) has a \texttt{CORNER}-colored robot.

    Hence, an eligible robot $r$ colored \texttt{MOVE1} on $\mathcal{CH}^1$ always identifies a \texttt{CORNER}-colored robot as its valid target to move to and proceeds without waiting for another eligible robot, regardless of whether it is a boundary or a corner of $\mathcal{CH}_r^*$.
    Therefore, in the stage \textsc{Interior-To-Corners}, any non-faulty \texttt{INTERIOR} robot on $\mathcal{CH}^1$ moves to the corners of $\mathcal{CH}^0$ in $6$ epochs following a sequence of color transitions: \texttt{INTERIOR} $\rightarrow$ \texttt{NEXT} $\rightarrow$ \texttt{ELIGIBLE} $\rightarrow$ \texttt{MOVE1} $\rightarrow$ \texttt{MOVE2} $\rightarrow$ \texttt{CORNER} with one additional epoch potentially required when an \texttt{ELIGIBLE} robot waits for visible \texttt{NOT-NEXT} robots to update their color. This concludes the proof of claim. 
    
    % As shown in Lemma \ref{lemma:interior-to-corners-initial}, any non-faulty robot lying on $\mathcal{CH}^1$ reaches a corner of $\mathcal{CH}^0$ in $O(1)$ epochs.
    
    We now extend this claim to the robots on $\mathcal{CH}^2$. 
    We prove that once all non-faulty robots on $\mathcal{CH}^1$ settle on the corner of $\mathcal{CH}^0$, any non-faulty \texttt{INTERIOR}-colored robot on $\mathcal{CH}^2$ also moves reach $\mathcal{CH}^0$ in $O(1)$ epochs.
    Such a robot $r$, when it becomes eligible, undergoes a fixed sequence of color transitions (mentioned in above claim) and updates its color to \texttt{MOVE1} after $4$ epochs. 
    When activated with the color \texttt{MOVE1}, $r$ searches for a \texttt{CORNER}-colored robot on $\mathcal{H}_{L_r}$. 
    If such a robot is visible, it reaches $\mathcal{CH}^0$ within $2$ additional epochs and changes its color to \texttt{CORNER}. 
    Therefore, in this case, $r$, starting from $\mathcal{CH}^2$, settles on $\mathcal{CH}^0$ within $6$ epochs.
    Now consider the case when $r$ finds no \texttt{CORNER}-colored robot in $\mathcal{H}_{L_r}$. By construction of the half planes, $\mathcal{H}_{L_r}$ must contain a \texttt{CORNER} robot, say $r_1$. 
    This implies that there is another robot $r_2$ lying between $r$ and $r_1$. 
    By Lemma \ref{lemma:visiblearea}, $r_2$ cannot be a \texttt{FAULT}-colored (stationary) robot, hence $r_2.color =$ \texttt{MOVE2}. 
    Using a similar argument as presented in the proof of the above claim and by Remark \ref{rem:two-eligible-wait-free}, $r_2^{prev}$, the previous position of $r_2$ can not lie on $\mathcal{CH}^2$, and must therefore belong to $\mathcal{CH}^1$. 
    This contradicts our assumption that all non-faulty robots from $\mathcal{CH}^1$ have already settled on the corners of $\mathcal{CH}^0$.
    Therefore, if a \texttt{MOVE1} robot $r$ does not see any \texttt{CORNER}-robot and reverts its color to \texttt{ELIGIBLE}, our claim still holds since all such color transitions of $r$ hide behind the $6$ epochs that was taken by the robots of $\mathcal{CH}^1$ to reach $\mathcal{CH}^0$.

    By extending the above arguments, we conclude that every layer $\mathcal{CH}^i$ (for $1\leq i \leq l-1$) requires $O(1)$ epochs to move to the corners of $\mathcal{CH}^0$ once they become eligible and all the non-faulty robots on $\bigcup_{j=1}^{i-1}\mathcal{CH}^j$ have already settled on $\mathcal{CH}^0$. Hence, the entire stage \textsc{Interior-To-Corners} takes $O(l)$ epochs, as there are $(l-1)$ many layers made by the \texttt{INTERIOR}-colored robots at the beginning of the stage. 
\end{proof}

\begin{restatable}{lem}{LemmaStageConfirmationToCorners}\label{lemma:stage_confirmation_to_corners}
    In the stage \textsc{Confirmation-Signal-To-Corners}, all the \texttt{FAULT}-colored robots change their color to \texttt{FAULT-FINISH} in $O(f)$ epochs. 
\end{restatable}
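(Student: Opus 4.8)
The plan is to mirror the argument used for the analogous \(7\)-\(\mathcal{LUMI}\) statement (Lemma~\ref{lemma:stage_confirmation_to_outers}), adapted to the present setting.

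\emph{Step 1 (the stage runs over a frozen configuration).} First I would observe that, by the time a \texttt{FAULT}-colored robot actually executes the rule of this stage, no robot bears a color in \(\{\texttt{INTERIOR},\texttt{MOVE1},\texttt{MOVE2}\}\): by Lemma~\ref{lemma:interior-to-corners} every non-faulty \texttt{INTERIOR}-colored robot has reached a corner of \(\mathcal{CH}^0\), and the analysis of \textsc{Interior-To-Corners} shows that every crashed interior/boundary robot has ultimately adopted the color \texttt{FAULT}. Hence from this point the only colors present are \texttt{CORNER} and \texttt{FAULT}. Crucially, \texttt{FAULT}-colored (and, later, \texttt{FAULT-FINISH}-colored) robots are precisely the crashed ones and are therefore immobile, while \texttt{CORNER}-colored robots do not move during this stage. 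Consequently the set of occupied positions is \emph{static} throughout the stage; only the recolorings \(\texttt{FAULT}\to\texttt{FAULT-FINISH}\) occur.

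\emph{Step 2 (layer structure of the frozen configuration).} In this static configuration every \texttt{CORNER}-colored robot lies on \(\mathcal{CH}^0\) and every \texttt{FAULT}-colored robot lies strictly inside \(\mathcal{CH}^0\). Reconstructing the global convex layers from the current positions, \(\mathcal{CH}^0\) is exactly the set of \texttt{CORNER}-colored robots, and \(\mathcal{CH}^1,\dots,\mathcal{CH}^L\) are formed by the (at most \(f\)) crashed robots, so \(L\le f\). Two elementary geometric facts drive the rest: (i) a vertex of the global hull \(\mathcal{CH}^0\) cannot lie strictly inside the convex hull of any sub-collection of robots, so for a \texttt{FAULT}-colored robot \(r\) the robots of \(\mathcal{V}_r\) lying strictly inside the hull of \(r\)'s locally computed layer \(\mathcal{CH}_r^{i}\) are never \texttt{CORNER}-colored --- they are crashed robots, hence \texttt{FAULT} or \texttt{FAULT-FINISH}; and (ii) since \(r\) is strictly interior to \(\mathcal{CH}^0\) it is strictly interior to \(\mathrm{hull}(\mathcal{V}_r)\) as well (a short half-plane argument using opacity), so \(i\ge 1\) and every \texttt{CORNER}-colored robot visible to \(r\) lies on \(r\)'s outermost local layer \(\mathcal{CH}_r^0\). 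Thus \(r\)'s trigger condition ``\(\mathrm{int}(\mathrm{hull}(\mathcal{CH}_r^{i}))\) is empty or contains only \texttt{FAULT-FINISH} robots'' effectively concerns the inner local layers \(\mathcal{CH}_r^{i+1},\mathcal{CH}_r^{i+2},\dots\), which consist solely of crashed robots.

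\emph{Step 3 (innermost-first propagation, \(O(f)\) epochs).} I would then argue, by induction on \(k\) (modulo the one subtle point flagged below), that by the end of the \(k\)-th epoch of the stage every \texttt{FAULT}-colored robot on a layer \(\mathcal{CH}^j\) with \(j\ge L-k+1\) has switched to \texttt{FAULT-FINISH}. For \(k=1\): a robot \(r\) on the innermost layer \(\mathcal{CH}^L\) has nothing of the configuration strictly inside \(\mathcal{CH}^L\), so by Step~2 no robot of \(\mathcal{V}_r\) lies strictly inside \(\mathrm{hull}(\mathcal{CH}_r^{i})\) and \(r\) recolors in one epoch. For the inductive step, a robot \(r\) on \(\mathcal{CH}^{L-k+1}\) sees, strictly inside \(\mathrm{hull}(\mathcal{CH}_r^{i})\), only crashed robots lying on strictly inner global layers, which are \texttt{FAULT-FINISH} by the end of epoch \(k-1\); hence \(r\) recolors during epoch \(k\). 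After \(L\le f\) epochs all \texttt{FAULT}-colored robots have become \texttt{FAULT-FINISH}, yielding the \(O(f)\)-epoch bound. (A crashed robot whose co-layer robots are occluded from it may recolor earlier than its global-layer index predicts; this only accelerates the process.)

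\emph{Main obstacle.} The delicate point --- the one I expect to need the most care --- is exactly the reconciliation in Step~3 between a robot's \emph{locally} computed layering \(\mathcal{CH}_r^{\bullet}\) (which, under obstructed visibility, may differ from the global layering) and the global layers driving the induction: one must rule out a ``waiting deadlock'' in which two \texttt{FAULT}-colored robots each perpetually see the other strictly inside the hull of their own local layer and so never recolor. The way to close this is to establish a per-epoch progress claim directly: the \emph{globally innermost} still-\texttt{FAULT} robot always has its trigger condition satisfied, since every robot of \(\mathcal{V}_r\) strictly inside \(\mathrm{hull}(\mathcal{CH}_r^{i})\) is either \texttt{FAULT-FINISH} or lies on a strictly inner global layer --- and in the innermost case there are none; hence at least one \(\texttt{FAULT}\to\texttt{FAULT-FINISH}\) transition occurs in every epoch until none remain, again bounding the total by \(L\le f\) epochs.
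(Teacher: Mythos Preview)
Your proposal is correct and follows essentially the same approach as the paper: both argue that the at most $f$ \texttt{FAULT}-colored robots form at most $O(f)$ convex layers, and that the recoloring propagates from the innermost layer outward, one layer per epoch. Your write-up is in fact considerably more careful than the paper's (which is a three-sentence sketch identical to the proof of Lemma~\ref{lemma:stage_confirmation_to_outers}), in that you explicitly justify why the configuration is static during the stage and flag the local-vs-global layering subtlety under obstructed visibility that the paper leaves implicit.
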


\begin{proof}
    There could be at most $f$ many \texttt{FAULT}-colored robots located within the interior region of $\mathcal{CH}$. 
    In the worst case, these robots can form $O(f)$ many layers. 
    Among these layers, the robots on the innermost layer first change their color to \texttt{FAULT-FINISH} in just 1 epoch. 
    Subsequently, the robots in the next layer update their color to \texttt{FAULT-FINISH} in the following epoch.
    This layer-by-layer propagation continues outward until all \texttt{FAULT}-colored robots have transitioned to \texttt{FAULT-FINISH}, resulting in $O(f)$ epochs in total.
\end{proof}

\begin{restatable}{rem}{RemarkCornerSeeFaultFinish}
    \label{rem:corner-see-fault-finish}
    A \texttt{CORNER}-colored robot $r$ correctly determines that no \texttt{INTERIOR}-colored robots left to move onto the corners of $\mathcal{CH}^0$ by checking the interior region of $\mathcal{CH}_r$ empty or occupied by \texttt{FAULT-FINISH}-colored robots (Fig. \ref{fig:confirmation-signal-2}).
\end{restatable}

\begin{restatable}{lem}{LemmaUniqueGatherPointFAULTFINISH}
    \label{lemma:unique-gather-point-fault-finish}
    All non-faulty robots gather at a unique point and terminate in $O(f)$ epochs when there are \texttt{FAULT-FINISH}-colored robots after the stage \textsc{Confirmation-Signal-To-Corners}. 
\end{restatable}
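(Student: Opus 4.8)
The plan is to trace the remaining stages \textsc{Corners-To-C.G}, \textsc{Finding-Gathering-Point}, and \textsc{Move-To-GatheringPoint\&Terminate}, starting from the configuration guaranteed by Lemmas~\ref{lemma:interior-to-corners} and~\ref{lemma:stage_confirmation_to_corners}: all $N-f$ non-faulty robots occupy the corners of $\mathcal{CH}^0$ with color \texttt{CORNER}, every faulty robot lies strictly inside $\mathcal{CH}$ with color \texttt{FAULT-FINISH}, and (by hypothesis) at least one \texttt{FAULT-FINISH} robot exists. Since \texttt{FAULT-FINISH} robots never move again, the set of faulty positions, and hence the CG of the convex hull they span, is fixed throughout the rest of the execution; this invariant is what makes the whole analysis go through.

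First I would show that any non-faulty robot $r$ entering \textsc{Corners-To-C.G} correctly computes this CG and reaches it in $O(1)$ epochs when not preempted. A \texttt{CORNER} robot switches to \texttt{INTERMEDIATE} only when it sees nothing but \texttt{CORNER}/\texttt{FAULT-FINISH} robots, so a ``wave'' of robots starts cleanly; from \texttt{INTERMEDIATE} it moves into $ExtVisibleArea(r,\mathcal{CH}_r)$ with color \texttt{MOVE3}, which by Lemma~\ref{lemma:visiblearea} renders every stationary \texttt{FAULT-FINISH} robot visible, so the CG is computable and is the \emph{same} for every robot that sees the same \texttt{FAULT-FINISH} set; a second visible-area move (color \texttt{OFF}) guarantees the CG itself becomes visible, after which $r$ moves onto it with color \texttt{MOVE4} and then to \texttt{MOVE-ENDED}, or directly to \texttt{TERMINATE} if the CG coincides with a \texttt{FAULT-FINISH} position. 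I would then argue that the yield transitions, namely \texttt{INTERMEDIATE}/\texttt{MOVE3} reverting to \texttt{CORNER} upon seeing any of \texttt{MOVE4}/\texttt{MOVE-ENDED}/\texttt{MOVE5}/\texttt{MoveTo-CORNER}, and \texttt{INTERMEDIATE} freezing upon seeing \texttt{MOVE3}, make asynchronous activations harmless: robots that advanced farther are never interfered with, and any robot that started too early rolls back safely. Because the \texttt{FAULT-FINISH} robots are stationary, there is never an inconsistency in the gathering target.

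Next I would establish the unique gathering point. Case (i): when the robots at the CG become \texttt{MOVE-ENDED}, some non-faulty robot still carries \texttt{CORNER} on $\mathcal{CH}^0$. Each \texttt{MOVE-ENDED} robot moves into its $BoundaryVisibleArea$ (color \texttt{MOVE5}), which by Lemma~\ref{lemma:visiblearea} lets it see a \texttt{CORNER} robot, moves onto that corner (color \texttt{MoveTo-CORNER}), and, if the move succeeds, colocates with a \texttt{CORNER} robot and adopts \texttt{GATHER}; that position is advertised, and by the priority rule of \textsc{Move-To-GatheringPoint\&Terminate} every remaining non-faulty robot moves onto it with \texttt{MoveTo-GATHER} and becomes \texttt{GATHER}. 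Case (ii): all non-faulty robots already reached the CG (all \texttt{MOVE-ENDED}); then no \texttt{CORNER} robot is visible, so each \texttt{MOVE5} robot moves onto a \texttt{FAULT-FINISH} robot and becomes \texttt{TERMINATE}, after which the \texttt{FAULT-FINISH} robots collapse to a single faulty position via \texttt{FAULT-TERMINATE} and the non-faulty robots, seeing only \texttt{FAULT-TERMINATE} around them, turn \texttt{TERMINATE} and join that position. In both cases I would argue uniqueness from the fact that all \texttt{MOVE-ENDED} robots sit at the common CG and therefore make the same deterministic target choice (a \texttt{CORNER} robot in case (i), a \texttt{FAULT-FINISH} position in case (ii)), and that once a \texttt{GATHER}/\texttt{TERMINATE} anchor exists it is never abandoned; the final \texttt{FAULT-TERMINATE} relocation of the faulty robots only forces the meeting point to coincide with a faulty one and does not disturb the non-faulty cluster. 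Hence all non-faulty robots end co-located and satisfy the termination condition.

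The main obstacle, and the crux of the $O(f)$ bound, is bounding the number of re-coordination rounds triggered by faults. A non-faulty robot can crash only while carrying a moving color: \texttt{MOVE4} (en route to the CG), \texttt{MOVE5} or \texttt{MoveTo-CORNER} (en route back to the anchor), or \texttt{MoveTo-GATHER} (en route to \texttt{GATHER}); in each case its next activation drives it into \texttt{FAULT-FINISH} in $O(1)$ epochs, after which it is indistinguishable from the originally faulty robots and harmless. Each such event consumes one of the at most $f$ faults, so at most $f$ re-coordination rounds occur, and each round, a fresh \texttt{CORNER}-to-CG pass or a \texttt{MOVE-ENDED}-to-anchor pass, costs $O(1)$ epochs by the constant-length color-transition sequences above (with the visible-area move possibly repeated a constant number of times). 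Adding the $O(f)$ epochs of \textsc{Confirmation-Signal-To-Corners} from Lemma~\ref{lemma:stage_confirmation_to_corners} yields the claimed $O(f)$ bound. I expect the two delicate points to be: (a) verifying that the yield/rollback transitions never deadlock under the adversarial \async\ scheduler, i.e. that some robot always makes progress toward the anchor; and (b) confirming uniqueness of the gathering point when several robots reach the CG simultaneously. Both hinge on the stationarity of the \texttt{FAULT-FINISH} robots and the determinism of the visible-area and nearest-robot computations.
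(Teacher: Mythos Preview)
Your overall plan mirrors the paper's proof closely: both trace the colour pipeline \texttt{CORNER}\,$\to$\,\texttt{INTERMEDIATE}\,$\to$\,\texttt{MOVE3}\,$\to$\,\texttt{MOVE4}\,$\to$\,\texttt{MOVE-ENDED}\,$\to$\,\texttt{MOVE5}\,$\to$\,\texttt{MoveTo-CORNER}, split on whether a \texttt{CORNER} robot still sits on $\mathcal{CH}^0$ when the wave comes back from the CG, and charge each re-coordination round to one of the $\le f$ faults to obtain the $O(f)$ bound.

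There is, however, a concrete gap in your argument. You assert that ``the set of faulty positions, and hence the CG of the convex hull they span, is \emph{fixed throughout the rest of the execution}; this invariant is what makes the whole analysis go through.'' This is false, and you implicitly contradict it yourself three paragraphs later when you note that a robot crashing while coloured \texttt{MOVE4}/\texttt{MOVE5}/\texttt{MoveTo-CORNER}/\texttt{MoveTo-GATHER} transitions to \texttt{FAULT-FINISH}. Every such transition enlarges the \texttt{FAULT-FINISH} set and, in general, moves the CG. Consequently, two robots that reach the \texttt{MOVE3} state in \emph{different} waves may compute \emph{different} CGs, so your uniqueness argument (``all \texttt{MOVE-ENDED} robots sit at the common CG'') cannot rest on a global CG invariant.

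The paper avoids this by arguing \emph{per-wave} consistency instead of global invariance: the rollback rules (\texttt{INTERMEDIATE}/\texttt{MOVE3}\,$\to$\,\texttt{CORNER} upon seeing any of \texttt{MOVE4}/\texttt{MOVE-ENDED}/\texttt{MOVE5}/\texttt{MoveTo-CORNER}) serialise the waves, so all robots that actually reach \texttt{MOVE3} in the same wave see the \emph{same} stationary \texttt{FAULT-FINISH} set and hence agree on the CG. Coupled with the model assumption that collocated robots act as a single entity, this yields that all non-faulty \texttt{MOVE-ENDED} robots of a wave are at one point and choose the same anchor. If that wave was entirely faulty, its members decay to \texttt{FAULT-FINISH} and the next wave restarts with a (possibly different) CG; this is precisely what your ``re-coordination round'' charging is counting. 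Replace your global-CG invariant with this per-wave argument and the rest of your sketch goes through as in the paper.
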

\begin{proof}
    After the stage \textsc{Confirmation-Signal-To-Corners}, a \texttt{CORNER}-colored robot $r$ changes its color to \texttt{INTERMEDIATE} and then computes $ExtVisibleArea(r, \mathcal{CH}_r)$ and moves within it with the color \texttt{MOVE3}. By Lemma \ref{lemma:visiblearea}, the \texttt{MOVE3}-colored robot $r$ can see all the \texttt{FAULT-FINISH}-colored robots, as they are stationary. 
    If there is another \texttt{MOVE3}-colored non-faulty robot $r'$, all the \texttt{FAULT-FINISH} robots are also visible to it.
    By Remark \ref{rem:corner-see-fault-finish}, when $r$ and $r'$ are both with the color \texttt{CORNER} and execute the stage \textsc{Corner-To-CG}, all the interior robots of $\mathcal{CH}$ are of color \texttt{FAULT} or \texttt{FAULT-FINISH}, indicating that they are stationary.
    So, two \texttt{MOVE3}-colored non-faulty robots $r$ and $r'$ have the same set of visible \texttt{FAULT-FINISH} robots.
    It implies that any two simultaneously activated non-faulty robots compute the same CG and move to it with \texttt{MOVE4} from their respective visible area.
    This concludes that if all \texttt{CORNER}-colored robots are non-faulty and activated simultaneously, they gather at a single point.
    After the non-faulty robots change their color to \texttt{MOVE-ENDED}, they move to a point in $BoundaryVisibleArea()$ with \texttt{MOVE5} and do not find any \texttt{CORNER} robots.
    In that case, they update their color to \texttt{MoveTo-CORNER} and move to a position containing \texttt{FAULT-FINISH}-colored robot. Finally, they terminate with the color \texttt{TERMINATE}. 
    It is clear after observing the color transitions of a non-faulty \texttt{CORNER} robot that it requires $7$ epochs to terminate. 

    Consider a situation where a subset of \texttt{MOVE3} robots, which includes a non-faulty robot $r$, simultaneously move to the CG while others remain stationary due to asynchronous activation.
    Using a similar argument as above, $r$ successfully moves to the CG with all other non-faulty robots (if they exist). 
    After $2$ epochs, when such a non-faulty robot updates its color to \texttt{MOVE5} and moves to a point in $BoundaryVisibleArea()$, it finds a \texttt{CORNER}-colored robot, say $r_1$ on $\mathcal{CH}^0$, which is still stationary from the time $r$ started its movement to CG 
    Finally, $r$ along with all other non-faulty robots moves to $r_1$ with the color \texttt{GATHER}.  
    This position becomes the gathering point for other non-faulty \texttt{CORNER} robots. 
    Meanwhile, all \texttt{FAULT-FINISH}-colored robots change their color to \texttt{FAULT-TERMINATE} in $O(f)$ epochs by a similar argument as in Lemma \ref{lemma:stage_confirmation_to_corners}.
    After all non-faulty robots become collocated with \texttt{GATHER}-colored robots and the faulty robots become \texttt{FAULT-TERMINATE}, the non-faulty robots terminate at the position of a \texttt{FAULT-TERMINATE}. 
    In total, this requires $O(f)$ epochs in the worst case.

    If all the simultaneously activated robots moving towards the CG are faulty, they all set their color to \texttt{FAULT-FINISH} in $6$ epochs after they get activated with \texttt{MoveTo-CORNER} yet collocated with neither \texttt{CORNER} nor \texttt{FAULT-FINISH} robots.
    Afterwards, the remaining \texttt{CORNER}-colored robots will get a chance to move to the CG
    In the worst case, a constant number of faulty robots get activated to move to the CG, leading to the termination in $O(f)$ epochs. 
\end{proof}

\begin{restatable}{lem}{LemmaUniqueGatherPointNoFaultFINISH}
    \label{lemma:unique-gather-point-no-fault-finish}
    All non-faulty robots gather at a point and terminate in $O(f)$ epochs, when there is no \texttt{FAULT-FINISH}-colored robot after the stage \textsc{Confirmation-Signal-To-Corners}. 
\end{restatable}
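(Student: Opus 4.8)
The plan is to determine the global configuration at the instant \textsc{Confirmation-Signal-To-Corners} completes. By Lemma~\ref{lemma:stage_confirmation_to_corners} that stage turns every \texttt{FAULT}-colored robot into \texttt{FAULT-FINISH}, so the hypothesis ``no \texttt{FAULT-FINISH}-colored robot'' is equivalent to the statement that no boundary or interior robot of the initial $\mathcal{CH}$ ever became faulty during Case~1. Hence, by Lemmas~\ref{lemma:boundary-to-interior} and~\ref{lemma:interior-to-corners}, all $N$ robots are (still) non-faulty and sit at the corners of $\mathcal{CH}^0$ with color \texttt{CORNER}; by Remark~\ref{rem:corner-see-fault-finish} each of them sees its local interior empty and therefore enters the Case~2 branch. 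The first part of the proof is to make this reduction precise.

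Next I would analyse the Case~2 gathering process, maintaining the invariant that at most one ``live'' cluster ever moves at a time. A \texttt{CORNER} robot with no \texttt{BOUNDARY} neighbour turns \texttt{INTERMEDIATE}; a maximal set of \texttt{INTERMEDIATE} robots activated before any robot has moved all compute the same center of gravity $g$ of the convex hull of the \texttt{CORNER}/\texttt{INTERMEDIATE} robots, turn \texttt{MOVE4}, and head to $g$, whereas every \texttt{INTERMEDIATE} robot activated afterwards observes a robot colored in $\{\texttt{MOVE4},\texttt{MOVE-ENDED},\texttt{MOVE5},\texttt{MoveTo-CORNER}\}$ and reverts to \texttt{CORNER}. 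The structural fact I would lean on throughout is that robots which become collocated act thereafter as one entity (identical view, identical computation), so once two non-faulty robots meet they never separate; together with the reversion rules this yields the invariant. If the robots that reach $g$ include a non-faulty one, the cluster proceeds \texttt{MOVE4}$\to$\texttt{MOVE-ENDED}$\to$\texttt{MOVE5}, the last move being into the boundary visible area, which by Lemma~\ref{lemma:visiblearea} restores visibility of any \texttt{CORNER} robot still idle on $\mathcal{CH}^0$: if such a robot exists the cluster moves onto it (\texttt{MoveTo-CORNER}), becomes \texttt{GATHER}, and fixes that corner as the gathering point, to which the remaining robots then walk via \texttt{MoveTo-GATHER} in the final stage; if instead no \texttt{CORNER} (and no \texttt{FAULT-FINISH}) robot is visible --- the case in which every robot reached $g$ in one batch --- the cluster turns \texttt{FAULT-FINISH} and then \texttt{FAULT-TERMINATE} in place. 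In every subcase the non-faulty robots end up collocated at one point, and if faulty robots have meanwhile appeared elsewhere (eventually colored \texttt{FAULT-TERMINATE}), the cluster --- seeing only \texttt{FAULT-TERMINATE} robots outside its own position --- becomes \texttt{TERMINATE} and, as a single entity, moves to the unique nearest \texttt{FAULT-TERMINATE} location; so all non-faulty robots terminate together at one point.

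For the $O(f)$ bound I would charge each ``wasted round'' to a distinct crash. A batch travelling toward $g$ (or toward a \texttt{GATHER} robot), together with the ensuing color re-coordination, costs $O(1)$ epochs; a batch that reaches its target without any member crashing ends the whole phase within $O(1)$ further epochs of fixed color chains, plus the $O(f)$ epochs needed for any stragglers or faulty robots to cascade \texttt{FAULT-FINISH}$\to$\texttt{FAULT-TERMINATE} layer by layer exactly as in Lemma~\ref{lemma:stage_confirmation_to_corners}. The only way the adversary can stall is to send robots to $g$ one crashing robot at a time; since a robot crashes at most once and at most $f$ robots can crash, at most $f$ such rounds precede the round in which a non-faulty cluster succeeds, so the phase lasts $O(f)$ epochs.

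The step I expect to be the real obstacle is the asynchronous bookkeeping. The hypothesis forbids \texttt{FAULT-FINISH} robots only up to the end of \textsc{Confirmation-Signal-To-Corners}; corner robots may still crash during \textsc{Corners-To-CG} and afterwards, producing \texttt{FAULT-FINISH} robots mid-phase, and one must show that the still-active \texttt{CORNER} robots then either fall back cleanly to the with-\texttt{FAULT-FINISH} behaviour of Lemma~\ref{lemma:unique-gather-point-fault-finish} or keep converging --- and, crucially, that no interleaving of Look/Compute/Move steps can split the non-faulty robots between two positions or deadlock them. Checking that the reversion-to-\texttt{CORNER} conditions together with the ``collocated as one entity'' property exclude all such pathologies is where the care lies.
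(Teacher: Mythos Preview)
Your proposal is correct and follows essentially the same approach as the paper: reduce to the situation where every robot is a \texttt{CORNER} robot, split on whether all of them head to the CG together or only a subset does, and in the latter case fall back on the argument of Lemma~\ref{lemma:unique-gather-point-fault-finish} with the $O(f)$ bound coming from charging each failed batch to a crash. The paper's own proof is considerably terser (it dispatches the non-simultaneous case in a single sentence deferring to Lemma~\ref{lemma:unique-gather-point-fault-finish}), whereas you make the collocated-cluster invariant and the asynchronous bookkeeping explicit---which is appropriate, since those are exactly the points the paper's sketch glosses over.
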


\begin{proof}
    In such a situation, all the robots become \texttt{CORNER}-colored robots. 
    If all of them are activated simultaneously, such a robot considers the CG of the convex hull formed by all \texttt{CORNER} and \texttt{INTERMEDIATE} robots. 
    The non-faulty robots gather at the CG, and all robots update their color to \texttt{FAULT-FINISH}. 
    If there is any faulty robot, there are at least two distinct positions occupied by \texttt{FAULT-FINISH} robots.
    In that case, the non-faulty robots move to another position of \texttt{FAULT-FINISH} robots and terminate with \texttt{FAULT-TERMINATE}.
    This requires $7$ epochs in total.

    If all the robots are not activated simultaneously, a robot that moves to the CG with \texttt{MOVE4} and gets activated with \texttt{MOVE5} after $2$ epochs, finds at least one \texttt{CORNER} or \texttt{FAULT-FINISH}-colored robot, which is a similar case as argued in Lemma \ref{lemma:unique-gather-point-fault-finish}, requiring $O(f)$ epochs. 
\end{proof}

\begin{restatable}{lem}{LemmaNonLinearGather}
\label{lemma:nonlinear_gather}
    If the initial configuration $\mathcal{CH}$ is non-linear, all the non-faulty robots eventually terminate and gather in $O(\max\{\ell,f\})$ epochs. 
\end{restatable}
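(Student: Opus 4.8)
\emph{Proof plan.} The strategy is to chain together the per-stage lemmas already established and to branch on the shape of the configuration that emerges right after the classification stage. By Lemma \ref{lemma:categorization}, within $O(1)$ epochs every robot has left colour \texttt{OFF} and carries exactly one of \texttt{CORNER}, \texttt{BOUNDARY}, \texttt{INTERIOR}, consistent with its position on the global hull $\mathcal{CH}$. I then split into two cases according to whether any \texttt{INTERIOR}-coloured robot is present, which corresponds exactly to Case~1 and Case~2 of the algorithm.

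\emph{Case 1 (some \texttt{INTERIOR}-coloured robot).} Here I would simply compose the stage lemmas in order. By Lemma \ref{lemma:boundary-to-interior}, the \textsc{Boundary-To-Interior} stage moves all non-faulty boundary robots of $\mathcal{CH}^0$ onto interior positions in $O(1)$ epochs; by Lemma \ref{cor:ch_1-categorization} the robots on $\mathcal{CH}^1$ are correctly recognised, and by Lemma \ref{lemma:interior-to-corners} the \textsc{Detecting-Eligible-Layer}/\textsc{Interior-To-Corners} loop relocates every non-faulty interior robot onto a corner of $\mathcal{CH}^0$ in $O(\ell)$ epochs, the faulty ones remaining inside coloured \texttt{FAULT}; by Lemma \ref{lemma:stage_confirmation_to_corners} the \textsc{Confirmation-Signal-To-Corners} stage turns every \texttt{FAULT} robot into \texttt{FAULT-FINISH} in $O(f)$ epochs, after which (Remark \ref{rem:corner-see-fault-finish}) each \texttt{CORNER} robot correctly detects that no hidden interior robot remains. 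The final gathering phase splits once more: if at least one \texttt{FAULT-FINISH} robot exists, Lemma \ref{lemma:unique-gather-point-fault-finish} gives gathering and termination in $O(f)$ epochs; if none exists (i.e. no interior or boundary robot ever became faulty), every \texttt{CORNER} robot sees an empty interior of $\mathcal{CH}_r$ and falls through to the Case~2 gathering routine, for which Lemma \ref{lemma:unique-gather-point-no-fault-finish} yields $O(f)$ epochs. Adding the contributions, $O(1)+O(1)+O(\ell)+O(f)+O(f)=O(\max\{\ell,f\})$.

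\emph{Case 2 (no \texttt{INTERIOR}-coloured robot, so $\ell=1$).} Each \texttt{CORNER} robot that has a \texttt{BOUNDARY} neighbour moves into its $ExtVisibleArea$ (Lemma \ref{lemma:visiblearea}), which converts its boundary neighbours into interior robots; a faulty corner instead switches to \texttt{FAULT1} and thereby promotes each of its surviving neighbours to a corner of the hull computed after discarding \texttt{FAULT1}-coloured robots, which then expands in turn. I would argue that this promotion cascade along an edge has length at most the number of faulty robots lying on that edge, hence the entire expansion terminates in $O(f)$ epochs and produces a configuration whose former boundary robots form only $O(1)$ new layers (the former boundary robots of a single original hull edge stay collinear and thus lie on a single new layer). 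From that point the Case~1 pipeline applies and costs a further $O(1)+O(f)$ epochs. If the initial configuration already consisted solely of corner robots, the \texttt{CORNER} robots go straight to the CG of all \texttt{CORNER}/\texttt{INTERMEDIATE} robots and Lemma \ref{lemma:unique-gather-point-no-fault-finish} gives termination in $O(f)$ epochs. Since $\ell=1$ in this case, $O(f)=O(\max\{\ell,f\})$.

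\emph{Main obstacle.} The delicate part is not the arithmetic but verifying that the stages compose correctly under {\async}: each stage must detect the global completion of the previous one using only local views and reused colours, and the redirection mechanism (robots that began a stage early being sent back to \textsc{Interior-To-Corners} or to the gathering routine through reused colours) must not livelock. I would dispatch this exactly as in the proof of Lemma \ref{lemma:interior-to-corners} and Lemmas \ref{lemma:unique-gather-point-fault-finish}--\ref{lemma:unique-gather-point-no-fault-finish}: every redirection is charged against progress that has already occurred (the offending robot's extra colour transitions \emph{hide behind} the $O(1)$ epochs spent by the layer it rejoins, and each successful relocation strictly decreases the number of not-yet-settled non-faulty robots), so no infinite chain of redirections is possible; obstructed visibility never stalls a stage because the visible-area moves of Lemma \ref{lemma:visiblearea} restore the required visibility; and a mobility fault occurring in the middle of any stage is absorbed by the \texttt{FAULT}/\texttt{FAULT-FINISH}/\texttt{FAULT-TERMINATE} signalling without resetting earlier progress. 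I expect the bookkeeping for Case~2 --- bounding the length of the \texttt{FAULT1} promotion cascade and checking that the former boundary robots form only $O(1)$ layers --- to be the step that needs the most care.
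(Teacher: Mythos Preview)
Your proposal is correct and follows essentially the same approach as the paper: split on whether any \texttt{INTERIOR}-coloured robot exists, chain Lemmas \ref{lemma:categorization}, \ref{lemma:boundary-to-interior}, \ref{lemma:interior-to-corners}, \ref{lemma:stage_confirmation_to_corners}, \ref{lemma:unique-gather-point-fault-finish}, \ref{lemma:unique-gather-point-no-fault-finish} in Case~1, and in Case~2 bound the \texttt{FAULT1} promotion cascade by $O(f)$ and observe that the newly created interior robots form only one layer before re-entering the Case~1 pipeline. Your write-up is in fact more explicit than the paper's own proof (which dispatches Case~1 in a single sentence and does not spell out the {\async} composition argument you sketch under ``Main obstacle''), but the decomposition, the lemmas invoked, and the arithmetic are the same.
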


\begin{proof}
If $\mathcal{CH}$ has an interior robot, the claim follows from Lemma \ref{lemma:categorization}, \ref{lemma:boundary-to-interior}, \ref{lemma:interior-to-corners}, \ref{lemma:stage_confirmation_to_corners}, \ref{lemma:unique-gather-point-fault-finish} and \ref{lemma:unique-gather-point-no-fault-finish}.

Let $\mathcal{CH}$ have no interior robots. In this case, we first move the \texttt{CORNER}-colored robots to their respective $ExtVisibleArea()$ to make the boundary robots interior.
If some of them successfully executed this movement without any fault, their neighbouring boundary robots become interior robots, redirecting them to the previous case (when $\mathcal{CH}$ has interior robots). 
In case all the corners are faulty (which are eventually colored \texttt{FAULT1}), the neighbouring boundary robots become corners (\texttt{CORNER} color) of the convex hull formed by all the visible robots, excluding \texttt{FAULT1}. 
Such robots move to their exterior visible area to make the current boundary robots as interior robot. They can again become faulty and then their neighbours do the same thing until there is no boundary robots or they become interior robots. This takes $O(f)$ epochs.
If some of the boundary robots become interior, by all the lemmas as mentioned in above the non-faulty robots terminates themselves in another $O(f)$ epochs, as those newly created interior robots form only one layer.
Whereas if there is no boundary robots left, the termination occurs in $O(f)$ epochs, by a similar argument as presented in Lemma \ref{lemma:unique-gather-point-no-fault-finish}.
\end{proof}

\begin{restatable}{lem}{LemmaLinearGather}\label{lemma:linear_gather}
    If the initial configuration $\mathcal{CH}$ is linear, all non-faulty robots either form a non-linear configuration or gather at a single point in $O(f)$ epochs.
\end{restatable}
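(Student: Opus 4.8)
The plan is to follow the Case~3 rules step by step and show that every faulty robot can delay progress by at most $O(1)$ epochs, so that linearity is broken (or gathering completed) within $O(f)$ epochs. First I would note that after the first epoch every robot is correctly classified: a robot with exactly one non-\texttt{FAULT1} neighbour on the line becomes \texttt{CORNER} (a terminal of the segment $Linear\_\mathcal{CH}_r^*$) and every other robot becomes \texttt{NON-CORNER}, and the rule ``wait while an \texttt{OFF}-coloured neighbour is visible'' makes this take a single epoch. From then on only the two terminals of the shrinking segment $Linear\_\mathcal{CH}_r^*$ (which by definition ignores \texttt{FAULT1} robots) take any action.

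Next I would carry out the core case analysis on a terminal robot $r$. If its terminal neighbour $r_{Nbr_1}$ is also \texttt{CORNER}, then only two non-faulty robots remain; $r$ runs the midpoint protocol (\texttt{MOVE1}$\to$\texttt{MOVE-ENDED}, after which the other robot moves onto it with \texttt{TERMINATE}), which gathers the two robots in $O(1)$ further epochs exactly as in the proof of Theorem~\ref{thm:gathering2robots}. Otherwise $r_{Nbr_1}$ is \texttt{NON-CORNER} and $r$ attempts the perpendicular displacement with colour \texttt{EXPANDING-L}; if $r$ is non-faulty this move succeeds. A short coordinate computation then shows the only way the post-move configuration can remain linear is the symmetric case of exactly three non-faulty robots whose two terminals move in opposite perpendicular directions (they become collinear on a new line through the middle robot), and this cannot occur with four or more non-faulty robots, since the two interior robots stay on the old line. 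In that symmetric case the central \texttt{NON-CORNER} robot sees both neighbours \texttt{EXPANDING-L}, turns \texttt{GATHER}, and the two terminals (back on $Linear\_\mathcal{CH}_r^*$) move onto it via \texttt{MoveTo-GATHER}, so all non-faulty robots gather; in every other case the configuration is genuinely non-linear, each remaining \texttt{NON-CORNER} robot reclassifies itself as \texttt{CORNER}/\texttt{BOUNDARY}/\texttt{INTERIOR} and hands control to Case~1/2, establishing the first disjunct. (The case of a single non-faulty robot is trivial, and the case of one faulty terminal reduces to one of the above once that terminal adopts \texttt{FAULT1}.)

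It remains to bound the epochs, which is where faults enter. If a terminal $r$ is faulty it cannot move and on its next activation still lies on $Linear\_\mathcal{CH}_r^*$; by the \texttt{EXPANDING-L} rule it waits for $r_{Nbr_1}$. That neighbour, a \texttt{NON-CORNER} robot with exactly one \texttt{EXPANDING-L} neighbour, becomes \texttt{BOUNDARY}; $r$ reacts by adopting \texttt{FAULT1}; and the \texttt{BOUNDARY} robot, now a terminal of $Linear\_\mathcal{CH}_r^*$, becomes \texttt{CORNER} and retries the displacement. Thus each faulty terminal costs only $O(1)$ epochs before the ``terminal'' role moves one step inward, and since the segment shrinks from both ends this happens at most $f$ times before a non-faulty robot is a terminal and, being non-faulty, performs its move (or we reach the two-/one-non-faulty base case or the three-robot symmetric case). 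Hence within $O(f)$ epochs the configuration is non-linear or all non-faulty robots are gathered.

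I expect the main obstacle to be the asynchronous interleaving around the hand-off: because both ends shrink concurrently and a successful perpendicular move of one terminal must be reconciled with a neighbour that is still mid-transition, I would need to argue carefully that (i) a \texttt{NON-CORNER} robot never mistakes a transient intermediate state for non-linearity and jumps prematurely to Case~1/2, (ii) the \texttt{GATHER} robot in the three-robot symmetric case is created at most once, and (iii) once \texttt{FAULT1} is adopted it is never revisited, so the effective segment is monotonically shrinking and no adversarial schedule can stall the hand-off forever. Most of these follow directly from the guard conditions (waiting on \texttt{OFF}, testing membership in $Linear\_\mathcal{CH}_r^*$, the ``disregard \texttt{FAULT1}'' convention), but making the non-stalling argument fully rigorous is the delicate part.
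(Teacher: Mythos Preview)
Your proposal is correct and follows essentially the same approach as the paper's proof: both reduce the two-terminal case to Theorem~\ref{thm:gathering2robots}, handle the three-robot symmetric collinearity via the central \texttt{GATHER} robot, and in the general case charge $O(1)$ epochs per faulty terminal as the segment shrinks inward, yielding $O(f)$ overall. Your organisation by the terminal neighbour's colour rather than by an explicit case split on $N$ is equivalent, and you are in fact more careful than the paper in arguing why persistent collinearity after a successful perpendicular move forces exactly three non-faulty robots and in flagging the asynchronous hand-off subtleties that the paper leaves implicit.
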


\begin{proof}
    When the configuration is linear, the terminal robots on $\mathcal{CH}$ update their color to \texttt{CORNER} and others to \texttt{NON-CORNER}.
    To prove this claim, we differentiate the following three cases: (i) $ N=2$, (ii) $ N=3$, (iii) $N > 3$.

    When $N=2$, a terminal robot $r$ with color \texttt{CORNER} observes its neighbour also with the color \texttt{CORNER}.
    In this case, $r$ follows the algorithm described in Section \ref{subsec:opt} with the set of colors \texttt{MOVE1} and \texttt{MOVE-ENDED} (which are equivalent to the two colors \texttt{INTERMEDIATE} and \texttt{END} of Section \ref{subsec:opt}).
    Theorem \ref{thm:gathering2robots} ensures the gathering of these two robots. 

    When $N=3$, the terminal robot $r$ with color \texttt{CORNER} finds $r_{Nbr_1}.color=$ \texttt{NON-CORNER}. In this case, $r$ moves perpendicularly away from $\overline{rr_{Nbr_1}}$ with color \texttt{EXPANDING-L}. 
    If the other terminal $r'$ remains idle at the time of successful movement of $r$, the configuration becomes non-linear, which has at most two layers. 
    By Lemma \ref{lemma:nonlinear_gather}, the gathering is complete by $O(1)$ epochs, as $\ell \leq 2$ and $f < N=3$. When $r$ and $r'$ move simultaneously with color \texttt{EXPANDING-L}, in the worst case, they might again become collinear with $r_{Nbr_1}$ either because of the non-agreement of coordinate axes or both becoming faulty. 
    In this case, the central robot $r_{Nbr_1}$ finds both of its neighbours as \texttt{EXPANDING-L} and changes its color to \texttt{GATHER}, indicating the position of $r_{Nbr_1}$ as the gathering point of the two terminal robots. 
    After the non-faulty terminal robots collocate with $r_{Nbr_1}$, they reach to the position of a faulty robot (if it exists) to terminate themselves.
    In case of $r$ becoming faulty and $r'$ idle at the same time, $r$ fails to move away from the line $\overleftrightarrow{r r_{Nbr_1}}$. Within $3$ epochs, $r$ updates its color \texttt{FAULT1} and $r_{Nbr_1}$ (colored \texttt{CORNER}) becomes a terminal robot. This essentially falls under Case (i) when $N=2$ and hence the robots gather in $O(1)$ epoch.

    When $N > 3$, a terminal \texttt{CORNER}-colored robot $r$ moves perpendicularly away from $\overleftrightarrow{r r_{Nbr_1}}$ and sets its color \texttt{EXPANDING-L}.
    If $r$ is a non-faulty robot, the movement of $r$ transforms the convex hull of all robots to non-linear, consisting of at most two layers.
    By Lemma \ref{lemma:nonlinear_gather}, the claim holds in this scenario.
    In case of $r$ being a faulty robot, $\mathcal{CH}$ might stay linear. 
    In such a situation, $r$ turns its color to \texttt{FAULT1} in $2$ epochs and $r_{Nbr_1}$ to \texttt{CORNER} in another epoch.
    The robot $r_{Nbr_1}$, being a terminal robot, does the same steps as $r$, and this process is continued till the configuration becomes non-linear or it is redirected to Case (i) or (ii).
    In the worst case, starting from the terminal robots, $\frac{f}{2}$ robots sequentially become \texttt{FAULT1}, leading us to $O(f)$ epochs to get a non-linear configuration and hence to gather. 
\end{proof}

\noindent We can state the following theorem from Lemmas \ref{lemma:nonlinear_gather} and \ref{lemma:linear_gather}.
\begin{restatable}{theo}{theoremLFgather}\label{thm:l-f-gather}
    The deterministic algorithm described above solves gathering in an $(N,f)$-mobility system, $f<N$, $f,N$ not known, under {\async} in the 26-$\mathcal{LUMI}$ model. The algorithm has runtime $O(\max\{\ell,f\})$ epochs, and works under obstructed visibility. 
\end{restatable}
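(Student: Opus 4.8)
The plan is to prove the theorem by a simple case split on the geometry of the initial configuration, invoking the two structural lemmas already established. First I would note that the algorithm's three cases (Case~1: at least one \texttt{INTERIOR}-colored robot; Case~2: only corner and boundary robots; Case~3: a linear configuration) exhaustively partition all initial configurations, and that Cases~1--2 are precisely what Lemma~\ref{lemma:nonlinear_gather} covers while Case~3 is Lemma~\ref{lemma:linear_gather}. For a non-linear start, Lemma~\ref{lemma:nonlinear_gather} directly yields both correctness and the $O(\max\{\ell,f\})$ bound, so nothing further is needed.

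For a linear start, the argument is two-stage. Lemma~\ref{lemma:linear_gather} guarantees that within $O(f)$ epochs the robots either gather outright (the $N=2$ sub-case, which reduces to Theorem~\ref{thm:gathering2robots}, or the $N=3$ collinear-after-move sub-case where the central robot becomes the \texttt{GATHER} point) or reach a non-linear configuration. The quantitative point to pin down here is that the non-linear configuration produced by the ``perpendicular escape'' of a terminal robot has at most two layers: the one relocated terminal robot becomes a corner of the new hull, and every other robot still lies on the original line, which now constitutes a single inner layer. Hence applying Lemma~\ref{lemma:nonlinear_gather} to this derived configuration costs $O(\max\{2,f\})=O(f)$ further epochs, and the two phases compose to $O(f)\subseteq O(\max\{\ell,f\})$. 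Combining the linear and non-linear cases gives the claimed runtime and correctness, and since no step of either lemma assumes transparency, obstructed visibility is handled throughout.

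The main obstacle I anticipate is not in this top-level composition but in ensuring that the lemmas it rests on are airtight under asynchrony and mid-execution mobility faults. The delicate points are: (i) in Lemma~\ref{lemma:interior-to-corners}, that an eligible \texttt{MOVE1} robot on $\mathcal{CH}^i$ always finds an unobstructed \texttt{CORNER} target even while sibling eligible robots are simultaneously in transit (the quadrilateral versus diagonal inequality together with Remark~\ref{rem:two-eligible-wait-free}), which is exactly what keeps each layer's merge at $O(1)$ epochs rather than scaling with the layer size, and hence the whole relocation at $O(\ell)$; (ii) in Lemmas~\ref{lemma:unique-gather-point-fault-finish} and~\ref{lemma:unique-gather-point-no-fault-finish}, that a robot which already reached the CG and then observes a still-idle \texttt{CORNER} robot can be safely rerouted to that corner as the canonical gathering point, so that each ``extra'' sequence of rounds triggered by a mid-move fault is charged to a distinct faulty robot and the total is only $O(f)$; and (iii) that the heavy reuse of color names across stages never lets a robot misidentify which stage it is in --- this is precisely why the \texttt{NEXT}/\texttt{NOT-NEXT}/\texttt{ELIGIBLE} handshake and the numerous ``if you see color $X$, revert to \texttt{CORNER}'' guards are in place. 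Granting those (as the preceding lemmas do), the theorem follows immediately from the case analysis above.
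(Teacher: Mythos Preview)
Your proposal is correct and follows essentially the same approach as the paper: the paper's proof is a one-line appeal to Lemmas~\ref{lemma:nonlinear_gather} and~\ref{lemma:linear_gather}, and your case split on linear versus non-linear initial configurations invokes exactly these two lemmas. The additional quantitative observation you make about the linear-to-nonlinear transition producing at most two layers (so that the subsequent $O(\max\{\ell,f\})$ bound collapses to $O(f)$) is already folded into the proof of Lemma~\ref{lemma:linear_gather} in the paper.
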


\section{Concluding Remarks}
\label{sec:conclusion}

In this paper, we have considered the fundamental benchmarking problem of gathering with {\async} opaque robots, that are prone to mobility faults. 
The obstructed visibility for the robots is heavily neglected in the literature in this context.
We have shown that gathering is impossible to solve in a $(2,1)$-mobility system in the 2-$\mathcal{LUMI}$ model and designed a (color optimal) deterministic solution in the 3-$\mathcal{LUMI}$ model. Previously, the impossibility for gathering was known only in the $\mathcal{OBLOT}$ model (for either no fault or Byzantine fault).  We then presented two solutions providing a nice time-color trade-off tolerating obstructed visibility: One $O(N)$-time in 7-$\mathcal{LUMI}$ and another $O(\max\{\ell,f\})$-time in  26-$\mathcal{LUMI}$. As a future work, it would be interesting to improve time and/or color complexities of our algorithms as well as establish  time complexity lower bounds.  Another direction would be to establish improved impossibility results under different fault and robot models.

\bibliography{bibliography}

\end{document}